\numberwithin{equation}{section}
\newtheorem{theorem}{Theorem}
\newtheorem{lemma}[theorem]{Lemma}
\newtheorem{prop}{Proposition}
\newtheorem{cor}{Corollary}
\newtheorem{ass}{Assumption}
\newtheorem{remark}{Remark}
\newtheorem{df}{Definition}
\newtheorem{eg}{Example}
\definecolor{red}{RGB}{168,53,49}
\definecolor{purple}{RGB}{57,49,168}
\definecolor{yellow}{RGB}{240,228,66}
\definecolor{green}{RGB}{0,158,115}
\definecolor{panna}{RGB}{244,239,201}
\newcommand{\E}{\mathbb{E}} 
\newcommand{\ind}{\mathbf{1}} 
\newcommand{\R}{\mathbb{R}} 
\newcommand{\Var}{\operatorname{Var}}
\DeclarePairedDelimiter{\abs}{\lvert}{\rvert}
\begin{document}
\relax
\hypersetup{pageanchor=false}
\hypersetup{pageanchor=true}

\title{\vspace{0cm}\textbf{Complementarities in Sparse Two-Sided Interaction Models\thanks{I am deeply grateful to Federico Bugni, Ivan Canay, and Chuck Manski for their guidance throughout this project. I thank Eric Auerbach, Stéphane Bonhomme, Devis Decet, Anastasiia Evdokimova, Bruno Fava, Danil Fedchenko, Joel Horowitz, Patrick Kline, Rasmus Lentz, Giacomo Marcolin, Amilcar Velez, Zhen Xie, and the participants of several seminars and conferences for their valuable comments and suggestions. This draft replaces a previously circulated version titled ``Identification, Estimation, and Inference in Two-Sided Interaction Models''. All errors are my own.}
}}

\author{
Federico Crippa\\
Department of Economics\\
University of California, Berkeley \\
\small \href{mailto:federico.crippa@berkeley.edu}{federico.crippa@berkeley.edu}
}

\date{ August 6, 2026.
}
\maketitle

\vspace{-1cm}
\thispagestyle{empty}

\begin{spacing}{1.1}
\begin{abstract}

This paper studies complementarities in two-sided interaction models when only a sparse subset of potential matches is realized. I introduce the Tukey model, adding one complementarity parameter to two-way fixed effects. Variation within a four-cycle in the observed matching network identifies this parameter, while connectedness permits identification of agent productivities. I propose a cycle-based estimator that avoids estimating latent productivities. The estimator is consistent and asymptotically normal under sparse-network asymptotics, yielding a formal test of no complementarities. An application to World Bank manager-country assignments finds negative complementarities, suggesting that more capable managers are especially valuable in more complex environments.

\end{abstract}
\end{spacing}

\medskip


\thispagestyle{empty}

\newpage
\hypersetup{pageanchor=true}
\setcounter{page}{1}
\begin{bibunit}

\section{Introduction}

In many economic settings, outcomes result from interactions between two distinct types of agents and depend on the latent characteristics each side brings to the match. For example, wages reflect both worker skills and firm attributes \citep{abowd1999high}; corporate performance depends on managerial ability together with company-specific features \citep{bertrand2003managing}; and the productivity of public offices reflects local conditions and the capacity of the bureaucrats in charge \citep{fenizia2022managers}. In such settings, an agent's contribution may depend not only on their own characteristics but also on those of the agent with whom they interact.

The two-way fixed effects (TWFE) model, widely used to study these settings, rules out this possibility. Its additively separable interaction function implies that the marginal contribution of each agent is independent of the characteristics of the other side of the match. The interaction function is therefore modular, and complementarities are excluded by assumption. Yet complementarities are central to many economic questions. Are high-productivity workers especially productive when matched with high-productivity firms? Do more capable managers add greater value in more challenging environments? Does the assignment of agents across matches affect aggregate output? Answering these questions requires a model that separates the contributions of the two sides from the additional value generated by their interaction.

To do so, this paper introduces a framework for modeling two-sided interactions. Any model in this framework has two components: (i) a matching network, where nodes represent agents and edges indicate which pairs are observed, and (ii) an interaction function, which maps the latent characteristics of matched agents into outcomes. The interaction function describes the model's economic structure, while the matching network summarizes the information available in the data. Separating these two components makes explicit the role of the observed matching network in identification. In settings where TWFE models are commonly applied, only a small fraction of all potential matches is observed, and each agent typically interacts with only a few agents on the other side. Which forms of complementarity can be recovered therefore depends on whether these observed matches form sufficiently informative network patterns.

Within this framework, I center the analysis on the Tukey model, named after John Tukey, who proposed an analogous functional form in the context of nonlinear ANOVA \citep{tukey1949one}. The model enriches the TWFE specification by assuming a constant cross-partial derivative of the interaction function, governed by a single scalar parameter. This interaction parameter fully characterizes the presence, direction, and strength of complementarities. Positive and negative values imply positive and negative complementarities, respectively, while a value of zero eliminates complementarities and reduces the model to TWFE. The Tukey model therefore provides a parsimonious but economically meaningful departure from additive separability.

The paper makes three main contributions. First, I characterize identification in the Tukey model. Variation within a four-cycle (a closed path with four edges and four distinct nodes) eliminates the additive components of the interaction function and identifies the interaction parameter. Once this parameter is identified, connectedness identifies the latent productivities. I then use two richer specifications to show that identification requirements become more demanding as the interaction function becomes more flexible. Allowing complementarities to vary across agents or otherwise relaxing restrictions on the interaction function requires network structures that are rarely observed in the sparse settings typical of empirical applications.

These results reveal a broader trade-off between the flexibility of the interaction function and the richness of the matching network. Weaker restrictions on the interaction function require more informative patterns of observed matches. The two sides of this trade-off differ in an important respect: the interaction function is unobserved and must be restricted by assumption, whereas the matching network is observed and its relevant properties can be checked directly. The matching network can therefore discipline the choice of interaction specification by determining which complementarity structures can be identified from the observed pattern of matches. From this perspective, the Tukey model occupies an important middle ground. It allows complementarities while retaining agent-specific latent heterogeneity and requires only a modest strengthening of the connectedness condition used for TWFE. By contrast, substantially richer forms of complementarity may require information that the observed network simply does not contain.

Second, I propose a cycle-based estimator of the Tukey interaction parameter and study its properties under large-graph asymptotics that allow node degrees to remain bounded. Because identification relies on variation within four-cycles, the estimator aggregates information across a collection of edge-disjoint four-cycles to isolate and consistently estimate the interaction parameter. Unlike procedures that estimate this parameter jointly with the latent productivities, the proposed estimator does not require estimating either productivity collection. A key requirement for consistency is a growing number of selected four-cycles. This condition can hold even when node degrees remain bounded; in employer–employee matched data, I show that four-cycles can be abundant. The estimator is asymptotically normal, with its variance depending on the outcome errors, productivity heterogeneity within four-cycles, and the orientation assigned to each cycle. Because arbitrary labels may fail to align the identifying variation, I propose a rank-based orientation rule that uses observable variables informative about the ordering of latent productivities.

Third, I develop a formal test of no complementarities. The test exploits the nesting of TWFE within the Tukey model: because TWFE corresponds to an interaction parameter equal to zero, testing this null provides a test of the modular TWFE restriction within the Tukey class. Empirical applications of TWFE often discuss the assumption of no complementarities and rely on informal diagnostics. Formal procedures for testing additive separability exist in complete two-way layouts \citep{tukey1949one}, but, to my knowledge, no formal test suited to the sparse incomplete matching networks has been developed. The proposed test focuses directly on the cycle variation that distinguishes modular from non-modular interactions in this setting and does not require estimating individual latent productivities, so its asymptotic properties can be studied under sequences that reflect common data structures, including settings in which agents have only a small number of observed links.

To illustrate the empirical value of the Tukey model, I revisit the application in \citet{limodio2021bureaucrat} on the assignment of public managers to tasks in the implementation of World Bank projects. Project success is modeled as the outcome of the interaction between the ability of the manager in charge and the characteristics of the country in which the project is implemented. Estimates from the TWFE model indicate negative sorting: high-performing managers are more likely to be assigned to low-performing countries. The Tukey model provides an additional insight. The interaction parameter is negative and statistically different from zero, indicating negative complementarities: high-performing managers have greater value added in low-productivity countries. This submodular interaction structure is consistent with more capable managers having greater value added in more complex environments and can therefore help rationalize the negative sorting in the assignment of managers to countries.

\subsection{Related Literature}

The class of models studied in this paper is related to a broad literature on two-sided interactions, much of which is reviewed in \citet{bonhomme2020econometric}. The main benchmark is the TWFE model, which has been widely applied to worker-firm interactions \citep{abowd1999high, card2013workplace, kline2024firm}, as well as to interactions between managers and firms \citep{bertrand2003managing}, teachers and students \citep{jackson2014teacher, chetty2014measuring, chetty2014measuring2}, patients and healthcare providers \citep{finkelstein2016sources}, and bureaucrats and geographic postings \citep{fenizia2022managers, limodio2021bureaucrat}. Across these applications, TWFE imposes a modular interaction function and therefore rules out complementarities. This may be restrictive in settings where the value generated by an interaction depends on the particular characteristics of both agents involved.

One approach to relaxing modularity is to introduce complementarities through grouped heterogeneity. \citet{bonhomme2019distributional} propose a model in which agents are partitioned into a finite number of groups, with all agents within a group sharing the same latent characteristics. \citet{lei2023estimating} adopt a similar grouping approach to estimate a low-rank factor model. These models allow richer interaction functions than TWFE and have generated valuable empirical insights \citep{weigel2024super, mourot2025surgeons}, but rely on grouped heterogeneity. The approach developed here instead introduces complementarities while retaining agent-specific latent heterogeneity.

Concerns about complementarities have also surfaced in many TWFE applications. A common diagnostic estimates a saturated specification and compares its $R^2$ with that of the additive model. The typically small improvement in fit has been interpreted as evidence that complementarities are quantitatively unimportant \citep{card2013workplace, song2019firming, fenizia2022managers, adhvaryu2024no}. \citet{kline2024firm}, however, show that such comparisons can be misleading and argue that complementarities are better detected using cycles in the matching network. Building on this insight, I use cycle variation to construct a formal test of the modularity restriction imposed by TWFE.

The interpretation of worker-firm specifications raises a distinct issue. Although worker-firm interactions are the canonical application of TWFE \citep{abowd1999high, card2013workplace, kline2024firm}, interpreting these models as structural wage equations has faced important critiques \citep{eeckhout2011identifying, hagedorn2017identifying, lopes2018firm, eeckhout2018sorting}. The framework studied here does not resolve these concerns. I treat the interaction function as a reduced-form object and study what can be learned about its complementarity structure from the observed matching network. For labor-market applications, Supplementary Material~\ref{sec:sm_microfoundation} provides a simple environment in which the Tukey equation arises from nonseparable production and wage setting, illustrating how the interaction parameter can be interpreted under additional economic assumptions.

This paper is also related to work on nonadditive interactions in balanced unit-time panels. \citet{tukey1949one} introduced a nonadditive interaction in a complete two-way layout, while more recent contributions use factor models or nonparametric interaction structures \citep{bai2009panel, freyberger2018non, freeman2023linear, sbaisassi2024dyadic, armstrong2025robust}. Two features distinguish the present setting. First, much of the panel literature uses flexible interactions to account for unobserved heterogeneity when estimating coefficients on observed regressors, whereas here the complementarity structure and the associated latent productivities are themselves objects of interest. Second, balanced panels observe all unit-time combinations, whereas the matching networks considered here contain only a subset of all possible matches. The observed pattern of matches therefore determines identification and inference in this setting, rather than being fixed by the design of the data.

Finally, the paper contributes to the literature on network econometrics. Like \citet{bramoulle2009identification}, \citet{graham2017econometric}, and \citet{de2018identifying}, the identification analysis imposes conditions on the observed graph. The cycle-based argument eliminates the agent-specific nuisance parameters and identifies the interaction parameter, paralleling the strategy used by \citet{jochmans2017two} for multiplicative models. For inference, I consider growing networks that may remain sparse and in which node degrees may remain bounded. Most existing inference results for sparse networks are developed under asymptotic sequences that allow node degrees to diverge \citep{jochmans2019fixed, cai2022linear}. Results that accommodate bounded-degree graphs, including \citet{verdier2020estimation} and \citet{auerbach2021local}, do not focus on parameters governing interactions. The analysis therefore extends inference on interaction parameters to network structures that resemble many datasets on two-sided interactions.

\subsection{Paper Structure and Notation}

The remainder of the paper is organized as follows. Section~\ref{sec:model} introduces the Bipartite Interaction framework and the Tukey model. Section~\ref{sec:identification} studies identification, while Section~\ref{sec:estimation_inference} develops estimation and inference for the interaction parameter and discusses the estimation of latent productivities. Section~\ref{sec:empirical_illustration} presents an empirical illustration of how the Tukey model can be used in practice and what it reveals about the interaction function. Section~\ref{sec:conclusion} concludes. The Appendix contains proofs of the main text results and material closely related to the main arguments. The Supplementary Material develops the richer interaction specifications used as identification benchmarks and collects auxiliary results.

Throughout the paper, lowercase symbols denote parameters or quantities associated with an individual agent. These objects are scalar unless otherwise stated; for example, $\alpha_i$ denotes the latent productivity of worker $i$, but may be a vector when productivity is multidimensional. Uppercase symbols denote sets; for example, $\alpha_i \in A$, where $A$ is typically a compact metric space. Bold symbols denote collections: for example, $\boldsymbol{\alpha} = (\alpha_1, \dots, \alpha_I)$ is the collection of worker productivities.

\section{BI Framework and Tukey Model} \label{sec:model}

In this section, I introduce the Bipartite Interaction (BI) framework and the Tukey model and briefly describe the richer interaction specifications used as identification benchmarks. I use the labor market as a running example, with workers and firms constituting the two sides of the interaction. The framework applies more generally to any two-sided interaction setting.

\subsection{Bipartite Interaction Framework}

Let $I \in \mathbb{N}$ and $J \in \mathbb{N}$ denote the numbers of workers and firms, respectively, and define $[I] \coloneqq \{1,\dots,I\}$ and $[J] \coloneqq \{1,\dots,J\}$. Each worker $i \in [I]$ has a fixed latent productivity $\alpha_i \in A$, and each firm $j \in [J]$ has a fixed latent productivity $\psi_j \in \Psi$, where $A$ and $\Psi$ are compact subsets of $\R^d$. Let $\boldsymbol{\alpha} \coloneqq (\alpha_1,\dots,\alpha_I)$ and $\boldsymbol{\psi} \coloneqq (\psi_1,\dots,\psi_J)$ denote the corresponding productivity collections.

The potential outcome of the interaction between worker $i$ and firm $j$, such as the wage worker $i$ would receive if employed by firm $j$, is
\begin{gather*}
    y_{ij} = \underbrace{f(\alpha_i,\psi_j)}_{\theta_{ij}} + \eta_{ij},
\end{gather*}
where $f \colon A \times \Psi \to \R$ is the interaction function and $\eta_{ij}$ is a mean-zero random term. Define $\theta_{ij} \coloneqq f(\alpha_i,\psi_j) = \E[y_{ij}]$ as the deterministic component of the outcome. To focus on $f$, I omit covariates. When repeated observations of the same pair are available, coefficients on covariates that vary within pairs can in principle be estimated using pair fixed effects, after which the analysis can be applied to covariate-adjusted outcomes; I leave a formal treatment of this extension for future work.

The cross-partial derivative of $f$ characterizes its modularity and complementarity properties. Formal definitions and a general discussion of modularity are provided by \citet{Topkis1998}. When $\alpha_i$ and $\psi_j$ are scalars and $f$ is twice differentiable, the interaction function is \emph{modular} if its cross-partial derivative is zero, \emph{supermodular} if it is nonnegative, and \emph{submodular} if it is nonpositive. I say that the interaction function exhibits \emph{complementarities} when the cross-partial derivative is nonzero, with \emph{positive complementarities} when it is positive and \emph{negative complementarities} when it is negative.

The potential outcome $y_{ij}$ is defined for every worker-firm pair and represents the outcome that would be realized if match $(i,j)$ occurred, whether or not it is observed in the data. In practice, only a subset of matches is realized, and $y_{ij}$ is observed only for those pairs. Define
\begin{gather*}
    D_{ij}
    =
    \begin{cases}
        1, & \text{if } y_{ij} \text{ is observed}, \\
        0, & \text{otherwise},
    \end{cases}
\end{gather*}
as the indicator of an observed match, treated as fixed in the analysis, and let $\mathcal{O}_{IJ} = \{(i,j) : D_{ij} = 1\}$ denote the set of observed matches.

Let $G_{IJ} = ([I],[J],\mathcal{O}_{IJ})$ be the bipartite network that links worker $i$ to firm $j$ whenever $D_{ij} = 1$. The network is bipartite because its nodes are partitioned into two disjoint sets, workers and firms, with no edges within either set.

\begin{df}
    {\normalfont (Matching Network)}
    The matching network $G_{IJ}$ is the bipartite graph with node sets $[I]$ and $[J]$ and edge set $\mathcal{O}_{IJ} = \{(i,j) : D_{ij} = 1\}$. An edge $(i,j)$ indicates that $y_{ij}$ is observed.
\end{df}

Figure~\ref{fig:bipartite_graph} shows an example with 5 workers (purple), 3 firms (green), and 6 edges, each representing an observed match.

\begin{figure}[ht]
    \centering
    \begin{tikzpicture}[
        scale=0.62,
        >=stealth,
        font=\small,
        pointI/.style={circle, fill=purple, draw=black, inner sep=2pt},
        pointJ/.style={circle, fill=green, draw=black, inner sep=2pt}
    ]
        \foreach \y/\lab in {1.5/1, 0.75/2, 0/3, -0.75/4, -1.5/5} {
            \node[pointI] (I\lab) at (0,\y) {};
        }
        \foreach \y/\lab in {0.75/1, 0/2, -1.5/3} {
            \node[pointJ] (J\lab) at (4,\y) {};
        }

        \node[left=4pt of I1] {$i_1$};
        \node[left=4pt of I2] {$i_2$};
        \node[left=4pt of I3] {$i_3$};
        \node[left=4pt of I4] {$i_4$};
        \node[left=4pt of I5] {$i_5$};

        \node[right=4pt of J1] {$j_1$};
        \node[right=4pt of J2] {$j_2$};
        \node[right=4pt of J3] {$j_3$};

        \draw (I1) -- (J1);
        \draw (I2) -- (J1);
        \draw (I2) -- (J2);
        \draw (I3) -- (J2);
        \draw (I4) -- (J2);
        \draw (I5) -- (J3);
    \end{tikzpicture}

    \caption{\small Matching network with workers in purple and firms in green. In this example, $I = 5$, $J = 3$, and the set of observed matches is $\mathcal{O}_{IJ} = \{(i_1,j_1), (i_2,j_1), (i_2,j_2), (i_3,j_2), (i_4,j_2), (i_5,j_3)\}$.}
    \label{fig:bipartite_graph}
\end{figure}

The fact that worker $i_2$, for example, is linked to two firms, $j_1$ and $j_2$, does not imply that the matches occur simultaneously. The matching network is constructed over a chosen time window, which may span several years, and each edge may correspond to a different period: $i_2$ may be employed by $j_1$ in one year and by $j_2$ in another. The BI framework is static and abstracts from the timing of moves, so the order of these matches is irrelevant.

The matching network is deterministic: the BI framework does not model its formation, and the structure of $G_{IJ}$ is taken as given. The only sources of randomness are the terms $\{\eta_{ij}\}_{(i,j) \in \mathcal{O}_{IJ}}$, one for each observed outcome. These terms are mutually independent and mean-zero and may have pair-specific distributions, allowing for heteroskedasticity. Because $G_{IJ}$ is fixed, the observation of match $(i,j)$ does not depend on $\eta_{ij}$, so the random term is exogenous with respect to the matching network.

For any realized match ($D_{ij} = 1$), the model considers a single outcome $y_{ij}$. If repeated observations of the same worker-firm pair are available, they can be averaged to obtain a single $y_{ij}$, leaving the analysis unchanged.

\paragraph{Objects of interest and scope.}
Given this observational structure, the primitive parameters to be recovered are the interaction function $f$ and the productivity collections $\boldsymbol{\alpha}$ and $\boldsymbol{\psi}$. The main object of interest is the complementarity structure of $f$: whether an agent's contribution depends on the latent productivity of the agent on the other side of the match and, if so, the direction and strength of this dependence.

The productivity collections are also economically relevant. They describe heterogeneity within each side of the interaction and can be used to study sorting, variance decompositions, counterfactual assignments, and other derived quantities. A comprehensive analysis of these quantities is beyond the scope of the paper: I focus on identifying the primitives $f$, $\boldsymbol{\alpha}$, and $\boldsymbol{\psi}$ under alternative restrictions on the interaction function and the matching network.

\paragraph{Interaction specifications.}
What can be learned about these primitives depends jointly on the restrictions imposed on $f$ and the structure of the observed matching network. Different restrictions on $f$ define different models within the BI framework and determine which features of the complementarity structure can be identified from a given network. The main focus of the paper is the Tukey model, introduced in the next section. Supplementary Material~\ref{sec:sm_extensions} considers two richer specifications as benchmarks for assessing how identification requirements change as the interaction function becomes more flexible.

\subsection{Tukey Model}

The cross-partial derivative of $f$ captures the complementarity structure of the interaction function. A parsimonious way to model complementarities is to assume that this derivative is constant and summarized by a single parameter. This yields the specification
\begin{gather} \label{eq:tukey}
    \theta_{ij} = \alpha_i + \psi_j + \beta_0 \alpha_i \psi_j, \tag{Tukey model}
\end{gather}
with scalar $\alpha_i \in A \subset \R$, $\psi_j \in \Psi \subset \R$, and $\beta_0 \in B \subset \R$, where $A$, $\Psi$, and $B$ are compact. I refer to this specification as the Tukey model, after the statistician John Tukey, who studied an analogous functional form in two-way ANOVA to test whether two categorical factors affect the response additively \citep{tukey1949one, ward1952non, vsimevcek2013modification}. That literature focuses on testing the null hypothesis $\beta_0 = 0$ under homoskedastic, normally distributed errors $\eta_{ij}$ and a complete matching network $G_{IJ}$, rather than on estimating the parameter itself.

In the BI framework, the interaction parameter $\beta_0$ has a direct economic meaning. It equals the constant cross-partial derivative $\partial^2 f / \partial \alpha \partial \psi$; it captures all departures from modularity and governs the complementarity pattern between the two productivities.

Although the Tukey model summarizes the entire complementarity structure with a single parameter, it accommodates supermodular ($\beta_0 \geq 0$), submodular ($\beta_0 \leq 0$), and modular ($\beta_0 = 0$) interaction functions. The sign of $\beta_0$ determines the direction of complementarities, while its magnitude determines their strength by governing the importance of the multiplicative component relative to the additive components.

In a labor-market interpretation, the Tukey model can also arise as a wage equation. Supplementary Material~\ref{sec:sm_microfoundation} provides a simple microfoundation in which, under linear utility, nonseparable task production, and full worker bargaining power, the deterministic component of the hiring wage takes the Tukey form. This interpretation is not required for the econometric results below, which apply to any two-sided setting satisfying the BI specification.

When $\beta_0 = 0$, the Tukey model reduces to the widely used TWFE specification, in its baseline form without covariates:
\begin{gather} \label{eq:TWFE}
    \theta_{ij} = \alpha_i + \psi_j. \tag{TWFE model}
\end{gather}
In the TWFE model, the cross-partial derivative $\partial^2 f / \partial \alpha \partial \psi$ is zero, so the interaction function is modular and complementarities are ruled out by assumption. A worker's marginal contribution is independent of the firm with which they are matched, and a firm's marginal contribution is likewise independent of the worker. Under modularity, total output for a fixed set of agents on each side is invariant to their assignment: any allocation is efficient.

While the assumption of no complementarities is restrictive, especially given the emphasis in economic theory on complementarities as a driver of sorting patterns such as positive assortative matching \citep{becker1973theory, shimer2000assortative}, its appropriateness depends on the empirical setting and the research question. In practice, the TWFE model is often viewed less as a literal description of interactions and more as a tractable approximation to richer structures \citep{abowd1999high, card2013workplace}. The nesting of TWFE within the Tukey model makes it possible to assess when this approximation is likely to be informative and, conversely, when ignoring complementarities may lead to misleading conclusions; see Section~\ref{sec:twfe_sorting_bias} for details.

\subsection{Richer Interaction Functions} \label{sec:extension}

The Tukey model introduces complementarities in a simple and interpretable way. To capture richer complementarity patterns, Supplementary Material~\ref{sec:sm_extensions} considers two additional specifications. The \emph{heterogeneous-slope} model allows the Tukey interaction parameter to vary across firms, accommodating heterogeneity in complementarities across one side of the match. The \emph{isotonic} model requires the interaction function to be strictly increasing in each argument but otherwise leaves its complementarity structure unrestricted.

These specifications serve as identification benchmarks. Comparing them with the Tukey model shows that the requirements on the matching network become stronger as restrictions on the interaction function are relaxed. The comparison also clarifies the role of the constant cross-partial assumption: it allows economically meaningful complementarities to be identified under substantially weaker graph conditions than richer interaction structures require. With this trade-off in view, I now turn to identification in the Tukey model.

\section{Identification} \label{sec:identification}

This section derives the identification conditions for the Tukey model. Section~\ref{sec:definition_identification} defines the notion of identification used in the analysis. Section~\ref{sec:tukey_identification} presents the main results, while Section~\ref{sec:twfe_sorting_bias} studies how the TWFE estimands approximate the productivity collections $\boldsymbol{\alpha}$ and $\boldsymbol{\psi}$ when the interaction function follows the Tukey model.

\subsection{Identification in the BI Framework} \label{sec:definition_identification}

I adopt the notion of point identification in \citet{koopmans1949identification}, which requires the model parameters to be uniquely recoverable from the distribution of observables. In the BI framework, the observables are the collection $\{y_{ij}\}_{(i,j) \in \mathcal{O}_{IJ}}$. For identification purposes, I can therefore treat $\E[y_{ij}] = \theta_{ij}$ as known for each observed match.

Formally, I study identification through the noiseless map
\begin{gather*}
    (f, \boldsymbol{\alpha}, \boldsymbol{\psi}, G_{IJ})
    \mapsto
    \boldsymbol{\theta}_{\mathcal{O}}
    \coloneqq
    \{\theta_{ij}\}_{(i,j) \in \mathcal{O}_{IJ}},
\end{gather*}
where $(f, \boldsymbol{\alpha}, \boldsymbol{\psi})$ are unknown and the graph $G_{IJ}$ is known. The parameter tuple $(f, \boldsymbol{\alpha}, \boldsymbol{\psi})$ is \emph{point-identified} if, for a given $G_{IJ}$, the map
\begin{gather*}
    (f, \boldsymbol{\alpha}, \boldsymbol{\psi})
    \mapsto
    \boldsymbol{\theta}_{\mathcal{O}}
\end{gather*}
is injective. That is, if $(f, \boldsymbol{\alpha}, \boldsymbol{\psi})$ and $(f', \boldsymbol{\alpha}', \boldsymbol{\psi}')$ generate the same $\boldsymbol{\theta}_{\mathcal{O}}$, then they must coincide.

Equivalently, a parameter is identified if it can be expressed as a function of $\boldsymbol{\theta}_{\mathcal{O}}$. I use this characterization to establish identification in the proofs.

This definition deliberately abstracts from sampling noise and treats the distribution of each $y_{ij}$ as known, even though in applications each match is typically observed only once. While such a notion does not distinguish between parameters that can or cannot be consistently estimated in the presence of error terms $\eta_{ij}$, it provides a fundamental benchmark: if a parameter is not identified in the noiseless model, then no estimator can recover it. Conversely, whenever a parameter is identified in this sense, it warrants further analysis to determine whether, and under what conditions, consistent estimation is feasible.

A location normalization is often required because the absolute levels of $\boldsymbol{\alpha}$ and $\boldsymbol{\psi}$ are not uniquely determined by the observables. In the TWFE model, for example, adding a constant to every $\alpha_i$ and subtracting it from every $\psi_j$ leaves $\boldsymbol{\theta}_{\mathcal{O}}$ unchanged. Related invariances arise in richer specifications with complementarities, where transformations of $\boldsymbol{\alpha}$ can be offset by corresponding changes in $\boldsymbol{\psi}$ and $f$. To fix the reference level and ensure point identification, I impose a normalization such as $\sum_i \alpha_i = 0$ or $\alpha_1 = 0$. Whenever a normalization is required, I state it explicitly and use the form that yields the clearest expressions.

\subsection{Identification in the Tukey Model} \label{sec:tukey_identification}

The Tukey model introduces complementarities through a single parameter, $\beta_0$, which equals the constant cross-partial derivative and fully characterizes the complementarity structure of $f$. I present the identification results in two steps: first, the identification of $\beta_0$; second, the identification of the productivity collections $\boldsymbol{\alpha}$ and $\boldsymbol{\psi}$.

\subsubsection{Identification of \texorpdfstring{$\beta_0$}{beta0}}

Identification of $\beta_0$ requires additional structure in the matching network $G_{IJ}$. I begin by recalling the definition of a cycle in a bipartite graph.

\begin{df}
    {\normalfont (Cycle in the Matching Network)}
    In the bipartite graph $G_{IJ} = ([I],[J],\mathcal{O}_{IJ})$, a cycle of length $2K$, with $K \geq 2$, is a closed alternating sequence of workers and firms,
    \begin{gather*}
        i_1, j_1, i_2, j_2, \dots, i_K, j_K, i_1,
    \end{gather*}
    such that $(i_k,j_k)$ and $(i_{k+1},j_k)$ belong to $\mathcal{O}_{IJ}$ for each $k = 1,\dots,K$, with $i_{K+1} = i_1$, and all workers $i_1,\dots,i_K$ and firms $j_1,\dots,j_K$ are distinct.
\end{df}

The graph in Figure~\ref{fig:bipartite_graph_panel_a} contains no cycles. By contrast, adding an edge between $i_1$ and $j_2$, as in Figure~\ref{fig:bipartite_graph_panel_b}, creates the closed path $i_1, j_1, i_2, j_2, i_1$, which forms a four-cycle (a cycle of length 4).

\begin{figure}[ht]
    \centering
    \captionsetup[subfigure]{justification=centering,skip=0pt}

    \begin{subfigure}[t]{0.48\textwidth}
        \centering
        \begin{tikzpicture}[
            scale=0.62,
            >=stealth,
            font=\small,
            pointI/.style={circle, fill=purple, draw=black, inner sep=2pt},
            pointJ/.style={circle, fill=green, draw=black, inner sep=2pt}
        ]
            \foreach \y/\lab in {1.5/1, 0.75/2, 0/3, -0.75/4, -1.5/5} {
                \node[pointI] (I\lab) at (0,\y) {};
            }
            \foreach \y/\lab in {0.75/1, 0/2, -1.5/3} {
                \node[pointJ] (J\lab) at (4,\y) {};
            }

            \node[left=4pt of I1] {$i_1$};
            \node[left=4pt of I2] {$i_2$};
            \node[left=4pt of I3] {$i_3$};
            \node[left=4pt of I4] {$i_4$};
            \node[left=4pt of I5] {$i_5$};

            \node[right=4pt of J1] {$j_1$};
            \node[right=4pt of J2] {$j_2$};
            \node[right=4pt of J3] {$j_3$};

            \draw (I1) -- (J1);
            \draw (I2) -- (J1);
            \draw (I2) -- (J2);
            \draw (I3) -- (J2);
            \draw (I4) -- (J2);
            \draw (I5) -- (J3);
        \end{tikzpicture}
        \phantomsubcaption
        \label{fig:bipartite_graph_panel_a}
    \end{subfigure}
    \hfill
    \begin{subfigure}[t]{0.48\textwidth}
        \centering
        \begin{tikzpicture}[
            scale=0.62,
            >=stealth,
            font=\small,
            pointI/.style={circle, fill=purple, draw=black, inner sep=2pt},
            pointJ/.style={circle, fill=green, draw=black, inner sep=2pt}
        ]
            \foreach \y/\lab in {1.5/1, 0.75/2, 0/3, -0.75/4, -1.5/5} {
                \node[pointI] (I\lab) at (0,\y) {};
            }
            \foreach \y/\lab in {0.75/1, 0/2, -1.5/3} {
                \node[pointJ] (J\lab) at (4,\y) {};
            }

            \node[left=4pt of I1] {$i_1$};
            \node[left=4pt of I2] {$i_2$};
            \node[left=4pt of I3] {$i_3$};
            \node[left=4pt of I4] {$i_4$};
            \node[left=4pt of I5] {$i_5$};

            \node[right=4pt of J1] {$j_1$};
            \node[right=4pt of J2] {$j_2$};
            \node[right=4pt of J3] {$j_3$};

            \draw (I1) -- (J1);
            \draw[red, very thick] (I1) -- (J2);
            \draw (I2) -- (J1);
            \draw (I2) -- (J2);
            \draw (I3) -- (J2);
            \draw (I4) -- (J2);
            \draw (I5) -- (J3);
        \end{tikzpicture}
        \phantomsubcaption
        \label{fig:bipartite_graph_panel_b}
    \end{subfigure}

    \caption{\small Two matching networks: (a) a network without cycles and (b) a network containing the four-cycle $i_1 \to j_1 \to i_2 \to j_2 \to i_1$.}
    \label{fig:bipartite_graph_cycle}
\end{figure}

The key condition for identification of $\beta_0$ is stated in Assumption~\ref{ass:informative_cycle}.

\begin{ass} \label{ass:informative_cycle}
    {\normalfont (Informative Four-Cycle)}
    The matching network $G_{IJ}$ contains a four-cycle $i_1,j_1,i_2,j_2$ such that $\alpha_{i_1} \neq \alpha_{i_2}$ and $\psi_{j_1} \neq \psi_{j_2}$.
\end{ass}

Assumption~\ref{ass:informative_cycle} requires a four-cycle with productivity heterogeneity on both sides of the interaction. Without such heterogeneity, the cycle contrasts that isolate $\beta_0$ vanish, so the cycle contains no information about the interaction parameter.

\begin{theorem} \label{thm:beta_identification}
    {\normalfont (Identification of $\beta_0$)}
    Under the \ref{eq:tukey}, Assumption~\ref{ass:informative_cycle} is sufficient for identification of $\beta_0$. If the matching network contains at most one cycle, the assumption is also necessary.
\end{theorem}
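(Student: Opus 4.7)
My plan is to handle sufficiency and necessity separately, using the multiplicative factorization hidden in the Tukey form as the main lever.

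For \textbf{sufficiency}, I would start from the informative 4-cycle $(i_1,j_1,i_2,j_2)$ guaranteed by Assumption~\ref{ass:informative_cycle} and form the observable double difference
\begin{equation*}
\Delta \coloneqq \theta_{i_1 j_1} - \theta_{i_2 j_1} - \theta_{i_1 j_2} + \theta_{i_2 j_2}.
\end{equation*}
Substituting the Tukey form cancels the additive components and yields $\Delta = \beta_0(\alpha_{i_1}-\alpha_{i_2})(\psi_{j_1}-\psi_{j_2})$, so under Assumption~\ref{ass:informative_cycle}, $\Delta = 0$ if and only if $\beta_0 = 0$, which identifies $\beta_0$ in the modular case. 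For $\Delta \neq 0$, I exploit the identity
\begin{equation*}
1 + \beta_0 \theta_{ij} = (1+\beta_0 \alpha_i)(1+\beta_0 \psi_j),
\end{equation*}
which rewrites the Tukey model multiplicatively and yields the cycle relation $(1+\beta_0\theta_{i_1 j_1})(1+\beta_0\theta_{i_2 j_2}) = (1+\beta_0\theta_{i_2 j_1})(1+\beta_0\theta_{i_1 j_2})$. Expanding reduces this to $\beta_0 \Delta + \beta_0^2 D = 0$ with $D \coloneqq \theta_{i_1 j_1}\theta_{i_2 j_2} - \theta_{i_2 j_1}\theta_{i_1 j_2}$. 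A direct computation using the factorization shows $D = -(\alpha_{i_1}-\alpha_{i_2})(\psi_{j_1}-\psi_{j_2}) \neq 0$ under Assumption~\ref{ass:informative_cycle}, so dividing by $\beta_0$ delivers $\beta_0 = -\Delta/D$ as a function of observables.

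For \textbf{necessity}, assuming the graph has at most one cycle and that Assumption~\ref{ass:informative_cycle} fails, I would construct a one-parameter family of alternative parameterizations $(\beta_0',\boldsymbol{\alpha}',\boldsymbol{\psi}')$ reproducing $\boldsymbol{\theta}_{\mathcal{O}}$ with $\beta_0' \neq \beta_0$. If the graph has no cycles it is a forest; in each tree I fix a root value, pick $\beta_0'$ in a small neighborhood of $\beta_0$, and propagate along the tree via $\alpha_i' = (\theta_{ij} - \psi_j')/(1+\beta_0'\psi_j')$ and $\psi_j' = (\theta_{ij} - \alpha_i')/(1+\beta_0'\alpha_i')$. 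These updates are well-defined in an open interval around $\beta_0$ because $1+\beta_0\alpha_i$ and $1+\beta_0\psi_j$ are bounded away from zero, producing a continuous family of equivalent parameterizations and hence non-identification. When exactly one cycle is present and violates Assumption~\ref{ass:informative_cycle}, say $\alpha_{i_1}=\alpha_{i_2}$, the four cycle outcomes collapse in pairs ($\theta_{i_1 j_k}=\theta_{i_2 j_k}$ for $k=1,2$), so the cycle's closing edge imposes no equation beyond its spanning tree and the same construction applies.

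The main obstacle will be the necessity step. I need to verify that the tree propagation remains well defined globally along each connected component, which follows from openness of the admissible set for $\beta_0'$, and that the single uninformative cycle truly contributes no extra constraint on $\beta_0$. For the second point, I will confirm that $\Delta = 0$ holds identically whenever Assumption~\ref{ass:informative_cycle} fails, so the multiplicative cycle relation degenerates and the system of match equations decouples into a tree-like system that leaves $\beta_0$ free.
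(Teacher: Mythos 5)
Your sufficiency argument is correct and is essentially the paper's: the factorization $1+\beta_0\theta_{ij}=(1+\beta_0\alpha_i)(1+\beta_0\psi_j)$ applied around the informative 4-cycle gives $\beta_0\Delta+\beta_0^2 D=0$, and since $D=-(\alpha_{i_1}-\alpha_{i_2})(\psi_{j_1}-\psi_{j_2})\neq 0$ under Assumption~\ref{ass:informative_cycle}, the closed form $\beta_0=-\Delta/D$ expresses the parameter in terms of observables (your separate treatment of the modular case is harmless but unnecessary, since the formula already returns $0$ when $\Delta=0$). The forest part of your necessity argument likewise mirrors the paper's tree-propagation construction.

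The gap is in necessity when the network contains exactly one cycle. Assumption~\ref{ass:informative_cycle} can fail in three ways: no cycle at all; a unique 4-cycle that is uninformative; or a unique cycle of length $2K>4$. You address only the first two. For the third, your closing claim --- that $\Delta=0$ ``holds identically whenever Assumption~\ref{ass:informative_cycle} fails, so the multiplicative cycle relation degenerates'' --- is false: traversing a $2K$-cycle with the same factorization yields $\prod_{k=1}^{K}\bigl(1+\beta\,\theta^{(2k-1)}\bigr)=\prod_{k=1}^{K}\bigl(1+\beta\,\theta^{(2k)}\bigr)$, which after removing the trivial root $\beta=0$ is a degree-$(K-1)$ polynomial whose coefficients are generically nonzero when productivities along the cycle are heterogeneous. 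The cycle therefore does impose a constraint beyond its spanning tree; identification fails not because the constraint vanishes but because a degree-$(K-1)$ polynomial restricts $\beta_0$ only to a set of up to $K-1$ roots, each of which can be completed to productivities reproducing $\boldsymbol{\theta}_{\mathcal{O}}$ by propagating around the cycle and then out along the attached trees. This is precisely the step in the paper's proof that forces $K=2$, and it requires its own argument rather than the degeneration you propose. (A smaller point: your assertion that $1+\beta_0\alpha_i$ and $1+\beta_0\psi_j$ are bounded away from zero is not implied by the model; the tree propagation only needs the perturbed $\beta'$ to avoid finitely many bad values, which is how the paper phrases it.)
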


Identification of $\beta_0$ requires a cycle in the matching network. When the network contains a unique cycle, point identification requires that cycle to be of length four. The proof in Appendix~\ref{proof:beta_identification} shows that the outcomes along a cycle of length $2K$ generate a degree-$(K-1)$ polynomial in $\beta_0$, whose coefficients are known functions of $\boldsymbol{\theta}_{\mathcal{O}}$. The identification set consists of the roots of this polynomial. When $K = 2$, the polynomial is linear, and an informative four-cycle uniquely identifies $\beta_0$. When $K > 2$, a single cycle yields an identification set with at most $K-1$ elements. Multiple longer cycles can nevertheless achieve point identification if their identification sets intersect at a single value.

The role of cycles in detecting departures from modularity was previously noted by \citet{card2013workplace} and discussed by \citet{kline2024firm}. In those studies, cycles serve as a diagnostic device; here, cycle variation identifies the interaction parameter.

In the labor-market setting, Assumption~\ref{ass:informative_cycle} requires worker mobility across firms. Because each worker can be matched with only one firm at a time, the multiple links needed to form a cycle arise when workers change employers. The condition therefore requires at least two workers to move between the same pair of firms, with productivity heterogeneity among both the workers and the firms involved. This is not especially restrictive: labor markets are typically segmented into local or sectoral clusters, and when one worker moves between two firms, the likelihood of additional movers between the same firms is higher than under random matching. Evidence supports this: in the application of \cite{kline2024firm}, for instance, about 55\% of firms belong to at least one cycle.

\subsubsection{Identification of \texorpdfstring{$\boldsymbol{\alpha}$}{alpha} and \texorpdfstring{$\boldsymbol{\psi}$}{psi}}

Once $\beta_0$ is identified, the productivity collections $\boldsymbol{\alpha}$ and $\boldsymbol{\psi}$ can be identified under an additional condition on $G_{IJ}$.

\begin{ass} \label{ass:connectedness}
    {\normalfont (Connectedness)}
    The matching network $G_{IJ}$ is connected: for any two nodes in $G_{IJ}$, there exists a path (i.e., a sequence of nodes linked by edges) joining them.
\end{ass}

The graph in Figure~\ref{fig:bipartite_graph_connected_panel_a} violates Assumption~\ref{ass:connectedness}; for example, no path connects nodes $i_4$ and $i_5$. Adding an edge between $i_4$ and $j_3$, as in Figure~\ref{fig:bipartite_graph_connected_panel_b}, makes the graph connected.

\begin{figure}[ht]
    \centering
    \captionsetup[subfigure]{justification=centering,skip=0pt}

    \begin{subfigure}[t]{0.48\textwidth}
        \centering
        \begin{tikzpicture}[
            scale=0.62,
            >=stealth,
            font=\small,
            pointI/.style={circle, fill=purple, draw=black, inner sep=2pt},
            pointJ/.style={circle, fill=green, draw=black, inner sep=2pt}
        ]
            \foreach \y/\lab in {1.5/1, 0.75/2, 0/3, -0.75/4, -1.5/5} {
                \node[pointI] (I\lab) at (0,\y) {};
            }
            \foreach \y/\lab in {0.75/1, 0/2, -1.5/3} {
                \node[pointJ] (J\lab) at (4,\y) {};
            }

            \node[left=4pt of I1] {$i_1$};
            \node[left=4pt of I2] {$i_2$};
            \node[left=4pt of I3] {$i_3$};
            \node[left=4pt of I4] {$i_4$};
            \node[left=4pt of I5] {$i_5$};

            \node[right=4pt of J1] {$j_1$};
            \node[right=4pt of J2] {$j_2$};
            \node[right=4pt of J3] {$j_3$};

            \draw (I1) -- (J1);
            \draw (I1) -- (J2);
            \draw (I2) -- (J1);
            \draw (I2) -- (J2);
            \draw (I3) -- (J2);
            \draw (I4) -- (J2);
            \draw (I5) -- (J3);
        \end{tikzpicture}
        \phantomsubcaption
        \label{fig:bipartite_graph_connected_panel_a}
    \end{subfigure}
    \hfill
    \begin{subfigure}[t]{0.48\textwidth}
        \centering
        \begin{tikzpicture}[
            scale=0.62,
            >=stealth,
            font=\small,
            pointI/.style={circle, fill=purple, draw=black, inner sep=2pt},
            pointJ/.style={circle, fill=green, draw=black, inner sep=2pt}
        ]
            \foreach \y/\lab in {1.5/1, 0.75/2, 0/3, -0.75/4, -1.5/5} {
                \node[pointI] (I\lab) at (0,\y) {};
            }
            \foreach \y/\lab in {0.75/1, 0/2, -1.5/3} {
                \node[pointJ] (J\lab) at (4,\y) {};
            }

            \node[left=4pt of I1] {$i_1$};
            \node[left=4pt of I2] {$i_2$};
            \node[left=4pt of I3] {$i_3$};
            \node[left=4pt of I4] {$i_4$};
            \node[left=4pt of I5] {$i_5$};

            \node[right=4pt of J1] {$j_1$};
            \node[right=4pt of J2] {$j_2$};
            \node[right=4pt of J3] {$j_3$};

            \draw (I1) -- (J1);
            \draw (I1) -- (J2);
            \draw (I2) -- (J1);
            \draw (I2) -- (J2);
            \draw (I3) -- (J2);
            \draw (I4) -- (J2);
            \draw (I5) -- (J3);
            \draw[red, very thick] (I4) -- (J3);
        \end{tikzpicture}
        \phantomsubcaption
        \label{fig:bipartite_graph_connected_panel_b}
    \end{subfigure}

    \caption{\small Two matching networks: (a) a disconnected network and (b) a connected network satisfying Assumption~\ref{ass:connectedness}.}
    \label{fig:bipartite_graph_connected}
\end{figure}

The identification result for the productivity collections $\boldsymbol{\alpha}$ and $\boldsymbol{\psi}$ is stated below. The result excludes the degenerate parameter configurations in which $1 + \beta_0 \alpha_i = 0$ for some worker $i$ or $1 + \beta_0 \psi_j = 0$ for some firm $j$.

\begin{theorem} \label{thm:tukey_identification}
    {\normalfont (Identification of $\boldsymbol{\alpha}$ and $\boldsymbol{\psi}$ in the Tukey Model)}
    Suppose that the \ref{eq:tukey} holds, $\beta_0$ is identified, and the normalization $\alpha_{i_0} = 0$ is imposed. At nondegenerate parameter values, Assumption~\ref{ass:connectedness} is necessary and sufficient for identification of $\boldsymbol{\alpha}$ and $\boldsymbol{\psi}$.
\end{theorem}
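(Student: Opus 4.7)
The plan is to exploit the algebraic structure of the Tukey model to turn the identification of $(\boldsymbol{\alpha},\boldsymbol{\psi})$ into a factorization problem on the observed edges of $G_{IJ}$, after which the argument parallels the classical TWFE one. The first step is to rewrite the model in multiplicative form. When $\beta_0 \neq 0$, setting $\tilde{\alpha}_i := 1 + \beta_0 \alpha_i$ and $\tilde{\psi}_j := 1 + \beta_0 \psi_j$, a direct expansion shows that the Tukey equation becomes
\[
    1 + \beta_0\, \theta_{ij} \;=\; \tilde{\alpha}_i\, \tilde{\psi}_j, \qquad (i,j)\in \mathcal{O}_{IJ}.
\]
Because $\beta_0$ is identified by hypothesis, recovering $(\boldsymbol{\alpha},\boldsymbol{\psi})$ is equivalent to recovering $(\tilde{\boldsymbol{\alpha}},\tilde{\boldsymbol{\psi}})$, and the normalization $\alpha_{i_0}=0$ translates to $\tilde{\alpha}_{i_0}=1$. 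When $\beta_0=0$, the statement reduces directly to the classical additive TWFE identification result under connectedness, which I would invoke.

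For sufficiency, assume $G_{IJ}$ is connected. Starting from $\tilde{\alpha}_{i_0}=1$, I would propagate identification along paths emanating from $i_0$: for any firm $j$ adjacent to $i_0$, $\tilde{\psi}_j = 1 + \beta_0\, \theta_{i_0 j}$ is known; for any worker $i$ adjacent to such a $j$, $\tilde{\alpha}_i = (1+\beta_0\, \theta_{ij})/\tilde{\psi}_j$. Iterating this rule along paths (which exist between any two nodes by Assumption~\ref{ass:connectedness}) pins down every $\tilde{\alpha}_i$ and $\tilde{\psi}_j$, and hence every $\alpha_i$ and $\psi_j$.

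For necessity, suppose $G_{IJ}$ is disconnected and let $C$ be a connected component not containing $i_0$. For any constant $c \notin \{0,1\}$, define an alternative parameter vector by $\tilde{\alpha}_i \mapsto c\,\tilde{\alpha}_i$ for $i \in C$ and $\tilde{\psi}_j \mapsto \tilde{\psi}_j / c$ for $j \in C$, leaving the other coordinates unchanged. Since every edge lies within a single component, each observed product $\tilde{\alpha}_i\, \tilde{\psi}_j$ is preserved, the normalization $\tilde{\alpha}_{i_0}=1$ is unaffected, and for $c$ close to but different from $1$ the resulting $(\boldsymbol{\alpha}',\boldsymbol{\psi}')$ lies in the same parameter space yet differs from the original within $C$, contradicting identification. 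For $\beta_0=0$, the same argument applies additively via $\alpha_i \mapsto \alpha_i + c$ and $\psi_j \mapsto \psi_j - c$ for $i,j \in C$.

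The main obstacle I foresee lies in the sufficiency step, in the degenerate case where $\tilde{\psi}_j = 0$ (equivalently $\psi_j = -1/\beta_0$) for some firm $j$, since then the propagation rule collapses and the observable equation $\theta_{ij} = -1/\beta_0$ is uninformative about $\alpha_i$. I would handle this either by restricting attention to the open subset of the parameter space on which $\tilde{\alpha}_i,\tilde{\psi}_j \neq 0$ for all $i,j$ (the standard regularity condition for multiplicative fixed-effects models) or by noting that such configurations themselves preclude identification of the affected productivities, so the conclusion is cleanest when read modulo this genericity assumption.
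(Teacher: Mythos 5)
Your proposal is correct and follows essentially the same route as the paper: the multiplicative reparameterization $1+\beta_0\theta_{ij}=(1+\beta_0\alpha_i)(1+\beta_0\psi_j)$, propagation of the normalized value $\tilde{\alpha}_{i_0}=1$ along paths guaranteed by connectedness, and a component-wise rescaling argument for necessity. Your explicit flagging of the degenerate case $\tilde{\psi}_j=0$ is a point the paper's proof of this theorem leaves implicit (it appears only in the proof of the preceding theorem as the caveat $1+\beta\alpha_i\neq 0$), but it does not alter the substance of the argument.
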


Theorem~\ref{thm:tukey_identification} extends the classical identification result for the TWFE model by allowing $\beta_0$ to differ from zero, provided that it is identified. The proof first constructs, for each observed match, a function of $\boldsymbol{\theta}_{\mathcal{O}}$ and $\beta_0$ that equals the product of known functions of the corresponding worker and firm productivities. Connectedness of $G_{IJ}$ then allows all worker and firm productivities to be recovered up to the normalization.

In labor-market applications, the matching network is rarely fully connected, especially over short time horizons. Analyses therefore commonly focus on the largest connected component. In the West German labor market studied by \citet{card2013workplace}, for example, the largest component contains more than 95\% of workers and 90\% of firms. \citet{bonhomme2023much} report similar firm coverage for Austria, Italy, Sweden, Norway, and the United States, although worker coverage is often substantially lower and in some cases falls below 50\%.

Theorems~\ref{thm:beta_identification} and~\ref{thm:tukey_identification} highlight the additional requirements introduced by allowing complementarities. Relative to the TWFE model, the Tukey model requires only a modest strengthening of the network conditions: in addition to connectedness, the graph must contain at least one informative four-cycle to identify $\beta_0$. This result exploits the fact that $\beta_0$ is a global parameter governing the complementarity structure of $f$ and does not vary across workers or firms. It can therefore be identified from local variation in $G_{IJ}$ and, together with connectedness, used to recover the productivity collections $\boldsymbol{\alpha}$ and $\boldsymbol{\psi}$ throughout the graph. This raises a natural question: what is lost by imposing the simpler TWFE specification when the interaction function exhibits complementarities?

\subsection{TWFE as an Approximation} \label{sec:twfe_sorting_bias}

I address this question by studying how well the TWFE model approximates interactions governed by the Tukey model. A common justification for TWFE is that departures from additive separability may be quantitatively limited, making the model a useful approximation even when complementarities are present. Because TWFE is nested within the Tukey model, this claim can be evaluated formally in the noiseless setting. Appendix~\ref{appendix:approximation} characterizes when the approximation is accurate and shows that TWFE can yield misleading conclusions when complementarities are non-negligible.

When the interaction function follows the Tukey model but the researcher imposes TWFE, the multiplicative term $\beta_0 \alpha_i \psi_j$ is absorbed into additive worker and firm components. How it is absorbed depends on the structure of the observed matching network and the sorting pattern embedded in it. I derive the resulting bias in the TWFE estimands and show, for example, that under a supermodular interaction function and positive sorting, the sorting measured by TWFE can be zero. In such cases, TWFE incorrectly suggests no sorting.

Thus, even in the noiseless setting, TWFE estimands may have limited informational value when complementarities are important. This motivates methods that allow the interaction function $f$ to depart from modularity. The Tukey model provides a parsimonious extension of TWFE, introducing a single interaction parameter while remaining identifiable under substantially weaker graph conditions than the richer benchmark specifications studied in Supplementary Material~\ref{sec:sm_ident_supp}. Having established when this parameter can be recovered from the deterministic components of observed outcomes, I now study its estimation and inference when outcomes are observed with noise.

\section{Estimation and Inference in the Tukey Model} \label{sec:estimation_inference}

This section develops estimation and inference for the Tukey model, focusing primarily on the interaction parameter. Section~\ref{sec:asymptotic_framework} presents the asymptotic framework. Section~\ref{sec:estimation_inference_beta0} introduces an estimator of $\beta_0$, establishes its consistency and asymptotic distribution, and uses it to construct a test of no complementarities within the Tukey model. Section~\ref{sec:estimation_inference_prod} discusses the distinct challenges involved in estimating the productivity collections $\boldsymbol{\alpha}$ and $\boldsymbol{\psi}$.

\subsection{Sparse-Network Asymptotic Framework} \label{sec:asymptotic_framework}

To study the large-sample properties of the estimator introduced below, I consider an asymptotic setting in which the numbers of workers and firms both grow, with $I \to \infty$ and $J \to \infty$. Information accumulates through the addition of new nodes and edges to the bipartite graph, rather than through repeated observations along existing edges.

The analysis therefore considers deterministic sequences of expanding matching networks $G_{IJ}$, worker productivity collections $\boldsymbol{\alpha}_I$, and firm productivity collections $\boldsymbol{\psi}_J$, with the set of observed matches $\mathcal{O}_{IJ}$ varying with $(I,J)$. I suppress the subscripts $I$ and $J$ on the productivity collections when no confusion can arise. Conditional on these sequences, randomness comes from the error terms $\{\eta_{ij}\}_{(i,j) \in \mathcal{O}_{IJ}}$, which are independent across matches. When the orientation sequence is random, inference is also conducted conditionally on its realization.

This framework differs from stochastic network formation models, in which the graph itself is random. It reflects applications where the observed matching network is treated as fixed and inference concerns the outcome shocks conditional on its structure. Equivalently, the network and productivities may be viewed as random objects on which the analysis conditions.

The role of the matching network in the asymptotic analysis mirrors its role in identification: conditions on the sequence of graphs $G_{IJ}$ ensure the validity of the asymptotic results. Although these conditions concern the full sequence and cannot be verified from a single observed graph, their relevance can be assessed by comparing the observed network with the properties required asymptotically. This comparison indicates whether the asymptotic results are likely to approximate finite-sample behavior well.

The large-$I$, large-$J$ asymptotic framework is well suited to labor-market applications, where datasets often contain millions of workers and firms. At the same time, the matching networks are sparse: each worker is observed with only a handful of firms, and each firm with a modest number of workers. For example, over a one-year horizon in the U.S. labor market, a typical worker is employed by one or two firms, while a typical firm employs only a few dozen workers. To accommodate this structure, the asymptotic framework allows node degrees to remain bounded as $I$ and $J$ grow, rather than requiring any worker to be linked to many firms or any firm to many workers.

\subsection{Estimation and Inference for the Interaction Parameter} \label{sec:estimation_inference_beta0}

Recall the Tukey model:
\begin{gather*}
    y_{ij} = \alpha_i + \psi_j + \beta_0 \alpha_i \psi_j + \eta_{ij},
\end{gather*}
with $\E[\eta_{ij}] = 0$, and $\{\eta_{ij}\}_{(i,j) \in \mathcal{O}_{IJ}}$ independent.

This section introduces an estimator of $\beta_0$ and studies its properties under the large-sample framework described above. \citet{tukey1949one} and the subsequent literature on nonadditivity in ANOVA consider the same model but focus on testing the hypothesis $\beta_0 = 0$ rather than on estimating $\beta_0$.

Theorem~\ref{thm:beta_identification} shows that a single informative four-cycle suffices to identify $\beta_0$. With noisy outcomes, however, consistent estimation requires pooling information across multiple four-cycles in $G_{IJ}$. I therefore treat each four-cycle as an observational unit and impose conditions on the sequence of matching networks ensuring that the number of edge-disjoint four-cycles used by the estimator grows with $I$ and $J$.

The analysis below considers two estimators. The first is a generic cycle-based estimator, denoted $\hat{\beta}_{L,s}$, where $\boldsymbol{s}_L$ records the orientation assigned to each cycle. This estimator provides the basis for the theoretical analysis, as the consistency and asymptotic normality results apply to any admissible orientation sequence. The second is a feasible estimator, denoted $\hat{\beta}_{L,z}$, obtained from a rank-based orientation rule that uses observable ranking variables informative about latent productivity rankings. After deriving the generic results, I provide conditions under which $\hat{\beta}_{L,z}$ inherits the asymptotic properties of $\hat{\beta}_{L,s}$ and can be used for inference and testing.

\subsubsection{Cycle-Based Estimator and Orientations}

Index the four-cycles used by the estimator by $\ell = 1, \dots, L$. For clarity of exposition, I restrict attention to edge-disjoint cycles, so that no two cycles share an edge, although they may share nodes. This restriction is not essential, and information from overlapping cycles can also be aggregated, but focusing on edge-disjoint cycles keeps the notation tractable.

Each cycle $\ell$ contains two distinct workers and two distinct firms. Fix an arbitrary deterministic reference ordering of these nodes, denoting the workers by $(i_{\ell 1},i_{\ell 2})$ and the firms by $(j_{\ell 1},j_{\ell 2})$. This ordering has no substantive meaning and may, for example, be based on numerical identifiers.

Under this reference ordering, define the two cycle statistics
\begin{align*}
    \hat{\Delta}_{1,\ell}
    &\coloneqq
    y_{i_{\ell 1}j_{\ell 1}} - y_{i_{\ell 2}j_{\ell 1}} - y_{i_{\ell 1}j_{\ell 2}} + y_{i_{\ell 2}j_{\ell 2}}, \\
    \hat{\Delta}_{2,\ell}
    &\coloneqq
    y_{i_{\ell 1}j_{\ell 1}} y_{i_{\ell 2}j_{\ell 2}} - y_{i_{\ell 2}j_{\ell 1}} y_{i_{\ell 1}j_{\ell 2}}.
\end{align*}
The first statistic is a difference-in-differences contrast of the four outcomes in the cycle. The second is the difference between the two cross-products formed from these outcomes.

Swapping the two workers or the two firms changes the sign of both statistics, while swapping both pairs leaves them unchanged. All ordered representations of a cycle can therefore be summarized by an orientation sign $s_\ell \in \{-1,1\}$, which determines the oriented cycle statistics $s_\ell \hat{\Delta}_{1,\ell}$ and $s_\ell \hat{\Delta}_{2,\ell}$.

Let $\boldsymbol{s}_L \coloneqq (s_1,\dots,s_L)$ denote the orientation sequence. For a given $\boldsymbol{s}_L$, define the generic cycle-based estimator
\begin{gather} \label{eq:beta_hat}
    \hat{\beta}_{L,s}
    \coloneqq
    -\frac{
        \tfrac{1}{L} \sum_{\ell=1}^L
        s_\ell \hat{\Delta}_{1,\ell}
    }{
        \tfrac{1}{L} \sum_{\ell=1}^L
        s_\ell \hat{\Delta}_{2,\ell}
    }.
\end{gather}
The estimator does not require estimating either productivity collection. Consequently, $\beta_0$ can be estimated consistently even when individual productivities cannot, such as when each worker is linked to only a few firms; see Remark~\ref{remark:consistency}.

The estimator depends on the orientation sequence. With $L$ cycles, there are $2^L$ possible orientation sequences, and different choices can produce different estimates from the same data. This motivates the rank-based orientation rule introduced in Section~\ref{sec:rbl}.

To see why the estimator recovers $\beta_0$, define
\begin{gather*}
    A_\ell
    \coloneqq
    (\alpha_{i_{\ell 1}} - \alpha_{i_{\ell 2}})
    (\psi_{j_{\ell 1}} - \psi_{j_{\ell 2}}).
\end{gather*}
The sign of $A_\ell$ depends on the arbitrary reference ordering, while its absolute value $|A_\ell|$ measures the combined productivity variation within cycle $\ell$ and is invariant to that ordering.

The cycle statistics admit the decompositions
\begin{align*}
    \hat{\Delta}_{1,\ell}
    &=
    \Delta_{1,\ell} + \epsilon_{1,\ell}, \\
    \hat{\Delta}_{2,\ell}
    &=
    \Delta_{2,\ell} + \epsilon_{2,\ell},
\end{align*}
where
\begin{gather*}
    \Delta_{1,\ell} = \beta_0 A_\ell,
    \qquad
    \Delta_{2,\ell} = -A_\ell,
\end{gather*}
and
\begin{align}
    \epsilon_{1,\ell}
    = &
    \eta_{i_{\ell 1}j_{\ell 1}}
    - \eta_{i_{\ell 2}j_{\ell 1}}
    - \eta_{i_{\ell 1}j_{\ell 2}}
    + \eta_{i_{\ell 2}j_{\ell 2}},
    \label{eq:epsilon1} \\
    \epsilon_{2,\ell}
    = & 
    \theta_{i_{\ell 1}j_{\ell 1}} \eta_{i_{\ell 2}j_{\ell 2}}
    + \theta_{i_{\ell 2}j_{\ell 2}} \eta_{i_{\ell 1}j_{\ell 1}}
    - \theta_{i_{\ell 2}j_{\ell 1}} \eta_{i_{\ell 1}j_{\ell 2}}
    - \theta_{i_{\ell 1}j_{\ell 2}} \eta_{i_{\ell 2}j_{\ell 1}} \notag \\
    &
    + \eta_{i_{\ell 1}j_{\ell 1}} \eta_{i_{\ell 2}j_{\ell 2}}
    - \eta_{i_{\ell 2}j_{\ell 1}} \eta_{i_{\ell 1}j_{\ell 2}}.
    \label{eq:epsilon2}
\end{align}
Consequently,
\begin{gather*}
    s_\ell \hat{\Delta}_{1,\ell}
    =
    \beta_0 s_\ell A_\ell + s_\ell \epsilon_{1,\ell},
    \qquad
    s_\ell \hat{\Delta}_{2,\ell}
    =
    -s_\ell A_\ell + s_\ell \epsilon_{2,\ell}.
\end{gather*}
Under the conditions introduced below, the sample averages of the oriented error components $s_\ell \epsilon_{1,\ell}$ and $s_\ell \epsilon_{2,\ell}$ converge to zero. The population components share the same oriented productivity-gap term, which cancels in their ratio, leaving $\beta_0$.

\subsubsection{Rank-Based Orientation and the Feasible Estimator} \label{sec:rbl}

The generic estimator $\hat{\beta}_{L,s}$ leaves the orientation sequence unspecified. I now define a feasible rule based on observable ranking variables. Let $\boldsymbol{z}^{\alpha} = (z^\alpha_1,\dots,z^\alpha_I)$ and $\boldsymbol{z}^{\psi} = (z^\psi_1,\dots,z^\psi_J)$ denote observable characteristics for workers and firms, respectively. These variables are treated as nonrandom and are used to orient each cycle according to the rankings they induce. Through the orientation signs they induce, the ranking variables play a role analogous to instruments: treating them as independent of the outcome errors ensures exogeneity, while their informativeness about the latent productivity rankings provides relevance. For expositional simplicity, assume that there are no within-cycle ties. When ties occur, they are resolved using a pre-specified rule independent of the outcome errors.

\begin{df} \label{def:rbl}
    {\normalfont (Rank-Based Orientation)}
    For each cycle $\ell$, define
    \begin{gather*}
        s_{\ell,z}
        \coloneqq
        \operatorname{sign}
        \left[
            \left(z^\alpha_{i_{\ell 1}} - z^\alpha_{i_{\ell 2}}\right)
            \left(z^\psi_{j_{\ell 1}} - z^\psi_{j_{\ell 2}}\right)
        \right].
    \end{gather*}
    The rank-based orientation sequence is $\boldsymbol{s}_{L,z} \coloneqq (s_{1,z},\dots,s_{L,z})$.
\end{df}

Under this rule, a cycle is assigned sign $s_{\ell,z} = 1$ when the ranking variables on the two sides induce the same ordering relative to the reference ordering: either
\begin{gather*}
    z^\alpha_{i_{\ell 1}} > z^\alpha_{i_{\ell 2}}
    \quad \text{and} \quad
    z^\psi_{j_{\ell 1}} > z^\psi_{j_{\ell 2}},
\end{gather*}
or both inequalities are reversed. The cycle is assigned sign $s_{\ell,z} = -1$ when the induced orderings are opposite.

The oriented statistics $s_{\ell,z} \hat{\Delta}_{1,\ell}$ and $s_{\ell,z} \hat{\Delta}_{2,\ell}$ are invariant to the arbitrary reference ordering. Reversing the worker or firm ordering changes the signs of both $s_{\ell,z}$ and the corresponding cycle statistics, leaving their products unchanged.

The corresponding rank-based estimator is
\begin{gather} \label{eq:rbl}
    \hat{\beta}_{L,z}
    \coloneqq
    -\frac{
        \tfrac{1}{L} \sum_{\ell=1}^L
        s_{\ell,z} \hat{\Delta}_{1,\ell}
    }{
        \tfrac{1}{L} \sum_{\ell=1}^L
        s_{\ell,z} \hat{\Delta}_{2,\ell}
    }.
    \tag{Rank-based estimator}
\end{gather}
Thus, $\hat{\beta}_{L,z}$ is the generic estimator in Equation~\ref{eq:beta_hat} evaluated at the orientation sequence induced by the ranking variables.

The ranking variables should be observable characteristics that are plausibly informative about latent productivity rankings and are not constructed from the outcomes entering the cycle statistics. For example, years of schooling may help rank workers by productivity, while firm size may help rank firms. Outcomes are not suitable ranking variables because they contain the errors $\eta_{ij}$, which can make the orientation sequence dependent on the cycle errors. Assumption~\ref{ass:labeling} formalizes the required restriction, and Appendix~\ref{appendix:outcome_based_labeling} shows that outcome-based orientation may lead to biased estimates and invalid inference.

The following example illustrates how the orientation rule works in practice. The next section develops the asymptotic theory for the generic estimator $\hat{\beta}_{L,s}$. I then return to the rank-based estimator and provide conditions under which the orientation sequence induced by the ranking variables satisfies the required regularity conditions.

\begin{eg} \label{ex:labelings}
    {\normalfont (Rank-Based Orientation in Practice)}
    Consider two cycles. The first involves workers Alice and Bob and firms Canon and Dell; the second involves workers Elizabeth and Fred and firms General Motors and Honda. Fix Alice--Bob and Canon--Dell as the reference ordering for the first cycle, and Elizabeth--Fred and General Motors--Honda for the second.

    In the first cycle, Alice earns 120 at Canon and 100 at Dell, while Bob earns 100 at Canon and 90 at Dell. The reference cycle statistics are
    \begin{gather*}
        \hat{\Delta}_{1,1} = 120 - 100 - 100 + 90 = 10,
        \qquad
        \hat{\Delta}_{2,1} = 120 \times 90 - 100 \times 100 = 800.
    \end{gather*}
    For the second cycle, the statistics are $\hat{\Delta}_{1,2} = -20$ and $\hat{\Delta}_{2,2} = -900$.

    Use years of schooling as the worker ranking variable and the number of employees as the firm ranking variable. Their values are:
    \begin{center}
        \begin{minipage}{0.48\textwidth}
            \centering
            \begin{tabular}{cc}
                \hline \hline
                Worker & $z^\alpha_i$ \\
                \hline
                Alice & 16 \\
                Bob & 14 \\
                Elizabeth & 18 \\
                Fred & 12 \\
                \hline \hline
            \end{tabular}
        \end{minipage}
        \begin{minipage}{0.48\textwidth}
            \centering
            \begin{tabular}{cc}
                \hline \hline
                Firm & $z^\psi_j$ \\
                \hline
                Canon & 170{,}000 \\
                Dell & 110{,}000 \\
                General Motors & 160{,}000 \\
                Honda & 190{,}000 \\
                \hline \hline
            \end{tabular}
        \end{minipage}
    \end{center}

    The induced orientations and oriented cycle statistics are
    \begin{center}
        \begin{tabular}{ccccrrrrr}
            \hline \hline
            $i_{\ell 1}$ & $i_{\ell 2}$ &
            $j_{\ell 1}$ & $j_{\ell 2}$ &
            $s_{\ell,z}$ &
            $\hat{\Delta}_{1,\ell}$ &
            $\hat{\Delta}_{2,\ell}$ &
            $s_{\ell,z} \hat{\Delta}_{1,\ell}$ &
            $s_{\ell,z} \hat{\Delta}_{2,\ell}$ \\
            \hline
            Alice & Bob & Canon & Dell
            & $1$ & $10$ & $800$ & $10$ & $800$ \\
            Elizabeth & Fred & General Motors & Honda
            & $-1$ & $-20$ & $-900$ & $20$ & $900$ \\
            \hline \hline
        \end{tabular}
    \end{center}
    Each row of the final table represents one cycle and constitutes an observation used by the estimator. The estimator $\hat{\beta}_{L,z}$ is the negative ratio of the averages of the last two columns.
\end{eg}

\subsubsection{Consistency}

To establish consistency of $\hat{\beta}_{L,s}$, I impose the following conditions.

\begin{ass} \label{ass:consistency}
    \begin{enumerate}[label=3.\arabic*]
        \item[]
        \item \label{ass:error_eta}
        {\normalfont (Error Regularity)}
        The error terms $\eta_{ij}$ are independent across observed matches and satisfy $\E[\eta_{ij}] = 0$, $\Var(\eta_{ij}) \geq C_\eta > 0$, and $\E[\abs{\eta_{ij}}^{\,2+\delta}] \leq M < \infty$ for some $\delta > 0$ and uniform constants $C_\eta$ and $M$.

        \item \label{ass:cycles}
        {\normalfont (Cycle Growth)}
        The sequence of matching networks $G_{IJ}$ admits a collection of $L$ edge-disjoint four-cycles such that $L \to \infty$ as $I \to \infty$ and $J \to \infty$.

        \item \label{ass:seq_mu}
        {\normalfont (Cycle Heterogeneity)}
        The sequences of latent productivities and matching networks satisfy
        \begin{gather*}
            \mu_L
            \coloneqq
            \frac{1}{L} \sum_{\ell=1}^L |A_\ell|
            > C_\mu > 0
        \end{gather*}
        for all sufficiently large $L$.

        \item \label{ass:labeling}
        {\normalfont (Orientation Regularity)}
        The orientation sequence $\boldsymbol{s}_L = (s_1,\dots,s_L)$ is independent of the outcome errors. Its effective orientation signal,
        \begin{gather*}
            \kappa_{L,s}
            \coloneqq
            \frac{1}{L} \sum_{\ell=1}^L s_\ell A_\ell,
        \end{gather*}
        satisfies $|\kappa_{L,s}| > C_\kappa > 0$ for all sufficiently large $L$, almost surely when $\boldsymbol{s}_L$ is random.
    \end{enumerate}
\end{ass}

Assumption~\ref{ass:error_eta} imposes standard regularity conditions on the error terms. It does not require the $\eta_{ij}$ to be identically distributed and therefore allows for heteroskedasticity: each error may have a distinct distribution, provided it has mean zero, positive variance, and a finite $(2+\delta)$ moment. The variance lower bound and moment condition are not required for consistency but are used to derive the asymptotic distribution of the estimator.

Assumption~\ref{ass:cycles} requires the number of four-cycles to grow with the size of the matching network. It places no requirement that node degrees diverge and is compatible with sequences in which they remain bounded. For example, every worker and firm may have degree no greater than two. Because $\hat{\beta}_{L,s}$ treats four-cycles as the units of observation, the assumption ensures that the number of such units diverges.

Assumption~\ref{ass:seq_mu} requires the average absolute productivity-gap product to remain bounded away from zero. The condition is invariant to the arbitrary reference ordering and is closely related to Assumption~\ref{ass:informative_cycle}, as it ensures persistent productivity heterogeneity across cycles. Equivalently, it rules out sequences in which the identifying variation within cycles vanishes asymptotically.

Assumption~\ref{ass:labeling} restricts the orientation sequence. The scalar $\kappa_{L,s}$ combines the magnitude of the productivity variation across cycles with the extent to which the selected orientations align that variation in a common direction. The assumption rules out orientation sequences under which positive and negative contributions cancel asymptotically. When the orientation sequence is random, it also requires independence from the outcome errors, since dependence on the errors entering the cycle statistics would invalidate the averaging argument. The assumption does not prescribe a unique orientation rule; it states the conditions that any admissible sequence must satisfy. The rank-based orientation in Definition~\ref{def:rbl} induces such a sequence when the ranking variables satisfy the conditions stated below.

Strong consistency then follows from the strong law of large numbers, as stated in the following theorem.

\begin{theorem} \label{thm:consistency}
    {\normalfont (Strong Consistency)}
    Under Assumption~\ref{ass:consistency}, as $I \to \infty$ and $J \to \infty$,
    \begin{gather*}
        \hat{\beta}_{L,s} \xrightarrow{\mathrm{a.s.}} \beta_0.
    \end{gather*}
\end{theorem}

The graph requirement in Assumption~\ref{ass:cycles} is relatively weak. Under the bipartite Erdős--Rényi model with balanced growth of $I$ and $J$, Supplementary Material~\ref{sec:sm_er} shows that the total number of four-cycles diverges when $\sqrt{IJ} p_{IJ} \to \infty$. By contrast, connectivity requires the stronger condition $\frac{\sqrt{IJ} p_{IJ}}{\log(\sqrt{IJ})} > 1$. Thus, the graph condition required for identification of $\boldsymbol{\alpha}$ and $\boldsymbol{\psi}$ guarantees a diverging total number of four-cycles. 

The effective sample size for estimating $\beta_0$ is $L$, the number of cycles used by the estimator, rather than the number of observed matches. This parallels other econometric settings in which the effective sample size differs from the raw number of observations: in local linear regression, it is proportional to the sample size times the bandwidth, while with clustered data it is determined by the number of clusters.

In labor-market applications, Assumption~\ref{ass:cycles} can hold even when the matching network consists of many disconnected local subgraphs, provided that the number of cycles grows with the numbers of workers and firms. In the empirical setting of \citet{kline2024firm}, for example, the data contain roughly 750{,}000 workers, 70{,}000 firms, and 5{,}000 cycles. This large number of cycles suggests that the asymptotic approximation may be informative about finite-sample behavior.

The next step is to characterize the asymptotic distribution of $\hat\beta_{L,s}$, which provides the basis for inference and for testing no complementarities.

\begin{remark} \label{remark:consistency}
    {\normalfont (No Need to Estimate Productivities)}
    A key feature of $\hat{\beta}_{L,s}$ is that it is consistent without requiring estimation of the productivity collections $\boldsymbol{\alpha}$ and $\boldsymbol{\psi}$. Consistent estimation of these objects would generally require sufficiently many informative observations for each worker and firm. This is in contrast to more standard approaches, such as the iterative least squares method of \citet{bai2009panel}, which estimate $\beta_0$ jointly with unit-specific effects, so the asymptotic behavior of the estimator of $\beta_0$ depends on the properties of the estimators of those effects. By isolating $\beta_0$ directly, the cycle-based estimator avoids this dependence.
\end{remark}

\subsubsection{Asymptotic Normality} \label{sec:inference_beta0}

The estimator $\hat{\beta}_{L,s}$ is a ratio of averages, so its asymptotic distribution can be derived using the Lyapunov central limit theorem. Define
\begin{gather*}
    u_\ell
    \coloneqq
    \epsilon_{1,\ell} + \beta_0 \epsilon_{2,\ell},
    \qquad
    u_{\ell,s}
    \coloneqq
    s_\ell u_\ell,
\end{gather*}
where $\epsilon_{1,\ell}$ and $\epsilon_{2,\ell}$ are defined in Equations~\ref{eq:epsilon1} and~\ref{eq:epsilon2}.

Because $s_\ell^2 = 1$, the variance of $u_{\ell,s}$, conditional on the orientation sequence when it is random, does not depend on the orientation. Define
\begin{gather*}
    \sigma_{u,L}^2
    \coloneqq
    \frac{1}{L} \sum_{\ell=1}^L \Var(u_\ell).
\end{gather*}

With this notation, the asymptotic distribution of $\hat{\beta}_{L,s}$ follows.

\begin{theorem} \label{thm:asympt_normality}
    {\normalfont (Asymptotic Normality)}
    Under Assumption~\ref{ass:consistency}, as $I \to \infty$ and $J \to \infty$,
    \begin{gather*}
        \frac{
            \sqrt{L} \left(\hat{\beta}_{L,s} - \beta_0\right)
        }{
            \sigma_{u,L} / |\kappa_{L,s}|
        }
        \xrightarrow{d}
        \mathcal{N}(0,1).
    \end{gather*}
\end{theorem}

The estimator $\hat{\beta}_{L,s}$ converges at rate $\sqrt{L}$, where $L$, the number of edge-disjoint four-cycles used by the estimator, is the effective sample size. In a complete bipartite graph with even $I$ and $J$, the edges can be partitioned into $IJ/4$ edge-disjoint four-cycles, yielding a rate proportional to $\sqrt{IJ}$. More generally, the asymptotic argument depends on $L \to \infty$ rather than on both dimensions diverging. Thus, the estimator remains consistent if either $I$ or $J$ is fixed while the other grows, provided the graph contains a growing collection of four-cycles.

Two scaling terms appear in the asymptotic distribution. The first, $\sigma_{u,L}$, is the square root of the average variance of the composite errors. It reflects the variability of the underlying outcome errors $\eta_{ij}$ and is invariant to the orientation sequence.

The second, $|\kappa_{L,s}|$, measures the effective identifying signal. It can be small either because the worker and firm productivity gaps within cycles are limited or because the chosen orientations cause positive and negative productivity-gap contributions to cancel. Its presence shows explicitly how orientation affects precision: as the orientation rule aligns productivity heterogeneity more consistently across cycles, $|\kappa_{L,s}|$ increases and the asymptotic variance decreases.

\paragraph{Feasible Estimation of the Scaling Factor}

To make the asymptotic normality result operational for inference, the scaling factor $\sigma_{u,L} / |\kappa_{L,s}|$ must be estimated. The quantity $\hat{\kappa}_{L,s} \coloneqq -\frac{1}{L} \sum_{\ell=1}^L s_\ell \hat{\Delta}_{2,\ell}$ consistently estimates $\kappa_{L,s}$, as established in the proof of Theorem~\ref{thm:asympt_normality}. A feasible estimator of $\sigma_{u,L}^2$ is
\begin{gather} \label{eq:sigma_hat}
    \hat{\sigma}_{u,L,s}^2
    \coloneqq
    \frac{1}{L} \sum_{\ell=1}^L
    \left(\hat{\Delta}_{1,\ell} + \hat{\beta}_{L,s} \hat{\Delta}_{2,\ell}\right)^2,
\end{gather}
whose consistency is established in the following proposition.

\begin{prop} \label{prop:variance_estimator}
    {\normalfont (Consistent Variance Estimator)}
    Under Assumption~\ref{ass:consistency}, as $I \to \infty$ and $J \to \infty$,
    \begin{gather*}
        \hat{\sigma}_{u,L,s}^2 - \sigma_{u,L}^2
        \xrightarrow{p}
        0.
    \end{gather*}
\end{prop}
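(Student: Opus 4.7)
The plan is to first derive a clean algebraic identity for the estimated residuals $\hat u_{\ell,\pi}$, then decompose $\hat\sigma_{u,L,\pi}^2$ into a leading term that converges to $\sigma_{u,L}^2$ and two remainder terms that vanish because $\hat\beta_{L,\pi}$ is consistent. The key observation is that $\Delta_{1,\ell,\pi_\ell} = -\beta_0\,\Delta_{2,\ell,\pi_\ell}$, which follows directly from the expressions displayed before the proposition. Substituting this into the definition of $\hat u_{\ell,\pi} = \hat\Delta_{1,\ell,\pi} + \hat\beta_{L,\pi}\hat\Delta_{2,\ell,\pi}$ yields
\begin{gather*}
\hat u_{\ell,\pi} \;=\; u_{\ell,\pi_\ell} \;+\; (\hat\beta_{L,\pi} - \beta_0)\,\hat\Delta_{2,\ell,\pi_\ell}.
\end{gather*}
Squaring and averaging across cycles gives
\begin{gather*}
\hat\sigma_{u,L,\pi}^2 \;=\; \tfrac{1}{L}\sum_{\ell=1}^L u_{\ell,\pi_\ell}^2 \;+\; 2(\hat\beta_{L,\pi}-\beta_0)\,\tfrac{1}{L}\sum_{\ell=1}^L u_{\ell,\pi_\ell}\hat\Delta_{2,\ell,\pi_\ell} \;+\; (\hat\beta_{L,\pi}-\beta_0)^2\,\tfrac{1}{L}\sum_{\ell=1}^L \hat\Delta_{2,\ell,\pi_\ell}^2.
\end{gather*}

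For the leading term, I would invoke a weak law of large numbers for $\{u_{\ell,\pi_\ell}^2\}$. Since cycles are edge-disjoint and the $\{\eta_{ij}\}$ are independent across matches (and $\{\pi_\ell\}$ is independent of the noise by Assumption~\ref{ass:labeling}), the random variables $\{u_{\ell,\pi_\ell}\}$ are independent across $\ell$, and so are their squares. Each $u_{\ell,\pi_\ell}^2$ has mean $\Var(u_{\ell,\pi_\ell})$, so the average of the means is exactly $\sigma_{u,L}^2$. Establishing a uniform $(1{+}\delta/2)$-moment bound on $u_{\ell,\pi_\ell}^2$—equivalently, a $(2{+}\delta)$-moment bound on $u_{\ell,\pi_\ell}$—then delivers convergence in probability by a standard Markov/Chebyshev argument for independent arrays. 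The moment bound follows from Assumption~\ref{ass:error_eta} together with compactness of $A$, $\Psi$, and $B$: the linear $\eta$ terms are controlled by $\E[|\eta_{ij}|^{2+\delta}] \le M$ with coefficients bounded by $\max_{ij}|\theta_{ij}| < \infty$, and the quadratic cross-terms $\eta_{ij}\eta_{i'j'}$ that enter $\epsilon_{\Delta_2,\ell,\pi_\ell}$ factor across independent edges, so their $(2{+}\delta)$-moments are bounded by $M^2$.

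For the two remainder terms, I would argue that each average factor is $O_p(1)$ while $\hat\beta_{L,\pi}-\beta_0 = o_p(1)$ by Theorem~\ref{thm:consistency}. The average $\tfrac{1}{L}\sum \hat\Delta_{2,\ell,\pi_\ell}^2$ is $O_p(1)$ because $\Delta_{2,\ell,\pi_\ell}$ is uniformly bounded by compactness and $\tfrac{1}{L}\sum \epsilon_{\Delta_2,\ell,\pi_\ell}^2$ is $O_p(1)$ by the same moment argument as above. The cross-average $\tfrac{1}{L}\sum u_{\ell,\pi_\ell}\hat\Delta_{2,\ell,\pi_\ell}$ is $O_p(1)$ either directly, via a WLLN on the independent summands, or via Cauchy-Schwarz using the bounds on $\tfrac{1}{L}\sum u_{\ell,\pi_\ell}^2$ and $\tfrac{1}{L}\sum \hat\Delta_{2,\ell,\pi_\ell}^2$. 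Combining the three terms gives $\hat\sigma_{u,L,\pi}^2 - \sigma_{u,L}^2 \xrightarrow{p} 0$.

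The main obstacle is the moment bookkeeping for the leading term: $u_{\ell,\pi_\ell}$ contains products of two $\eta$'s inside $\epsilon_{\Delta_2,\ell,\pi_\ell}$, so naively squaring it produces fourth-order polynomials in the noise. Without exploiting the independence of the two $\eta$ factors that multiply each other (and compactness of the deterministic components $\theta_{ij}$), one would be tempted to require $(4{+}\delta)$-moments on $\eta$, which Assumption~\ref{ass:error_eta} does not provide. The careful step is therefore to verify that the factorization of moments across independent edges reduces the required integrability back to the $(2{+}\delta)$ level already assumed, after which the WLLN applies and the rest of the proof is routine.
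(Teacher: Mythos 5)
Your proof is correct and follows essentially the same route as the paper: the same residual identity $\hat u_{\ell,\pi} = u_{\ell,\pi_\ell} + (\hat\beta_{L,\pi}-\beta_0)\,\hat\Delta_{2,\ell,\pi_\ell}$, the same three-term expansion of the average of squares, and the same $o_p(1)\cdot O_p(1)$ treatment of the two remainder terms via Theorem~\ref{thm:consistency}. If anything you are more explicit than the paper on the leading term: the paper's proof treats $\sigma_{u,L}^2 = \tfrac{1}{L}\sum_{\ell} u_{\ell,\pi_\ell}^2$ as if it were an identity, whereas you correctly supply the missing law-of-large-numbers step (using the $(2+\delta)$-moment bound on $u_{\ell,\pi_\ell}$, obtained by factoring the moments of the products $\eta_{ij}\eta_{i'j'}$ across independent edges, exactly as in Lemma~\ref{lemma:error_u}) that converts the average of squares into the average of variances.
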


When using the rank-based orientation rule, I denote these estimators evaluated at $\boldsymbol{s}_{L,z}$ by $\hat{\kappa}_{L,z}$ and $\hat{\sigma}_{u,L,z}^2$.

Theorem~\ref{thm:asympt_normality} and these feasible estimators of the scaling terms yield asymptotically valid confidence intervals for $\beta_0$ whenever the orientation sequence satisfies Assumption~\ref{ass:labeling}. I next provide conditions under which the rank-based orientation rule induces such a sequence.

\subsubsection{Validity of Rank-Based Orientation}

I now provide sufficient conditions under which the rank-based orientation sequence satisfies Assumption~\ref{ass:labeling}. To clarify the role of the ranking variables, first consider the oracle case in which they coincide with the latent productivities: $\boldsymbol{z}^{\alpha} = \boldsymbol{\alpha}$ and $\boldsymbol{z}^{\psi} = \boldsymbol{\psi}$. The induced orientation then agrees with the sign of the productivity-gap product in every cycle, so $s_{\ell,z} A_\ell = |A_\ell|$. It follows that $\kappa_{L,z} \coloneqq \frac{1}{L} \sum_{\ell=1}^L s_{\ell,z} A_\ell = \frac{1}{L} \sum_{\ell=1}^L |A_\ell| = \mu_L > C_\mu$, and Assumption~\ref{ass:labeling} holds.

The rank-based orientation rule does not require the ranking variables to recover the latent ranking in every cycle. Some cycles may be assigned a sign opposite to that of $A_\ell$, provided that correctly oriented productivity-gap variation dominates incorrectly oriented variation in the aggregate. Equivalently, $\kappa_{L,z} = \frac{1}{L} \sum_{\ell=1}^L s_{\ell,z} A_\ell$ must remain bounded away from zero. The conditions below ensure this by requiring each ranking variable to contain useful information about the corresponding latent ranking, the information from the two sides not to systematically offset across cycles, and orientation errors not to be concentrated among cycles with the largest productivity gaps.

To formalize these requirements, define the worker-side and firm-side ranking-agreement signs as
\begin{align*}
    r_{\ell,z}^{\alpha}
    &\coloneqq
    \operatorname{sign}\left[
        \left(z^\alpha_{i_{\ell 1}} - z^\alpha_{i_{\ell 2}}\right)
        \left(\alpha_{i_{\ell 1}} - \alpha_{i_{\ell 2}}\right)
    \right], \\
    r_{\ell,z}^{\psi}
    &\coloneqq
    \operatorname{sign}\left[
        \left(z^\psi_{j_{\ell 1}} - z^\psi_{j_{\ell 2}}\right)
        \left(\psi_{j_{\ell 1}} - \psi_{j_{\ell 2}}\right)
    \right].
\end{align*}
The sign $r_{\ell,z}^{\alpha}$ equals one when the worker ranking variable and the worker productivity difference have the same sign and equals minus one when they have opposite signs. Thus, $r_{\ell,z}^{\alpha}$ records whether the ranking variable agrees with the latent worker ordering in cycle $\ell$. The sign $r_{\ell,z}^{\psi}$ has the analogous interpretation for firms.

Define the overall ranking-agreement sign as
\begin{gather*}
    r_{\ell,z} \coloneqq r_{\ell,z}^{\alpha} r_{\ell,z}^{\psi},
\end{gather*}
and note that, by construction, $s_{\ell,z} A_\ell = r_{\ell,z} |A_\ell|$. Thus, $r_{\ell,z}$ determines the sign with which cycle $\ell$ contributes to the identifying variation, while $|A_\ell|$ determines the magnitude of that contribution.

Define the average ranking-agreement signs
\begin{gather*}
    \bar{r}_L^\alpha
    \coloneqq
    \frac{1}{L} \sum_{\ell=1}^L r_{\ell,z}^{\alpha},
    \qquad
    \bar{r}_L^\psi
    \coloneqq
    \frac{1}{L} \sum_{\ell=1}^L r_{\ell,z}^{\psi},
    \qquad
    \bar{r}_L
    \coloneqq
    \frac{1}{L} \sum_{\ell=1}^L r_{\ell,z},
\end{gather*}
and impose the following conditions on the sequences $\{r_{\ell,z}^{\alpha}\}$ and $\{r_{\ell,z}^{\psi}\}$.

\begin{ass} \label{ass:instruments}
    \begin{enumerate}[label=4.\arabic*]
        \item[]

        \item \label{ass:relevance}
        {\normalfont (Ranking Relevance)}
        The ranking variables agree sufficiently often with the corresponding latent productivity orderings: there exist constants $c_\alpha,c_\psi > 0$ such that, for all sufficiently large $L$,
        \begin{gather*}
            \bar{r}_L^\alpha \geq c_\alpha,
            \qquad
            \bar{r}_L^\psi \geq c_\psi.
        \end{gather*}

        \item \label{ass:nonegative}
        {\normalfont (No Negative Association)}
        Worker-side and firm-side ranking agreements are not negatively associated across cycles: for all sufficiently large $L$,
        \begin{gather*}
            \frac{1}{L} \sum_{\ell=1}^L
            \left(r_{\ell,z}^{\alpha} - \bar{r}_L^\alpha\right)
            \left(r_{\ell,z}^{\psi} - \bar{r}_L^\psi\right)
            \geq 0.
        \end{gather*}

        \item \label{ass:nolargegaps}
        {\normalfont (No Large-Gap Penalty)}
        Overall ranking agreement is not negatively associated with the magnitude of the productivity-gap product: for all sufficiently large $L$,
        \begin{gather*}
            \frac{1}{L} \sum_{\ell=1}^L
            \left(|A_\ell| - \mu_L\right)
            \left(r_{\ell,z} - \bar{r}_L\right)
            \geq 0.
        \end{gather*}
    \end{enumerate}
\end{ass}

Assumption~\ref{ass:relevance} requires each ranking variable to contain useful information about the corresponding latent productivity ordering. On the worker side, agreement occurs when the difference in the ranking variable has the same sign as the difference in productivity; disagreement occurs when the signs are opposite. The firm-side condition has the same interpretation. Perfect rankings are not required: each ranking variable must agree with the corresponding latent ordering more often than it disagrees, with this advantage bounded away from zero as $L$ grows.

Assumption~\ref{ass:nonegative} requires the ranking information on the worker and firm sides not to systematically offset across cycles. The overall cycle orientation is correct when the two ranking variables either both agree with or both reverse the corresponding latent orderings, because two reversals preserve the sign of the productivity-gap product. It is incorrect when one ranking variable agrees and the other reverses the latent ordering. Thus, individual ranking relevance does not guarantee a valid combined orientation if agreement on one side systematically coincides with disagreement on the other. The assumption rules out this offsetting and ensures that, before accounting for differences in cycle strength, correctly oriented cycles dominate incorrectly oriented ones.

Assumption~\ref{ass:nolargegaps} requires the overall orientation to remain informative among cycles with the strongest identifying variation. A cycle contributes more to identification when the absolute product of its worker and firm productivity gaps is larger. Positive average orientation agreement would therefore be insufficient if incorrectly oriented cycles systematically had larger productivity gaps than correctly oriented cycles. The assumption rules out this pattern. It allows orientation errors, including among highly informative cycles, but requires them not to become systematically more prevalent as $|A_\ell|$ increases. Consequently, weighting cycles by the strength of their identifying variation does not overturn the positive orientation agreement implied by Assumptions~\ref{ass:relevance} and~\ref{ass:nonegative}.

The following proposition establishes that these conditions are sufficient for the rank-based orientation sequence to satisfy Assumption~\ref{ass:labeling}.

\begin{prop} \label{prop:instrument_validity}
    {\normalfont (Validity of Rank-Based Orientation)}
    Under Assumptions~\ref{ass:error_eta}, \ref{ass:cycles}, \ref{ass:seq_mu}, and~\ref{ass:instruments}, the rank-based orientation sequence $\boldsymbol{s}_{L,z}$ satisfies
    \begin{gather*}
        \kappa_{L,z}
        \coloneqq
        \frac{1}{L} \sum_{\ell=1}^L s_{\ell,z} A_\ell
        \geq
        \mu_L c_\alpha c_\psi
        >
        0
    \end{gather*}
    for all sufficiently large $L$. Hence, $\boldsymbol{s}_{L,z}$ satisfies Assumption~\ref{ass:labeling}.
\end{prop}

Proposition~\ref{prop:instrument_validity} connects the feasible estimator $\hat{\beta}_{L,z}$ to the generic theory. Under the stated conditions, $\hat{\beta}_{L,z}$ is strongly consistent and asymptotically normal and admits a feasible variance estimator. It can therefore be used not only to estimate the interaction parameter but also to construct a formal test of modularity. The next section develops this test and studies its properties.

\subsubsection{Testing No Complementarities} \label{sec:modularity_test}

The TWFE model is nested within the Tukey model as the special case $\beta_0 = 0$. The asymptotic distribution of the rank-based estimator $\hat{\beta}_{L,z}$ can therefore be used to test the null hypothesis $H_0 : \beta_0 = 0$, which corresponds to modularity of the interaction function. Rejecting $H_0$ therefore rejects the TWFE specification and provides evidence against modularity. Failure to reject does not establish modularity outside the Tukey class, since richer non-modular interaction functions may not be detected by a test based solely on $\beta_0$. To the best of my knowledge, no formal test of modularity has previously been developed in the BI framework, even though whether complementarities are present is a recurring question in empirical applications.

Testing $H_0 : \beta_0 = 0$ is the focus of \citet{tukey1949one}, although that procedure does not estimate $\beta_0$. Under homoskedastic, normally distributed errors $\eta_{ij}$ and a complete matching network $G_{IJ}$, Tukey's procedure first estimates the additive effects $\boldsymbol{\alpha}$ and $\boldsymbol{\psi}$ under the null using a two-way fixed effects regression and then tests whether the interaction term $\alpha_i \psi_j$ significantly improves model fit.

Related diagnostic approaches have been used in the TWFE literature, including \citet{card2013workplace} and \citet{fenizia2022managers}. In sparse matching networks, however, consistent estimation of the individual productivities may not be feasible, which complicates residual-based procedures. By contrast, under the conditions developed above, the rank-based estimator $\hat{\beta}_{L,z}$ remains consistent with heteroskedastic errors and sparse matching networks.

\paragraph{Test Statistic.}
Consider the studentized statistic
\begin{gather*}
    \hat{T}_{L,z}
    \coloneqq
    \frac{\sqrt{L} \hat{\kappa}_{L,z} \hat{\beta}_{L,z}}{\hat{\sigma}_{u,L,z}} 
    =
    \frac{
        \sum_{\ell=1}^L s_{\ell,z} \hat{\Delta}_{1,\ell}
    }{
        \sqrt{
            \sum_{\ell=1}^L
            \left(\hat{\Delta}_{1,\ell} + \hat{\beta}_{L,z} \hat{\Delta}_{2,\ell}\right)^2
        }
    }.
\end{gather*}
For a nominal size $\gamma$, define the test
\begin{gather*}
    \phi_{L,z}(\hat{T}_{L,z},\gamma)
    =
    \ind\left\{
        |\hat{T}_{L,z}| \geq z_{1-\gamma/2}
    \right\},
\end{gather*}
where $z_{1-\gamma/2}$ is the $(1-\gamma/2)$ quantile of the standard normal distribution. The asymptotic validity and consistency of the test follow from Theorem~\ref{thm:asympt_normality}, as summarized in the following corollary.

\begin{cor} \label{cor:test_modularity}
    {\normalfont (Test of Modularity within the Tukey Model)}
    Under Assumptions~\ref{ass:error_eta}, \ref{ass:cycles}, \ref{ass:seq_mu}, and~\ref{ass:instruments}, as $I \to \infty$ and $J \to \infty$, the test $\phi_{L,z}(\hat{T}_{L,z},\gamma)$ is asymptotically valid under $H_0 : \beta_0 = 0$:
    \begin{gather*}
        \lim_{L \to \infty}
        \E\left[
            \phi_{L,z}(\hat{T}_{L,z},\gamma)
        \right]
        =
        \gamma.
    \end{gather*}
    Under the alternative $H_1 : \beta_0 \neq 0$, the test is consistent:
    \begin{gather*}
        \lim_{L \to \infty}
        \E\left[
            \phi_{L,z}(\hat{T}_{L,z},\gamma)
        \right]
        =
        1.
    \end{gather*}
\end{cor}

The test controls asymptotic size under modularity, but Theorem~\ref{thm:asympt_normality} establishes consistency only when the interaction function follows the Tukey model. Outside this class, a non-modular interaction function may generate a zero value for the cycle-based parameter targeted by the test, leaving the test with no asymptotic power against that alternative. Thus, $\phi_{L,z}$ tests a necessary but not sufficient condition for modularity. This is analogous to testing independence using a correlation coefficient: nonzero correlation implies dependence, but zero correlation does not imply independence. Accordingly, rejection provides evidence against modularity, whereas failure to reject does not establish that the interaction function is modular.

\subsection{Discussion: Estimating Agent Productivities}
\label{sec:estimation_inference_prod}

Theorem~\ref{thm:tukey_identification} shows that, once $\beta_0$ is known, connectedness of the matching network is sufficient to identify the productivity collections $\boldsymbol{\alpha}$ and $\boldsymbol{\psi}$. Identification, however, does not ensure precise estimation of these high-dimensional parameters when the network is sparse. If each agent is observed in only a small number of matches, the information available about an individual productivity remains limited even as the overall network grows. This difficulty is not specific to the Tukey model. In the TWFE model, consistent estimation of individual fixed effects generally requires graph conditions substantially stronger than those needed for identification and often not satisfied in sparse empirical networks \citep{jochmans2019fixed,kline2020leave}.

Supplementary Material~\ref{sec:sm_prod} outlines an alternating least-squares procedure that uses $\hat{\beta}_{L,z}$ as an input. Establishing the asymptotic properties of the resulting productivity estimators requires additional conditions on the matching network and an analysis of how first-stage estimation error in $\hat{\beta}_{L,z}$ propagates to the second stage. I leave this analysis, including consistency for the individual productivities and relevant functions of them, to future work.

\section{Empirical Illustration} \label{sec:empirical_illustration}

In this section, I revisit the application in \citet{limodio2021bureaucrat}\footnote{The data used for this empirical illustration are publicly available on the author's website.} to illustrate how the interaction parameter in the Tukey model can be estimated in practice and what it reveals about the complementarity structure of a two-sided interaction.

\citet{limodio2021bureaucrat} studies the interaction between managers and tasks in the public sector, focusing on the assignment of World Bank bureaucrats (hereafter, managers) to development projects in low- and middle-income countries. Each manager is responsible for designing a project and supervising its implementation. Project success is measured using ratings produced by the World Bank's Independent Evaluation Group, which assess the extent to which the project's main objectives were achieved. The original analysis models project success using a TWFE specification with a manager-specific ability component $\alpha_i$ and a country-specific component $\psi_j$. The administrative data record manager-country assignments over time, together with project characteristics and evaluations, allowing the matching network to be constructed and the outcome associated with each realized match to be observed.

The combination of assignment data and standardized performance evaluations provides a useful setting for studying the allocation of bureaucrats across tasks. The main result in \citet{limodio2021bureaucrat} is negative sorting: high-performing managers are disproportionately assigned to low-performing countries. Several mechanisms may contribute to this pattern, including the Bank's objective of assigning stronger managers to weaker countries, career incentives and promotion dynamics, demand for specialized skills in more difficult environments, and reallocations following adverse shocks such as natural disasters.

For this illustration, I use a simplified empirical specification without controls. The main specification in \citet{limodio2021bureaucrat} also includes year and sector fixed effects. The estimates below should therefore be interpreted within the simplified specification and are not intended as a direct replication of the original results.

The data contain 3{,}385 projects corresponding to 1{,}876 distinct manager-country pairs. When the same pair is observed in multiple projects, I average the project outcomes to obtain a single observation $y_{ij}$ for that edge. The resulting matching network contains 697 managers, 127 countries, and 1{,}876 edges. Within this network, I select 228 edge-disjoint four-cycles involving 369 managers and 114 countries. The cycles used by the estimator therefore contain approximately half of the network's edges and managers and about 90\% of its countries. These cycles provide the identifying variation used to estimate the Tukey interaction parameter. To aggregate this variation, the cycle statistics must first be assigned a coherent orientation.

I estimate $\beta_0$ using the rank-based estimator $\hat{\beta}_{L,z}$. For managers, I use average project size, measured by the average loan amount of the projects they oversee. For countries, I use the Public Investment Management Index (PIMI) developed by \citet{dabla2012investing}. These ranking variables determine the orientation $s_{\ell,z}$ assigned to each cycle. Their primary justification concerns Assumption~\ref{ass:relevance}, which requires each variable to contain information about the corresponding latent productivity ordering. \citet{limodio2021bureaucrat} documents that more capable managers tend to be assigned larger loans, suggesting that average project size ranks managers in the same direction as $\alpha_i$. The PIMI measures the quality of the institutions governing the appraisal, selection, implementation, and evaluation of public investment projects, making it a natural ranking variable for the country productivity component $\psi_j$. Neither variable must recover the latent ranking perfectly; each must agree with the corresponding productivity ordering more often than it disagrees.

The remaining assumptions concern the pattern and consequences of ranking errors. Assumption~\ref{ass:nonegative} requires errors in the two rankings not to systematically offset one another. The cycle orientation is correct when both ranking variables agree with the corresponding latent rankings or when both reverse them, but it is incorrect when only one ranking is reversed. Assumption~\ref{ass:nolargegaps} requires orientation errors not to be disproportionately concentrated among cycles with the largest productivity-gap products, which contain the most identifying variation. Although these restrictions cannot be verified directly because the productivities are latent, their violation would require unusual patterns in the performance of the ranking variables. Violating Assumption~\ref{ass:nolargegaps}, for example, would require the ranking variables to fail systematically precisely when the underlying productivity differences are largest, even though economically informative proxies would generally be expected to distinguish agents more reliably when those differences are pronounced. Similarly, violating Assumption~\ref{ass:nonegative} would require the accuracy of the manager-side ranking to be systematically negatively associated with the accuracy of the country-side ranking within the selected cycles. Because average loan size and the PIMI have clear economic relationships with the corresponding latent productivity components and are constructed from distinct information, they provide plausible ranking variables for inducing an orientation sequence that satisfies Assumption~\ref{ass:labeling}.

Figure~\ref{fig:estimates} reports the rank-based estimate $\hat{\beta}_{L,z} = -0.196$, together with its nominal 90\% confidence interval, $(-0.293,-0.099)$. The interval excludes zero, and the $p$-value for the null hypothesis $\beta_0 = 0$ is 0.001, providing evidence against the null of no complementarities. Within the Tukey specification, the negative estimate indicates negative complementarities between managers and countries: the relative contribution of a higher-productivity manager is greater in a lower-productivity country. The magnitude is also economically meaningful. Supplementary Material~\ref{sec:interpretation} maps the estimate into a representative decomposition of the average outcome and shows that, under a symmetric benchmark in which managers and countries contribute equally to the additive component, the multiplicative component accounts for approximately 22\% of the gross magnitude of the decomposition.

\begin{figure}[t]
    \centering
    \includegraphics[width=0.65\textwidth]{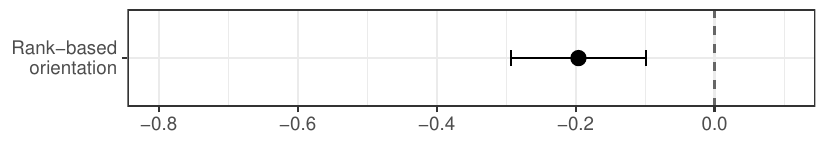}
    \caption{\small Estimate of the interaction parameter under the rank-based orientation rule. The interval is a nominal 90 percent asymptotic confidence interval for $\beta_0$.}
    \label{fig:estimates}
\end{figure}

To illustrate the role of orientation, I recompute the estimator using 50{,}000 independently generated sequences of random signs $s_\ell \in \{-1,1\}$. This exercise provides a descriptive benchmark for orientations that contain no information about the latent productivity rankings. Under random orientations, the signed cycle contributions tend to cancel, so the estimated orientation signal $\hat{\kappa}_{L,s}$ is typically close to zero. Because $\hat{\kappa}_{L,s}$ determines the denominator of the estimator, this cancellation generates highly unstable estimates of $\beta_0$. Figure~\ref{fig:empirical_ill} reports the random-orientation distributions of $\hat{\kappa}_{L,s}$ and $\hat{\beta}_{L,s}$, with the values obtained from the rank-based orientation highlighted in red.

As expected, the random-orientation distribution of $\hat{\kappa}_{L,s}$ is centered near zero, while the corresponding estimates of $\beta_0$ are widely dispersed because many random sequences produce a denominator close to zero. The orientation based on average loan size and the PIMI yields an absolute signal $|\hat{\kappa}_{L,z}|$ larger than approximately 80\% of those generated by random signs. Thus, the rank-based orientation aligns the observed cycle statistics more strongly than a typical uninformed orientation. This diagnostic does not establish the validity of the ranking variables, but it suggests that they provide useful orientation information and limit cancellation of the identifying variation across cycles.

\begin{figure}[t]
    \centering
    \includegraphics[width=0.70\textwidth]{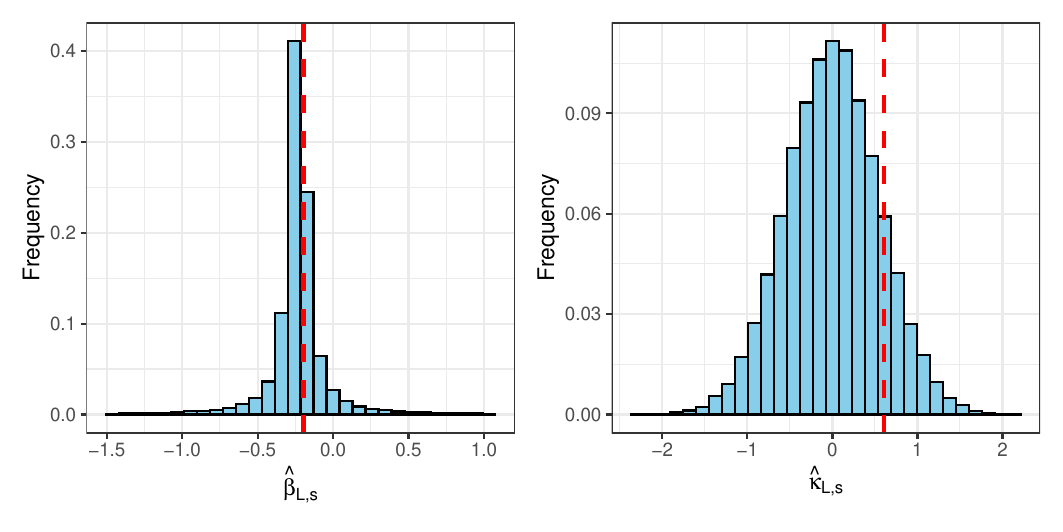}
    \caption{\small Empirical distributions of $\hat{\beta}_{L,s}$ and $\hat{\kappa}_{L,s}$ across random orientation sequences. For the distribution of $\hat{\beta}_{L,s}$, I exclude the 2.8\% of draws outside $(-1.5,1)$ to make the histogram informative. Across all draws, the minimum and maximum estimates are $-413.589$ and $1543.647$.}
    \label{fig:empirical_ill}
\end{figure}

The estimated negative complementarities provide a possible explanation for the negative sorting documented by \citet{limodio2021bureaucrat}. If higher-productivity managers contribute relatively more in lower-productivity countries, assigning stronger managers to more difficult environments may improve project performance. \citet{limodio2021bureaucrat} proposes this mechanism, but it cannot be assessed within the modular TWFE specification. Establishing whether the observed assignment is efficient or increases aggregate performance would require estimates of manager and country productivities under the Tukey model and an analysis of counterfactual assignments.

Overall, the empirical illustration shows that the Tukey model can be implemented in a setting where TWFE is commonly used and can uncover an economically meaningful feature of the interaction that the additive model rules out by construction. It also highlights the practical importance of orientation: the cycle-based estimator is informative only when the selected ranking variables align the identifying variation sufficiently to prevent cancellation across cycles.

\section{Conclusion} \label{sec:conclusion}

This paper studied how complementarities can be identified, estimated, and tested in two-sided interaction models when only a sparse subset of potential matches is observed. The analysis separated the interaction function, which determines how latent agent characteristics jointly generate outcomes, from the matching network, which determines what can be learned about that function from the observed matches. The identification results reveal a fundamental trade-off between these two components: richer interaction functions require more informative network structures. Because the matching network is observed, this trade-off provides a direct way to assess whether the available data can support a proposed model of complementarities.

I centered the analysis on the Tukey model, which extends TWFE by summarizing complementarities with a single interaction parameter. Identification requires only a modest strengthening of the conditions for TWFE: in addition to connectedness, the matching network must contain an informative four-cycle. I developed a cycle-based estimator of the interaction parameter that does not require estimating the latent productivities and remains consistent and asymptotically normal in sparse networks with bounded node degrees. Its limiting distribution provides the basis for a formal test of no complementarities within the Tukey model. The empirical illustration showed that the model can be implemented in a setting where TWFE is commonly used and can reveal an economically meaningful feature of the interaction that the additive model rules out by construction.

The estimation and inference analysis focused on the Tukey interaction parameter. Researchers may also be interested in estimating the productivity collections $\boldsymbol{\alpha}$ and $\boldsymbol{\psi}$ or functions of them, such as sorting measures and variance decompositions. I outline an alternating least-squares procedure for this purpose. Analyzing its properties requires additional conditions on the matching network and an account of first-stage estimation error. The BI framework, by clarifying the distinct roles of the interaction function and the matching network, provides a foundation for studying these and other related questions.

\appendix

\section*{Appendix Roadmap}

Appendix~\ref{appendix:approximation} studies the TWFE approximation to the Tukey model in the noiseless case. Appendix~\ref{appendix:outcome_based_labeling} explains why assigning cycle orientations using realized outcomes is invalid. Appendix~\ref{appendix:lemmas} collects auxiliary lemmas, and Appendix~\ref{appendix:proofs} contains proofs of the results stated in the main text.

\section{TWFE as an Approximation to the Tukey Model} \label{appendix:approximation}

The TWFE model is often interpreted as an approximation, analogous to using the best linear projection to summarize the relationship between two random variables. This Appendix studies the behavior of this approximation when the worker-firm interaction follows the Tukey model.

Let $n_i = \sum_j D_{ij}$ and $m_j = \sum_i D_{ij}$ denote the degrees of worker $i$ and firm $j$, respectively, and let $n = \sum_i n_i$ denote the total number of observed matches. Using this notation, I define the TWFE projections.

\begin{df} \label{def:twfe_projection}
    {\normalfont (TWFE Projection)}
    Under Assumption~\ref{ass:connectedness} and the normalization $\alpha_1 = 0$, the TWFE projections $(\boldsymbol{\alpha}^{twfe},\boldsymbol{\psi}^{twfe})$ are defined by
    \begin{gather*}
        (\boldsymbol{\alpha}^{twfe},\boldsymbol{\psi}^{twfe})
        =
        (C'C)^{-1} C' \boldsymbol{\theta}_{\mathcal{O}},
    \end{gather*}
    where $C \in \R^{n \times (I+J-1)}$ is the design matrix linking the free worker productivities $(\alpha_2,\dots,\alpha_I)$ and the firm productivities $(\psi_1,\dots,\psi_J)$ to the observed matches.
\end{df}

These projections are the coefficients from a TWFE regression in the noiseless model. Each row of $C$ identifies the worker and firm in one observed match, omitting the worker indicator when $i = 1$ because of the normalization. The matrix $C'C$ has block form
\begin{gather*}
    C'C
    =
    \begin{pmatrix}
        \operatorname{diag}(n_2,\dots,n_I) & D \\
        D' & \operatorname{diag}(m_1,\dots,m_J)
    \end{pmatrix},
\end{gather*}
where $D$ is the worker-firm incidence block with the row corresponding to worker 1 removed. Thus, $C'C$ is the reduced signless Laplacian of $G_{IJ}$ and reflects the connectivity structure of the matching network.

Under the TWFE model, $(\boldsymbol{\alpha}^{twfe},\boldsymbol{\psi}^{twfe})$ coincide with the latent productivities $(\boldsymbol{\alpha},\boldsymbol{\psi})$. Under the Tukey model, their bias depends on $C'C$ and therefore on the full matching network, but admits a closed-form characterization.

For each worker $i$ and firm $j$, define the average partner productivities
$\bar{\psi}_i = n_i^{-1}\sum_j D_{ij}\psi_j$ and
$\bar{\alpha}_j = m_j^{-1}\sum_i D_{ij}\alpha_i$.
The next proposition characterizes the TWFE projections under the Tukey model.

\begin{prop} \label{prop:misspecification}
    {\normalfont (Misspecified TWFE)}
    Under the~\ref{eq:tukey}, the normalization $\alpha_1 = 0$, and Assumption~\ref{ass:connectedness}, for $i > 1$ and all $j$,
    \begin{align*}
        \alpha_i^{twfe}
        &=
        \alpha_i
        +
        \beta_0
        \left(
            \sum_{i'=2}^{I}
            \Lambda_{i-1,i'-1} n_{i'} \alpha_{i'} \bar{\psi}_{i'}
            +
            \sum_{j'=1}^{J}
            \Lambda_{i-1,I-1+j'} m_{j'} \psi_{j'} \bar{\alpha}_{j'}
        \right), \\
        \psi_j^{twfe}
        &=
        \psi_j
        +
        \beta_0
        \left(
            \sum_{i'=2}^{I}
            \Lambda_{I-1+j,i'-1} n_{i'} \alpha_{i'} \bar{\psi}_{i'}
            +
            \sum_{j'=1}^{J}
            \Lambda_{I-1+j,I-1+j'} m_{j'} \psi_{j'} \bar{\alpha}_{j'}
        \right),
    \end{align*}
    where $\Lambda = (C'C)^{-1}$, the inverse of the reduced signless Laplacian of $G_{IJ}$.
\end{prop}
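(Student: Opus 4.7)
The proof reduces to a linearity argument. First I would decompose the Tukey outcome vector into additive and multiplicative pieces,
\[
\boldsymbol{\theta}_{\mathcal O} \;=\; \boldsymbol{\theta}^{\text{add}}_{\mathcal O} + \beta_0\,\boldsymbol{\theta}^{\text{mult}}_{\mathcal O},\qquad \theta^{\text{add}}_{ij} = \alpha_i+\psi_j,\quad \theta^{\text{mult}}_{ij} = \alpha_i\psi_j,
\]
and apply the linear projection $P := \Lambda C'$, with $\Lambda = (C'C)^{-1}$. By the standard TWFE identification argument (which uses Assumption~\ref{ass:connectedness} together with the normalization $\alpha_1 = 0$ to guarantee invertibility of $C'C$), the additive piece yields $P\,\boldsymbol{\theta}^{\text{add}}_{\mathcal O} = (\alpha_2,\dots,\alpha_I,\psi_1,\dots,\psi_J)'$. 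The content of the proposition then lies entirely in evaluating the coordinates of $\beta_0\,\Lambda\,C'\boldsymbol{\theta}^{\text{mult}}_{\mathcal O}$.

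The second step is to compute $C'\boldsymbol{\theta}^{\text{mult}}_{\mathcal O}$ entry by entry. Because each row of $C$ indicator-codes exactly one observed match, with a unit in the worker block (when $i\neq 1$) and a unit in the firm block, the worker-block entry corresponding to $i\ge 2$ is
\[
\sum_{j}D_{ij}\,\alpha_i\psi_j \;=\; \alpha_i\cdot n_i\,\bar\psi_i,
\]
while the firm-block entry corresponding to $j$ is $\psi_j\cdot m_j\,\bar\alpha_j$. These are precisely the scalar weights multiplying $\Lambda_{u,v}$ in the statement.

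Left-multiplying by $\Lambda$ and reading off the row corresponding to worker $i$ (respectively firm $j$), then adding back the additive contribution $\alpha_i$ (respectively $\psi_j$), yields the two claimed expressions. The only subtlety is bookkeeping: workers occupy the first $I-1$ coordinates of $\Lambda$ (with $\alpha_1$ suppressed by normalization) while firms occupy the remaining $J$, which is the source of the shift $I-1+j'$ in the mixed worker-firm entries. The term corresponding to $i'=1$ in the worker sum is proportional to $\alpha_1 = 0$ and can be omitted, consistent with the sum starting at $i'=2$. No step beyond linear algebra and careful indexing is needed, so the main obstacle is purely notational rather than conceptual.
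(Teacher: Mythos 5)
Your proposal is correct and follows essentially the same route as the paper's proof: write $\boldsymbol{\theta}_{\mathcal O} = Cc + \beta_0 h$ with $h$ the stacked interaction terms, note that $(C'C)^{-1}C'Cc = c$ recovers the additive part exactly, and compute the entries of $C'h$ as $n_i\alpha_i\bar\psi_i$ and $m_j\psi_j\bar\alpha_j$ before left-multiplying by $\Lambda = (C'C)^{-1}$. The indexing remarks about the suppressed $\alpha_1$ column and the $I-1+j'$ shift match the paper's bookkeeping as well.
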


\begin{proof}
    Let $A \in \mathbb{R}^{IJ \times I}$ and $B \in \mathbb{R}^{IJ \times J}$ be the worker and firm indicator matrices, let $Y \in \mathbb{R}^{IJ}$ stack the $\theta_{ij}$, and let $S \in \mathbb{R}^{n \times IJ}$ select the observed matches. After imposing $\alpha_1 = 0$, define
    \begin{gather*}
        C \coloneqq [SA_{-1}\;\;SB],
        \qquad
        c \coloneqq
        [\alpha_2,\dots,\alpha_I,\psi_1,\dots,\psi_J]',
    \end{gather*}
    where $A_{-1}$ omits the first column of $A$. Then $\boldsymbol{\theta}_{\mathcal{O}} = SY$ and
    \begin{gather*}
        c^{twfe}
        \coloneqq
        [\alpha_2^{twfe},\dots,\alpha_I^{twfe},
        \psi_1^{twfe},\dots,\psi_J^{twfe}]'
        =
        (C'C)^{-1}C'\boldsymbol{\theta}_{\mathcal{O}}.
    \end{gather*}

    Let $a \coloneqq SA\boldsymbol{\alpha}$, $p \coloneqq SB\boldsymbol{\psi}$, and $h \coloneqq a \odot p$. Under the~\ref{eq:tukey},
    \begin{gather*}
        \boldsymbol{\theta}_{\mathcal{O}}
        =
        Cc+\beta_0 h,
        \qquad
        c^{twfe}
        =
        c+\beta_0(C'C)^{-1}C'h.
    \end{gather*}
    The first $I-1$ entries of $C'h$ are
    \begin{gather*}
        \sum_j D_{ij}\alpha_i\psi_j
        =
        n_i\alpha_i\bar{\psi}_i,
        \qquad i=2,\dots,I,
    \end{gather*}
    while its remaining $J$ entries are
    \begin{gather*}
        \sum_i D_{ij}\alpha_i\psi_j
        =
        m_j\psi_j\bar{\alpha}_j,
        \qquad j=1,\dots,J.
    \end{gather*}
    Substituting these expressions and $\Lambda=(C'C)^{-1}$ yields the result.
\end{proof}

The matrix $\Lambda=(C'C)^{-1}$, the inverse of the reduced signless Laplacian, encodes the global connectivity of the matching network. Its entries determine how the omitted interaction components associated with each worker and firm enter every coefficient of the TWFE projection. This interpretation is preferable to viewing $\Lambda_{u,v}$ as a simple measure of proximity, since its magnitude and sign need not depend monotonically on path length or node degree.

The bias in Proposition~\ref{prop:misspecification} is linear in $\beta_0$ and vanishes when $\beta_0=0$. The terms $n_{i'}\alpha_{i'}\bar{\psi}_{i'}$ and $m_{j'}\psi_{j'}\bar{\alpha}_{j'}$ are node-level sums of the multiplicative component omitted by TWFE, while the weights $\Lambda_{u,v}$ distribute these terms across the projected worker and firm effects. Consequently, each coefficient's bias depends on the entire matching network rather than only on local degrees or partners. Closed-form expressions for $\Lambda$ are available for some graph families \citep{hessert2021moore}, but in general it must be computed from the observed network.

\subsection{Bias in Sorting Measures}

Because the bias at each node depends on the full pattern of observed matches, workers with identical productivities may have different TWFE projections. The distortion can even reverse rankings: $\alpha_i > \alpha_{i'}$ may coexist with $\alpha_i^{twfe} < \alpha_{i'}^{twfe}$. This contrasts with the balanced-panel case, in which the TWFE projections preserve ordinal information up to a global sign.

These distortions also affect sorting measures based on the projected effects. A common measure is the correlation between $\alpha_i^{twfe}$ and $\psi_j^{twfe}$ across observed matches, with positive values interpreted as positive sorting. Proposition~\ref{prop:misspecification} can be used to construct cases in which this correlation differs sharply from its counterpart based on the latent productivities. In the example below, the latent productivities have a correlation of approximately $0.3$ across matches, whereas the correlation based on the TWFE projections is close to zero when $\beta_0 = 3$. Thus, despite positive sorting and a supermodular interaction function, the TWFE approximation suggests the absence of sorting.

\begin{eg} \label{example:correlation}
    {\normalfont (Sorting Pattern)}
    Consider the matching network with $I=J=3$ and edges
    $\mathcal{O}_{IJ}
    =
    \{(i_1,j_1), (i_1,j_2), (i_1,j_3), (i_2,j_2), (i_2,j_3),(i_3,j_3)\}$,
    shown in Figure~\ref{fig:bias}. Let the worker productivities be
    $(\alpha_1,\alpha_2,\alpha_3)=(4,5,2)$ and the firm productivities be
    $(\psi_1,\psi_2,\psi_3)=(10,8,1)$.

    \begin{figure}[ht]
        \centering
        \begin{minipage}{0.45\textwidth}
            \centering
            \begin{tikzpicture}[
                scale=0.8,
                >=stealth,
                font=\small,
                pointI/.style={circle,fill=purple,draw=black,inner sep=3pt},
                pointJ/.style={circle,fill=green,draw=black,inner sep=3pt}
            ]
                \foreach \y/\lab in {1/1,0/2,-1/3}{
                    \node[pointI] (I\lab) at (0,\y) {};
                    \node[pointJ] (J\lab) at (4,\y) {};
                }
                \node[left=4pt of I1] {$i_1$};
                \node[left=4pt of I2] {$i_2$};
                \node[left=4pt of I3] {$i_3$};
                \node[right=4pt of J1] {$j_1$};
                \node[right=4pt of J2] {$j_2$};
                \node[right=4pt of J3] {$j_3$};
                \draw (I1) -- (J1);
                \draw (I1) -- (J2);
                \draw (I1) -- (J3);
                \draw (I2) -- (J2);
                \draw (I2) -- (J3);
                \draw (I3) -- (J3);
            \end{tikzpicture}
        \end{minipage}
        \hfill
        \begin{minipage}{0.45\textwidth}
            \centering
            \includegraphics[width=\linewidth]{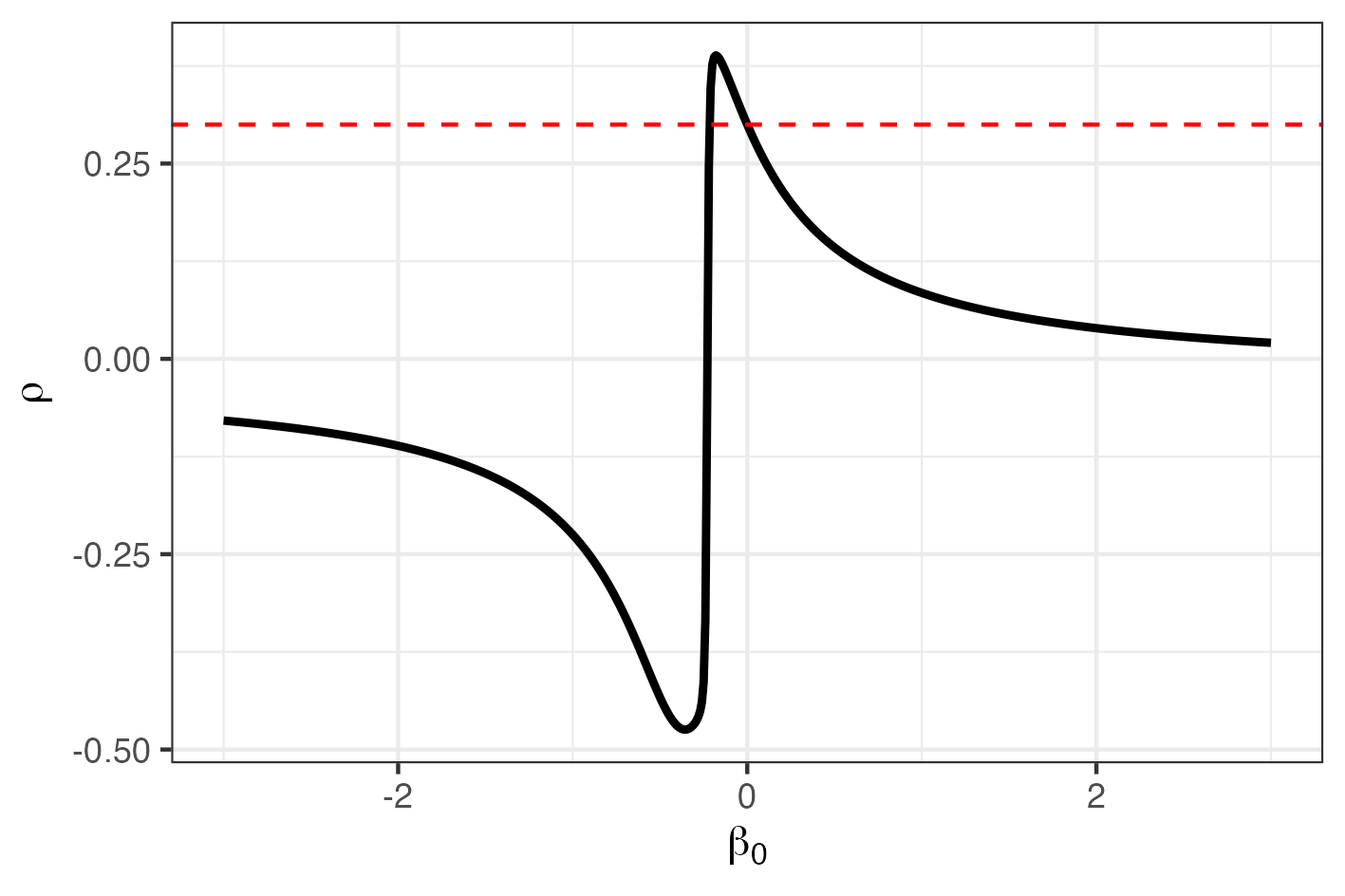}
        \end{minipage}
        \caption{\small Left: matching network in Example~\ref{example:correlation}. Right: correlation $\rho$ between the worker and firm TWFE projections across observed matches as a function of $\beta_0$.}
        \label{fig:bias}
    \end{figure}

    After imposing the observationally equivalent normalization $\alpha_1=0$, Proposition~\ref{prop:misspecification} gives the TWFE projections. Figure~\ref{fig:bias} reports their correlation across observed matches as a function of $\beta_0$. The correlation between the latent productivities is approximately $0.3$ and is recovered when $\beta_0=0$, but the correlation between the projections can be close to zero or negative. In particular, when $\beta_0=3$, the interaction function is supermodular and the projected correlation is only $0.02$, as the misspecification bias nearly offsets the underlying positive sorting.
\end{eg}

Bias in the covariance also distorts variance decompositions, making their components difficult to interpret. More generally, because TWFE projections may misrank agents, two-stage procedures that use them as dependent variables in subsequent regressions can also be misleading.

Example~\ref{example:correlation} hence demonstrates that approximating the Tukey model with the TWFE specification, even in the noiseless case, produces parameters of limited informational value once complementarities are non-negligible. This underscores the importance of methods that allow $f$ to feature complementarities. The Tukey model provides the minimal extension of TWFE required to capture them, offering a simple yet flexible structure for analysis.

\section{Invalidity of Outcome-Based Orientation}
\label{appendix:outcome_based_labeling}

The main text uses external ranking variables to orient the cycles. A natural alternative is to construct the orientation directly from observed outcomes. For each cycle $\ell$, define
\begin{gather*}
    d_{\ell,y}^{\alpha}
    \coloneqq
    y_{i_{\ell 1}j_{\ell 1}}
    +
    y_{i_{\ell 1}j_{\ell 2}}
    -
    y_{i_{\ell 2}j_{\ell 1}}
    -
    y_{i_{\ell 2}j_{\ell 2}}, \qquad
    d_{\ell,y}^{\psi}
    \coloneqq
    y_{i_{\ell 1}j_{\ell 1}}
    +
    y_{i_{\ell 2}j_{\ell 1}}
    -
    y_{i_{\ell 1}j_{\ell 2}}
    -
    y_{i_{\ell 2}j_{\ell 2}}.
\end{gather*}
and let
\begin{gather*}
    s_{\ell,y}
    \coloneqq
    \operatorname{sign}
    \left(
        d_{\ell,y}^{\alpha}d_{\ell,y}^{\psi}
    \right).
\end{gather*}
The resulting estimator is
\begin{gather*}
    \hat{\beta}_{L,y}
    \coloneqq
    -
    \frac{
        \sum_{\ell=1}^L s_{\ell,y}\hat{\Delta}_{1,\ell}
    }{
        \sum_{\ell=1}^L s_{\ell,y}\hat{\Delta}_{2,\ell}
    }.
\end{gather*}
This rule violates Assumption~\ref{ass:labeling} because $s_{\ell,y}$ depends on the same realization errors that enter the cycle statistics. The following example shows that the resulting inconsistency can persist asymptotically.

\begin{eg}
{\normalfont (Outcome-Based Orientation)}
Suppose that $\beta_0=0$ and, for every selected cycle,
\begin{gather*}
    \alpha_{i_{\ell 1}}
    =
    \psi_{j_{\ell 1}}
    =
    2,
    \qquad
    \alpha_{i_{\ell 2}}
    =
    \psi_{j_{\ell 2}}
    =
    1,
    \qquad
    \eta_{i_{\ell 2}j_{\ell 1}}
    =
    \eta_{i_{\ell 2}j_{\ell 2}}
    =
    0.
\end{gather*}
Assume that the remaining errors are mutually independent and independent across cycles. If
\begin{gather*}
    \eta_{i_{\ell 1}j_{\ell 1}}
    =
    \begin{cases}
        5,  & \text{with probability } 1/2, \\
        -5, & \text{with probability } 1/2,
    \end{cases}
    \qquad
    \eta_{i_{\ell 1}j_{\ell 2}}
    =
    \begin{cases}
        4,   & \text{with probability } 5/6, \\
        -20, & \text{with probability } 1/6,
    \end{cases}
\end{gather*}
then both errors have mean zero, but
\begin{gather*}
    \hat{\beta}_{L,y}
    \xrightarrow{\mathrm{a.s.}}
    \frac{5}{9}.
\end{gather*}
Alternatively, if
\begin{gather*}
    \eta_{i_{\ell 1}j_{\ell 1}}
    =
    \begin{cases}
        1,  & \text{with probability } 3/4, \\
        -3, & \text{with probability } 1/4,
    \end{cases}
    \qquad
    \eta_{i_{\ell 1}j_{\ell 2}}
    =
    \begin{cases}
        2,  & \text{with probability } 1/2, \\
        -2, & \text{with probability } 1/2,
    \end{cases}
\end{gather*}
then
\begin{gather*}
    \hat{\beta}_{L,y}
    \xrightarrow{\mathrm{a.s.}}
    -\frac{3}{5}.
\end{gather*}
\end{eg}

Thus, even under the null of no complementarities and with independent mean-zero errors, outcome-based orientation can produce a nonzero probability limit of either sign. A valid orientation rule must therefore be independent of the realization errors entering the cycle statistics.

\section{Auxiliary Lemmas} \label{appendix:lemmas}

\begin{lemma} \label{lemma:error_epsilon}
    {\normalfont (Regularity of Cycle-Based Error Terms)}
    Under Assumptions~\ref{ass:error_eta} and~\ref{ass:labeling}, the oriented errors $s_\ell\epsilon_{1,\ell}$ and $s_\ell\epsilon_{2,\ell}$ have mean zero and uniformly bounded $(2+\delta)$ moments. Moreover, for $k\in\{1,2\}$,
    \begin{gather*}
        \Var(s_\ell\epsilon_{k,\ell})
        =
        \Var(\epsilon_{k,\ell}).
    \end{gather*}
\end{lemma}

\begin{proof}
    The errors $\epsilon_{1,\ell}$ and $\epsilon_{2,\ell}$ are defined in Equations~\ref{eq:epsilon1} and~\ref{eq:epsilon2}. Compactness of the parameter spaces implies that the deterministic components $\theta_{ij}$ are uniformly bounded. Assumption~\ref{ass:error_eta} and independence of the four outcome errors within each cycle then imply that $\epsilon_{1,\ell}$ and $\epsilon_{2,\ell}$ have mean zero and uniformly bounded $(2+\delta)$ moments.

    By Assumption~\ref{ass:labeling}, $s_\ell$ is either deterministic or independent of the outcome errors. Because $s_\ell\in\{-1,1\}$, for $k\in\{1,2\}$,
    \begin{gather*}
        \E[s_\ell\epsilon_{k,\ell}]
        =
        0,
        \qquad
        |s_\ell\epsilon_{k,\ell}|
        =
        |\epsilon_{k,\ell}|.
    \end{gather*}
    The moment bounds follow immediately, and
    \begin{gather*}
        \Var(s_\ell\epsilon_{k,\ell})
        =
        \E[s_\ell^2\epsilon_{k,\ell}^2]
        =
        \E[\epsilon_{k,\ell}^2]
        =
        \Var(\epsilon_{k,\ell}).
    \end{gather*}
\end{proof}

\begin{lemma} \label{lemma:error_u}
    {\normalfont (Properties of the Composite Error)}
    Under Assumptions~\ref{ass:error_eta} and~\ref{ass:labeling}, the oriented composite error $u_{\ell,s}=s_\ell u_\ell$ has mean zero and uniformly bounded $(2+\delta)$ moments. Moreover, $\Var(u_{\ell,s})=\Var(u_\ell)$, and the average variance $\sigma_{u,L}^2=L^{-1}\sum_{\ell=1}^L\Var(u_\ell)$ is bounded away from zero.
\end{lemma}

\begin{proof}
    Recall that $u_\ell=\epsilon_{1,\ell}+\beta_0\epsilon_{2,\ell}$. Lemma~\ref{lemma:error_epsilon}, together with compactness of the parameter space for $\beta_0$, implies that $u_\ell$ has mean zero and uniformly bounded $(2+\delta)$ moments. By Assumption~\ref{ass:labeling}, $s_\ell$ is either deterministic or independent of the outcome errors. Since $s_\ell^2=1$, $\E[u_{\ell,s}]=0$, $\E[|u_{\ell,s}|^{2+\delta}]=\E[|u_\ell|^{2+\delta}]$, and $\Var(u_{\ell,s})=\Var(u_\ell)$. The lower bound on $\sigma_{u,L}^2$ follows from the nondegeneracy condition in Assumption~\ref{ass:error_eta}.
\end{proof}

\section{Proofs of Main Results} \label{appendix:proofs}

\subsection{Proof of Theorem~\ref{thm:beta_identification}} \label{proof:beta_identification}

Consider a cycle of length $2K$, with $K\geq2$, and denote its outcomes in traversal order by $\theta^{(1)},\dots,\theta^{(2K)}$. Let the odd-indexed edges be $(i_k,j_k)$ and the even-indexed edges be $(i_{k+1},j_k)$, where $i_{K+1}=i_1$. Define $u_i\coloneqq1+\beta_0\alpha_i$ and $v_j\coloneqq1+\beta_0\psi_j$. Under the~\ref{eq:tukey},
\begin{gather*}
    1+\beta_0\theta_{ij}
    =
    (1+\beta_0\alpha_i)(1+\beta_0\psi_j)
    =
    u_iv_j.
\end{gather*}
Multiplying over the odd and even edges yields
\begin{gather*}
    \prod_{k=1}^{K}
    \bigl(1+\beta_0\theta^{(2k-1)}\bigr)
    =
    \prod_{k=1}^{K}u_{i_k}v_{j_k}, \qquad
    \text{and} \qquad
    \prod_{k=1}^{K}
    \bigl(1+\beta_0\theta^{(2k)}\bigr)
    =
    \prod_{k=1}^{K}u_{i_{k+1}}v_{j_k}.
\end{gather*}
The products on the right contain the same worker and firm factors. Hence,
\begin{gather*}
    \prod_{k=1}^{K}
    \bigl(1+\beta_0\theta^{(2k-1)}\bigr)
    =
    \prod_{k=1}^{K}
    \bigl(1+\beta_0\theta^{(2k)}\bigr).
\end{gather*}

A term of degree $r$ in either product is obtained by selecting the outcome term from exactly $r$ of its $K$ factors. Define
\begin{align*}
    S_r^{\mathrm{odd}}
    &\coloneqq
    \sum_{1\leq k_1<\cdots<k_r\leq K}
    \prod_{q=1}^{r}\theta^{(2k_q-1)},
    \qquad
    S_r^{\mathrm{even}}
    \coloneqq
    \sum_{1\leq k_1<\cdots<k_r\leq K}
    \prod_{q=1}^{r}\theta^{(2k_q)},
\end{align*}
with $S_0^{\mathrm{odd}}=S_0^{\mathrm{even}}=1$. By the distributive law,
\begin{gather*}
    \prod_{k=1}^{K}
    \bigl(1+\beta_0\theta^{(2k-1)}\bigr)
    =
    \sum_{r=0}^{K}S_r^{\mathrm{odd}}\beta_0^r, \qquad
    \text{and} \qquad
    \prod_{k=1}^{K}
    \bigl(1+\beta_0\theta^{(2k)}\bigr)
    =
    \sum_{r=0}^{K}S_r^{\mathrm{even}}\beta_0^r.
\end{gather*}
Let $\Delta_r\coloneqq S_r^{\mathrm{odd}}-S_r^{\mathrm{even}}$. Canceling the common constant term gives
\begin{align*}
    0
    =
    \sum_{r=1}^{K}\Delta_r\beta_0^r
    =
    \beta_0
    \left(
        \Delta_1+\Delta_2\beta_0+\cdots+\Delta_K\beta_0^{K-1}
    \right).
\end{align*}
The first factor gives a mechanical root. The interaction parameter must therefore satisfy
\begin{gather*}
    Q(\beta_0)
    \coloneqq
    \Delta_1+\Delta_2\beta_0+\cdots+\Delta_K\beta_0^{K-1}
    =
    0.
\end{gather*}

Any nonzero admissible root $b$ of $Q$ generates productivities consistent with the cycle outcomes. Fix $u_{i_1}(b)=1$ and recover the remaining transformed productivities recursively from
\begin{gather*}
    v_{j_k}(b)
    =
    \frac{1+b\theta^{(2k-1)}}{u_{i_k}(b)},
    \qquad
    u_{i_{k+1}}(b)
    =
    \frac{1+b\theta^{(2k)}}{v_{j_k}(b)}.
\end{gather*}
The cycle restriction ensures that the recursion closes at $i_1$. The corresponding productivities are $\alpha_i(b)=(u_i(b)-1)/b$ and $\psi_j(b)=(v_j(b)-1)/b$. Productivities on trees attached to the cycle can be recovered in the same way.

For a four-cycle, $K=2$ and
\begin{gather*}
    \Delta_1
    =
    \theta_{i_1j_1}+\theta_{i_2j_2}
    -\theta_{i_1j_2}-\theta_{i_2j_1}, \qquad
    \text{and} \qquad
    \Delta_2
    =
    \theta_{i_1j_1}\theta_{i_2j_2}
    -\theta_{i_1j_2}\theta_{i_2j_1}.
\end{gather*}
Under the~\ref{eq:tukey},
\begin{gather*}
    \Delta_1
    =
    \beta_0
    (\alpha_{i_1}-\alpha_{i_2})
    (\psi_{j_1}-\psi_{j_2}), \qquad
    \text{and} \qquad
    \Delta_2
    =
    -
    (\alpha_{i_1}-\alpha_{i_2})
    (\psi_{j_1}-\psi_{j_2}).
\end{gather*}
Assumption~\ref{ass:informative_cycle} implies $\Delta_2\neq0$, so
\begin{gather*}
    \beta_0
    =
    -
    \frac{
        \theta_{i_1j_1}+\theta_{i_2j_2}
        -\theta_{i_1j_2}-\theta_{i_2j_1}
    }{
        \theta_{i_1j_1}\theta_{i_2j_2}
        -\theta_{i_1j_2}\theta_{i_2j_1}
    }.
\end{gather*}
Thus, an informative four-cycle identifies $\beta_0$.

It remains to establish necessity when $G_{IJ}$ contains at most one cycle. If $G_{IJ}$ is a forest, fix a candidate value $b$ and one productivity in each component. The remaining productivities can be recovered recursively along the unique paths of the component. Since there is no closed path, this procedure imposes no restriction on $b$, and $\beta_0$ is not identified.

If the unique cycle is a noninformative four-cycle, then $\Delta_2=0$ and, under the Tukey model, $\Delta_1=-\beta_0\Delta_2=0$. Hence, the cycle imposes no restriction on $\beta_0$. If instead the unique cycle has length $2K$ with $K>2$, the restriction $Q(b)=0$ is a polynomial of degree $K-1$ that can have multiple admissible roots. Because point identification requires uniqueness for every admissible parameter value, a unique longer cycle does not point-identify $\beta_0$ over the parameter space. This argument implicitly assumes that the compact parameter spaces $A$, $\Psi$, and $B$ are sufficiently rich that multiple algebraic roots and their recursively generated productivities are admissible. Under highly restricted parameter spaces, a longer cycle could trivially achieve point identification by ruling out all but one root.

Therefore, when $G_{IJ}$ contains at most one cycle, an informative four-cycle is necessary and sufficient to identify $\beta_0$.

\subsection{Proof of Theorem~\ref{thm:tukey_identification}}

The theorem excludes parameter values for which $1+\beta_0\alpha_i=0$ for some worker $i$ or $1+\beta_0\psi_j=0$ for some firm $j$. At such values, all transformed outcomes incident to that node are zero, preventing the propagation argument below.

Suppose first that $G_{IJ}$ is connected. When $\beta_0=0$, the model reduces to TWFE, and connectedness identifies the productivity collections under the normalization $\alpha_{i_0}=0$. Now let $\beta_0\neq0$ and define
\begin{gather*}
    \alpha_i' \coloneqq 1+\beta_0\alpha_i,
    \qquad
    \psi_j' \coloneqq 1+\beta_0\psi_j,
    \qquad
    \theta_{ij}' \coloneqq 1+\beta_0\theta_{ij}.
\end{gather*}
Under the~\ref{eq:tukey},
\begin{gather*}
    \theta_{ij}'
    =
    \alpha_i'\psi_j'.
\end{gather*}
Because $\alpha_{i_0}=0$, the normalization implies $\alpha_{i_0}'=1$. Hence, for every firm $j$ matched with $i_0$, $\psi_j'=\theta_{i_0j}'$ is identified. For every worker $i$ matched with one of these firms, $\alpha_i'=\theta_{ij}'/\psi_j'$ is then identified. Repeating this argument along paths from $i_0$ identifies all transformed worker and firm productivities because the graph is connected and all transformed productivities are nonzero. The original productivities follow from
\begin{gather*}
    \alpha_i
    =
    \frac{\alpha_i'-1}{\beta_0},
    \qquad
    \psi_j
    =
    \frac{\psi_j'-1}{\beta_0}.
\end{gather*}
Thus, connectedness is sufficient.

Conversely, suppose that $G_{IJ}$ is disconnected. When $\beta_0=0$, adding a constant to all worker productivities in one component and subtracting it from all firm productivities in that component leaves its outcomes unchanged. For $\beta_0\neq0$, consider any component not containing the normalized worker $i_0$. For any nonzero constant $c$, replacing
\begin{gather*}
    \alpha_i'
    \quad\text{with}\quad
    c\alpha_i',
    \qquad
    \psi_j'
    \quad\text{with}\quad
    c^{-1}\psi_j'
\end{gather*}
for every worker and firm in that component leaves all products $\alpha_i'\psi_j'$ unchanged. It therefore leaves all observed outcomes unchanged while generating different productivity collections. Hence, disconnectedness prevents joint identification, and connectedness is necessary.

\subsection{Proof of Theorem~\ref{thm:consistency}}

Using $\Delta_{1,\ell}=\beta_0A_\ell$, $\Delta_{2,\ell}=-A_\ell$, and $u_{\ell,s}=s_\ell(\epsilon_{1,\ell}+\beta_0\epsilon_{2,\ell})$,
\begin{align*}
    \hat{\beta}_{L,s}-\beta_0
    &=
    \frac{
        L^{-1}\sum_{\ell=1}^L u_{\ell,s}
    }{
        \kappa_{L,s}
        -
        L^{-1}\sum_{\ell=1}^L s_\ell\epsilon_{2,\ell}
    }.
\end{align*}

When the orientation sequence is random, condition on its realization. Because the four-cycles are edge-disjoint, the cycle errors are independent across $\ell$. Lemmas~\ref{lemma:error_epsilon} and~\ref{lemma:error_u} imply that $s_\ell\epsilon_{2,\ell}$ and $u_{\ell,s}$ have mean zero and uniformly bounded $(2+\delta)$ moments. The strong law of large numbers for independent triangular arrays therefore gives
\begin{gather*}
    \frac{1}{L}\sum_{\ell=1}^L u_{\ell,s}
    \xrightarrow{\mathrm{a.s.}}
    0,
    \qquad
    \frac{1}{L}\sum_{\ell=1}^L s_\ell\epsilon_{2,\ell}
    \xrightarrow{\mathrm{a.s.}}
    0.
\end{gather*}

By Assumption~\ref{ass:labeling}, $|\kappa_{L,s}|>C_\kappa$ almost surely for all sufficiently large $L$. Hence, the denominator is bounded away from zero almost surely, and
\begin{gather*}
    \hat{\beta}_{L,s}
    \xrightarrow{\mathrm{a.s.}}
    \beta_0.
\end{gather*}

\subsection{Proof of Theorem~\ref{thm:asympt_normality}}

By the definition of $\hat{\beta}_{L,s}$ and the identity
$\Delta_{1,\ell}+\beta_0\Delta_{2,\ell}=0$,
\begin{align*}
    0
    &=
    \frac{1}{L}
    \sum_{\ell=1}^L
    s_\ell
    \left(
        \hat{\Delta}_{1,\ell}
        +
        \hat{\beta}_{L,s}\hat{\Delta}_{2,\ell}
    \right) \\
    &=
    \frac{1}{L}
    \sum_{\ell=1}^L
    u_{\ell,s}
    +
    \left(
        \hat{\beta}_{L,s}-\beta_0
    \right)
    \frac{1}{L}
    \sum_{\ell=1}^L
    s_\ell\hat{\Delta}_{2,\ell}.
\end{align*}
Therefore,
\begin{gather*}
    \sqrt{L}
    \left(
        \hat{\beta}_{L,s}-\beta_0
    \right)
    =
    \frac{
        L^{-1/2}
        \sum_{\ell=1}^L
        u_{\ell,s}
    }{
        \hat{\kappa}_{L,s}
    }.
\end{gather*}

When the orientation sequence is random, condition on its realization. Because the four-cycles are edge-disjoint, the variables $u_{\ell,s}$ are independent across $\ell$. By Lemma~\ref{lemma:error_u}, they have mean zero, uniformly bounded $(2+\delta)$ moments, and average variance $\sigma_{u,L}^2$ bounded away from zero. The Lyapunov central limit theorem for triangular arrays then gives
\begin{gather*}
    \frac{
        L^{-1/2}
        \sum_{\ell=1}^L
        u_{\ell,s}
    }{
        \sigma_{u,L}
    }
    \xrightarrow{d}
    \mathcal{N}(0,1).
\end{gather*}

Moreover, by Lemma~\ref{lemma:error_epsilon},
\begin{gather*}
    \hat{\kappa}_{L,s}
    =
    \kappa_{L,s}
    -
    \frac{1}{L}
    \sum_{\ell=1}^L
    s_\ell\epsilon_{2,\ell}
    =
    \kappa_{L,s}
    +
    o_{\mathrm{a.s.}}(1).
\end{gather*}
Since $|\kappa_{L,s}|$ is bounded away from zero by Assumption~\ref{ass:labeling},
\begin{gather*}
    \frac{|\kappa_{L,s}|}{\hat{\kappa}_{L,s}}
    =
    \operatorname{sign}(\kappa_{L,s})
    +
    o_{\mathrm{a.s.}}(1).
\end{gather*}
Hence,
\begin{align*}
    \frac{
        \sqrt{L}
        \left(
            \hat{\beta}_{L,s}-\beta_0
        \right)
    }{
        \sigma_{u,L}/|\kappa_{L,s}|
    }
    &=
    \frac{
        L^{-1/2}
        \sum_{\ell=1}^L
        u_{\ell,s}
    }{
        \sigma_{u,L}
    }
    \frac{|\kappa_{L,s}|}{\hat{\kappa}_{L,s}}
    \xrightarrow{d}
    \mathcal{N}(0,1),
\end{align*}
because multiplication by the fixed sign
$\operatorname{sign}(\kappa_{L,s})\in\{-1,1\}$ does not change the standard normal distribution.

\subsection{Proof of Proposition~\ref{prop:variance_estimator}}

Define $\hat{u}_{\ell,s}\coloneqq s_\ell(\hat{\Delta}_{1,\ell}+\hat{\beta}_{L,s}\hat{\Delta}_{2,\ell})$. Since $s_\ell^2=1$,
\begin{gather*}
    \hat{\sigma}_{u,L,s}^2
    =
    \frac{1}{L}
    \sum_{\ell=1}^L
    \hat{u}_{\ell,s}^2.
\end{gather*}
Using $\Delta_{1,\ell}+\beta_0\Delta_{2,\ell}=0$,
\begin{gather*}
    \hat{u}_{\ell,s}
    =
    u_{\ell,s}
    +
    (\hat{\beta}_{L,s}-\beta_0)
    s_\ell\hat{\Delta}_{2,\ell}.
\end{gather*}
Therefore,
\begin{align*}
    \hat{\sigma}_{u,L,s}^2-\sigma_{u,L}^2
    &=
    \frac{1}{L}
    \sum_{\ell=1}^L
    \left(
        u_{\ell,s}^2-\E[u_{\ell,s}^2]
    \right) \\
    &\quad
    +
    2(\hat{\beta}_{L,s}-\beta_0)
    \frac{1}{L}
    \sum_{\ell=1}^L
    u_\ell\hat{\Delta}_{2,\ell}
    +
    (\hat{\beta}_{L,s}-\beta_0)^2
    \frac{1}{L}
    \sum_{\ell=1}^L
    \hat{\Delta}_{2,\ell}^2,
\end{align*}
where $u_{\ell,s}s_\ell=u_\ell$.

By Lemma~\ref{lemma:error_u}, the variables
$u_{\ell,s}^2-\E[u_{\ell,s}^2]$ are independent, mean zero, and have uniformly bounded $(1+\delta/2)$ moments. The weak law of large numbers for independent triangular arrays therefore gives
\begin{gather*}
    \frac{1}{L}
    \sum_{\ell=1}^L
    \left(
        u_{\ell,s}^2-\E[u_{\ell,s}^2]
    \right)
    \xrightarrow{p}
    0.
\end{gather*}
The uniform moment bounds also imply
\begin{gather*}
    \frac{1}{L}\sum_{\ell=1}^L u_\ell^2
    =
    O_p(1),
    \qquad
    \frac{1}{L}\sum_{\ell=1}^L \hat{\Delta}_{2,\ell}^2
    =
    O_p(1).
\end{gather*}
By the Cauchy--Schwarz inequality,
\begin{gather*}
    \frac{1}{L}
    \sum_{\ell=1}^L
    u_\ell\hat{\Delta}_{2,\ell}
    =
    O_p(1).
\end{gather*}
Finally, Theorem~\ref{thm:consistency} gives
$\hat{\beta}_{L,s}-\beta_0=o_p(1)$. Hence, the last two terms are $o_p(1)$, and
\begin{gather*}
    \hat{\sigma}_{u,L,s}^2-\sigma_{u,L}^2
    \xrightarrow{p}
    0.
\end{gather*}

\subsection{Proof of Proposition~\ref{prop:instrument_validity}}

Because the ranking variables and the reference ordering are treated as nonrandom, the rank-based orientation sequence $\boldsymbol{s}_{L,z}$ is deterministic and therefore automatically satisfies the exogeneity requirement in Assumption~\ref{ass:labeling}.

By construction, $s_{\ell,z}A_\ell=r_{\ell,z}|A_\ell|$, so
\begin{gather*}
    \kappa_{L,z}
    =
    \frac{1}{L}
    \sum_{\ell=1}^L
    |A_\ell|r_{\ell,z}.
\end{gather*}
Moreover,
\begin{gather*}
    \bar{r}_L
    =
    \frac{1}{L}
    \sum_{\ell=1}^L
    r_{\ell,z}^{\alpha}r_{\ell,z}^{\psi}
    =
    \frac{1}{L}
    \sum_{\ell=1}^L
    \left(
        r_{\ell,z}^{\alpha}-\bar{r}_L^\alpha
    \right)
    \left(
        r_{\ell,z}^{\psi}-\bar{r}_L^\psi
    \right)
    +
    \bar{r}_L^\alpha\bar{r}_L^\psi.
\end{gather*}
Assumptions~\ref{ass:relevance} and~\ref{ass:nonegative} therefore imply
\begin{gather*}
    \bar{r}_L
    \geq
    c_\alpha c_\psi.
\end{gather*}

Let $\mu_L=L^{-1}\sum_{\ell=1}^L|A_\ell|$. Then
\begin{gather*}
    \kappa_{L,z}
    =
    \frac{1}{L}
    \sum_{\ell=1}^L
    \left(
        |A_\ell|-\mu_L
    \right)
    \left(
        r_{\ell,z}-\bar{r}_L
    \right)
    +
    \mu_L\bar{r}_L
    \geq
    \mu_L\bar{r}_L
    \geq
    \mu_Lc_\alpha c_\psi,
\end{gather*}
where the first inequality follows from Assumption~\ref{ass:nolargegaps}. By Assumption~\ref{ass:seq_mu}, for all sufficiently large $L$,
\begin{gather*}
    \kappa_{L,z}
    >
    C_\mu c_\alpha c_\psi
    >
    0.
\end{gather*}
Hence, $\boldsymbol{s}_{L,z}$ satisfies Assumption~\ref{ass:labeling}.

\putbib
\end{bibunit}
\graphicspath{{figures_and_tables/figures/}}

\begingroup
\begin{bibunit}
\setcounter{section}{0}
\renewcommand{\thesection}{S\arabic{section}}
\renewcommand{\thesubsection}{\thesection.\arabic{subsection}}
\setcounter{subsection}{0}
\renewcommand{\theHsection}{supplement.S\arabic{section}}
\renewcommand{\theHsubsection}{supplement.S\arabic{section}.\arabic{subsection}}
\renewcommand{\theHsubsubsection}{supplement.S\arabic{section}.\arabic{subsection}.\arabic{subsubsection}}
\numberwithin{equation}{section}
\numberwithin{theorem}{section}
\numberwithin{prop}{section}
\numberwithin{cor}{section}
\numberwithin{ass}{section}
\numberwithin{df}{section}
\numberwithin{remark}{section}
\numberwithin{eg}{section}

\clearpage
\setcounter{page}{1}
\renewcommand{\thepage}{S\arabic{page}}

\section*{Supplementary Material Contents}

The Supplementary Material contains extensions and supporting analyses that complement the main text. Section~\ref{sec:sm_microfoundation} provides a labor-market interpretation of the Tukey model. Sections~\ref{sec:sm_extensions} and~\ref{sec:sm_ident_supp} study richer interaction functions and the stronger graph conditions required for their identification. Sections~\ref{sec:sm_er}--\ref{sec:sm_mc} present additional results on graph structure, productivity estimation, interpretation, and simulation performance. Section~\ref{sec:sm_proofs} contains proofs of the results stated in the Supplementary Material.

\addcontentsline{toc}{section}{Supplementary Material Contents}
\begin{enumerate}[label=\textbf{S\arabic*.}, leftmargin=2.8em]
    \item A Labor-Market Microfoundation for the Tukey Wage Equation
    \item Richer Interaction Functions
    \item Identification in the HS and Isotonic Models
    \item Four-Cycles in the Erd\H{o}s--R\'enyi Model
    \item Productivity Estimation
    \item Interpreting the Magnitude of $\beta_0$
    \item Monte Carlo Simulations
    \item Supplementary Proofs
\end{enumerate}

\section{A Labor-Market Microfoundation for the Tukey Wage Equation}
\label{sec:sm_microfoundation}

This section presents a simple labor-market environment in which the Tukey model arises as a wage equation. The objective is not to provide a complete structural model of labor-market equilibrium, but to identify conditions on production and wage setting under which the systematic component of the wage paid by firm $j$ to worker $i$ takes the form
\begin{gather*}
    \theta_{ij}
    =
    \alpha_i+\psi_j+\beta_0\alpha_i\psi_j.
\end{gather*}
The derivation adapts the sequential-auction logic of \citet{PostelVinayRobin2002} and the wage-setting discussion in \citet{DiAddarioKlineSaggioSolvsten2023}. The main differences are that utility is linear in wage levels and production is nonseparable in worker and firm capabilities.

Let $a_i\in[0,1]$ denote the productive capability of worker $i$, and let $p_j\in[0,1]$ denote the organizational or technological capability of firm $j$. Output is generated by two types of tasks. A share $\omega\in[0,1]$ is complementary and is completed only when both the worker and the firm are effective, with success probability $a_ip_j$. The remaining share $1-\omega$ is substitutable and is completed when either the worker or the firm can solve the task, with success probability $1-(1-a_i)(1-p_j)=a_i+p_j-a_ip_j$.

Let $s>0$ denote the value of the task component, and allow for direct worker and firm productivity components $r_a a_i$ and $r_p p_j$. Net of any common baseline, match output is
\begin{align}
    R(a_i,p_j)
    &=
    r_aa_i+r_pp_j
    +
    s\left[
        \omega a_ip_j
        +
        (1-\omega)(a_i+p_j-a_ip_j)
    \right] \label{eq:Rap} \\
    &=
    \left[r_a+s(1-\omega)\right]a_i
    +
    \left[r_p+s(1-\omega)\right]p_j
    +
    s(2\omega-1)a_ip_j. \notag
\end{align}
Define
\begin{gather*}
    m_\alpha
    \coloneqq
    r_a+s(1-\omega),
    \qquad
    m_\psi
    \coloneqq
    r_p+s(1-\omega),
    \qquad
    m_{\alpha\psi}
    \coloneqq
    s(2\omega-1),
\end{gather*}
and assume $m_\alpha>0$ and $m_\psi>0$. Let
\begin{gather*}
    \alpha_i
    \coloneqq
    m_\alpha a_i,
    \qquad
    \psi_j
    \coloneqq
    m_\psi p_j.
\end{gather*}
Substituting $a_i=\alpha_i/m_\alpha$ and $p_j=\psi_j/m_\psi$ into Equation~\ref{eq:Rap} gives
\begin{align}
    R(a_i,p_j)
    &=
    \alpha_i+\psi_j
    +
    m_{\alpha\psi}
    \left(
        \frac{\alpha_i}{m_\alpha}
    \right)
    \left(
        \frac{\psi_j}{m_\psi}
    \right) \notag \\
    &=
    \alpha_i+\psi_j+\beta_0\alpha_i\psi_j, \label{eq:sm_task_tukey}
\end{align}
where
\begin{gather*}
    \beta_0
    \coloneqq
    \frac{m_{\alpha\psi}}{m_\alpha m_\psi}
    =
    \frac{
        s(2\omega-1)
    }{
        \left[r_a+s(1-\omega)\right]
        \left[r_p+s(1-\omega)\right]
    }.
\end{gather*}

Equation~\ref{eq:sm_task_tukey} has the Tukey form. Its interaction parameter has a direct interpretation: since $s>0$, $m_\alpha>0$, and $m_\psi>0$,
\begin{gather*}
    \operatorname{sign}(\beta_0)
    =
    \operatorname{sign}(2\omega-1).
\end{gather*}
Thus, $\beta_0>0$ when complementary tasks dominate, $\beta_0<0$ when substitutable tasks dominate, and $\beta_0=0$ when the two forces exactly offset. In the last case, match output is additive.

It remains to connect match output to hiring wages. Workers have linear flow utility, $U(w)=w$. The wage-setting argument follows the sequential-auction logic of \citet{PostelVinayRobin2002} and its implementation in \citet{DiAddarioKlineSaggioSolvsten2023}. In this framework, the hiring wage depends on the worker's outside option, which reflects the value of remaining with the origin firm rather than accepting the destination firm's offer. Rather than repeating the full dynamic derivation, I specialize the argument to the present setting with linear utility and nonseparable match output.

Suppose worker $i$ is attached to origin firm $q$ and receives an offer from destination firm $j$. Let $O(\alpha_i,q)$ denote the worker's outside value, expressed as a flow-wage equivalent. This value includes the continuation value associated with remaining at the origin firm and receiving future offers. Let $\lambda\in[0,1]$ denote the worker's bargaining power with the destination firm. Under this framework and transferable utility, the hiring wage is
\begin{align}
    w_{ijq}(\lambda)
    &=
    O(\alpha_i,q)
    +
    \lambda
    \left[
        R(a_i,p_j)-O(\alpha_i,q)
    \right] \notag \\
    &=
    \lambda R(a_i,p_j)
    +
    (1-\lambda)O(\alpha_i,q). \label{eq:sm_bargaining}
\end{align}
A general poaching model therefore does not by itself yield the Tukey wage equation. When $\lambda<1$, the wage depends on the origin firm through $O(\alpha_i,q)$, as in standard sequential-auction models.

In the limiting case $\lambda=1$, the worker has full bargaining power and Equation~\ref{eq:sm_bargaining} gives $w_{ijq}(1)=R(a_i,p_j)$. The origin firm then drops out, and Equation~\ref{eq:sm_task_tukey} implies
\begin{gather*}
    w_{ij}
    =
    \alpha_i+\psi_j+\beta_0\alpha_i\psi_j.
\end{gather*}
Thus, under full worker bargaining power, the hiring wage equals the match output generated at the destination firm.

Finally, let the observed hiring wage be $y_{ij}=w_{ij}+\eta_{ij}$, where $\eta_{ij}$ is a match-specific disturbance satisfying $\E[\eta_{ij}]=0$. The systematic component of the observed wage is then
\begin{gather*}
    \theta_{ij}
    =
    \E[y_{ij}]
    =
    \alpha_i+\psi_j+\beta_0\alpha_i\psi_j,
\end{gather*}
which is the Tukey model studied in the main text, specialized to a labor-market setting.

\section{Richer Interaction Functions}\label{sec:sm_extensions}

The Tukey model introduces complementarities in a parsimonious and interpretable way. Richer interaction patterns can be accommodated by relaxing its restrictions. The heterogeneous-slope (HS) model allows the Tukey interaction parameter to vary across firms, capturing heterogeneity in complementarities. The isotonic model provides a fully nonparametric alternative in which the interaction function is required only to be strictly increasing in each argument, while its cross-partial derivative remains unrestricted.

\subsection{Heterogeneous-Slope Model} \label{sec:hs_model}

Rather than modeling firm productivity as a scalar, suppose instead that each firm is characterized by a productivity vector $\psi_j = (b_j, a_j)$. This leads to the specification
\begin{gather} \label{eq:hs}
    \theta_{ij} = a_j + b_j \alpha_i, \tag{HS model}
\end{gather}
where $a_j$ is a firm-specific intercept and $b_j$ is a firm-specific slope.

\citet{bonhomme2019distributional} use the same functional form within a grouped specification, in which workers and firms belonging to the same group share latent characteristics. In the BI framework, the HS model corresponds to the limiting case in which each worker and firm forms a separate group.

The HS model nests the Tukey model. If there exists a common parameter $\beta_0$ such that $b_j=1+\beta_0a_j$ for every firm, then
\begin{gather*}
    \theta_{ij}
    =
    a_j+(1+\beta_0a_j)\alpha_i
    =
    \alpha_i+a_j+\beta_0\alpha_i a_j,
\end{gather*}
which is the Tukey specification with firm productivity $\psi_j=a_j$. The HS model relaxes the restriction linking a firm's additive productivity to the marginal contribution of worker productivity. This additional flexibility allows researchers to distinguish a firm's intrinsic productivity $a_j$, which enters the model in levels, from its capacity to extract value from workers and exploit complementarities, governed by $b_j$. Unlike the Tukey model, which implicitly ties these two roles together, the HS model permits them to differ, thereby enabling empirical investigation of whether more productive firms are also those in which workers' marginal productivity is higher.

\subsection{Isotonic Model} \label{sec:isotonic_model}

The HS model imposes a parametric form on the interaction function. As a more flexible benchmark, consider a specification that leaves the pattern of complementarities unrestricted. As in nonparametric regression, some regularity is needed for the observed data to be informative about an otherwise unrestricted function. In the bipartite interaction setting, I impose monotonicity.

Productivities remain scalar, and the interaction function satisfies
\begin{gather} \label{eq:isotonic}
    \theta_{ij}
    =
    f_m(\alpha_i,\psi_j),
    \tag{Isotonic model}
\end{gather}
where $f_m\colon A\times\Psi\to\R$ is strictly increasing in each argument. This restriction connects the model to seriation problems \citep{flammarion2019optimal}, matrix completion for bivariate isotonic matrices under unknown permutations \citep{mao2020towards}, and nonparametric latent-space models of network formation \citep{gao2020nonparametric}.

Monotonicity is a common shape restriction in economics; see \citet{chetverikov2018econometrics} for a survey. Here, it requires the ordering of any two workers to be preserved across the firms with which they are both observed, and analogously for firms. In particular, if $\alpha_i>\alpha_{i'}$, then
\begin{gather*}
    f_m(\alpha_i,\psi_j)
    >
    f_m(\alpha_{i'},\psi_j)
\end{gather*}
for every firm $j$, with the analogous property holding across firms for a fixed worker. The TWFE model satisfies this restriction.

Beyond monotonicity, the isotonic model places no parametric restriction on $f_m$. When $f_m$ is twice differentiable, its first partial derivatives are nonnegative, while the cross-partial derivative governing complementarities remains unrestricted. Complementarity patterns may therefore vary throughout the productivity space: the interaction function can be supermodular for some worker-firm combinations and submodular for others. By contrast, in the HS model, the marginal productivity of a worker may vary across firms but is constant across workers within each firm.

\subsection{Interaction Flexibility and Network Requirements} \label{sec:tradeoff}

The TWFE, Tukey, HS, and isotonic models trace out a spectrum of interaction functions within the BI framework. At one extreme, the modular TWFE model rules out complementarities. At the other, the isotonic model imposes only strict monotonicity and allows complementarity patterns to vary across the productivity space. The Tukey and HS models lie between these extremes, introducing complementarities through increasingly flexible parametric structures.

The matching network $G_{IJ}$ provides a second dimension of variation by recording the observed worker-firm pairs. At one extreme, the network is complete, with $D_{ij}=1$ for every pair. At the other, it may be sparse or disconnected, with each worker and firm observed in only a small number of matches.

The interaction function and the matching network jointly determine what can be learned from the data. Richer interaction functions introduce more unknown structure, while the network determines which comparisons are available to recover it. This creates a trade-off: more flexible specifications require more informative network structures, whereas sparse networks can support identification only under stronger restrictions on the interaction function.

The two components differ in an important respect. The interaction function is unobserved and must be specified by the researcher, whereas the matching network is observed and its relevant properties can be checked directly. For the TWFE model, \citet{abowd1999high} and \citet{jochmans2019fixed} provide graph conditions for identification and, under additional assumptions, estimation of the productivity parameters. The results developed here extend the analysis to the Tukey, HS, and isotonic models with incomplete and potentially sparse matching networks, characterizing how the required graph structure changes with the flexibility of the interaction function.

In applications, researchers can first select an interaction specification appropriate to the economic setting and then verify whether the observed matching network satisfies the corresponding graph conditions. The parameters may fail to be point-identified, may be point-identified without supporting consistent estimation under the relevant asymptotic sequence, or may be both identified and consistently estimable. Distinguishing among these cases clarifies which questions the available data can credibly answer. Naturally, the richer and more connected the graph, the wider the scope of questions that can be addressed.

\section{Identification in the HS and Isotonic Models}\label{sec:sm_ident_supp}

\subsection{Identification in the Heterogeneous-Slope Model} \label{sec:hs_identification}

The heterogeneous-slope (HS) model allows the Tukey interaction parameter to vary across firms. Under a complete matching network, \citet{bonhomme2019distributional} establish point identification of the worker productivities $\alpha_i$ and the firm parameters $a_j$ and $b_j$. Completeness, however, is stronger than necessary. I next provide a weaker condition on the matching network that is necessary and sufficient for identification with incomplete matching.

To state the graph condition, let
\begin{gather*}
    N(i)
    \coloneqq
    \left\{
        j\in[J]:
        (i,j)\in\mathcal{O}_{IJ}
    \right\}
\end{gather*}
denote the set of firms with which worker $i$ is observed, and let
\begin{gather*}
    \mathcal{I}_M
    \coloneqq
    \left\{
        i\in[I]:
        |N(i)|\geq2
    \right\}
\end{gather*}
denote the set of mobile workers. The firm-mobility hypergraph has vertex set $[J]$ and one hyperedge $N(i)$ for each $i\in\mathcal{I}_M$. The hyperedges are labeled by workers, so two workers observed at the same set of firms remain distinct hyperedges.

For any partition $\mathcal{P}$ of the firm set, let $q_i(\mathcal{P})$ denote the number of elements of $\mathcal{P}$ that intersect $N(i)$. A worker whose employment history intersects $q_i(\mathcal{P})$ groups of firms provides $q_i(\mathcal{P})-1$ comparisons across these groups.

\begin{ass} \label{ass:hs_connectivity}
    {\normalfont (Location-and-Scale Connectivity)}
    For every partition $\mathcal{P}$ of $[J]$,
    \begin{gather*}
        \sum_{i\in\mathcal{I}_M}
        \left[
            q_i(\mathcal{P})-1
        \right]
        \geq
        2
        \left(
            |\mathcal{P}|-1
        \right).
    \end{gather*}
\end{ass}

The graph in Figure~\ref{fig:bipartite_graph_snow_panel_a} violates Assumption~\ref{ass:hs_connectivity}. Consider the partition that places $j_1$ and $j_2$ in one group and $j_3$ in the other. Only worker $i_4$ is observed in both groups, so the left-hand side of the inequality is one, while the right-hand side is two.

Adding the edge between $i_5$ and $j_1$, as in Figure~\ref{fig:bipartite_graph_snow_panel_b}, makes the condition hold. Every partition into two groups is crossed by at least two mobile workers. For the partition into the three singleton firms, workers $i_1$, $i_2$, $i_4$, and $i_5$ each provide one comparison, so both sides of the inequality are equal to four.

\begin{figure}[ht]
    \centering
    \captionsetup[subfigure]{justification=centering}

    \begin{subfigure}[t]{0.48\textwidth}
        \centering
        \begin{tikzpicture}[
            scale=0.8,
            >=stealth,
            font=\small,
            pointI/.style={circle,fill=purple,draw=black,inner sep=3pt},
            pointJ/.style={circle,fill=green,draw=black,inner sep=3pt}
        ]
            \foreach \y/\lab in {1.5/1,0.5/2,-0.5/3,-1.5/4,-2.5/5} {
                \node[pointI] (I\lab) at (0,\y) {};
            }
            \foreach \y/\lab in {0.5/1,-0.5/2,-2.5/3} {
                \node[pointJ] (J\lab) at (4,\y) {};
            }

            \node[left=4pt of I1] {$i_1$};
            \node[left=4pt of I2] {$i_2$};
            \node[left=4pt of I3] {$i_3$};
            \node[left=4pt of I4] {$i_4$};
            \node[left=4pt of I5] {$i_5$};

            \node[right=4pt of J1] {$j_1$};
            \node[right=4pt of J2] {$j_2$};
            \node[right=4pt of J3] {$j_3$};

            \draw (I1) -- (J1);
            \draw (I1) -- (J2);
            \draw (I2) -- (J1);
            \draw (I2) -- (J2);
            \draw (I3) -- (J2);
            \draw (I4) -- (J2);
            \draw (I5) -- (J3);
            \draw (I4) -- (J3);
        \end{tikzpicture}
        \caption{}
        \label{fig:bipartite_graph_snow_panel_a}
    \end{subfigure}
    \hfill
    \begin{subfigure}[t]{0.48\textwidth}
        \centering
        \begin{tikzpicture}[
            scale=0.8,
            >=stealth,
            font=\small,
            pointI/.style={circle,fill=purple,draw=black,inner sep=3pt},
            pointJ/.style={circle,fill=green,draw=black,inner sep=3pt}
        ]
            \foreach \y/\lab in {1.5/1,0.5/2,-0.5/3,-1.5/4,-2.5/5} {
                \node[pointI] (I\lab) at (0,\y) {};
            }
            \foreach \y/\lab in {0.5/1,-0.5/2,-2.5/3} {
                \node[pointJ] (J\lab) at (4,\y) {};
            }

            \node[left=4pt of I1] {$i_1$};
            \node[left=4pt of I2] {$i_2$};
            \node[left=4pt of I3] {$i_3$};
            \node[left=4pt of I4] {$i_4$};
            \node[left=4pt of I5] {$i_5$};

            \node[right=4pt of J1] {$j_1$};
            \node[right=4pt of J2] {$j_2$};
            \node[right=4pt of J3] {$j_3$};

            \draw (I1) -- (J1);
            \draw (I1) -- (J2);
            \draw (I2) -- (J1);
            \draw (I2) -- (J2);
            \draw (I3) -- (J2);
            \draw (I4) -- (J2);
            \draw (I5) -- (J3);
            \draw (I4) -- (J3);
            \draw[red,very thick] (I5) -- (J1);
        \end{tikzpicture}
        \caption{}
        \label{fig:bipartite_graph_snow_panel_b}
    \end{subfigure}

    \caption{\small Two matching networks. Panel (a) does not satisfy Assumption~\ref{ass:hs_connectivity}. Panel (b) satisfies the assumption after the red edge is added.}
    \label{fig:bipartite_graph_snow}
\end{figure}

As in the Tukey model, the graph condition must be combined with productivity heterogeneity. Assumption~\ref{ass:informative_cycle} requires heterogeneity on both sides of a four-cycle; otherwise, the cycle contrasts contain no information about $\beta_0$. In the HS model, workers connecting different firm schedules must likewise have sufficiently heterogeneous productivities. If the relevant workers have identical or otherwise exceptional productivity values, mobility comparisons that are distinct in the graph may impose the same restriction on the firm schedules.

\begin{ass} \label{ass:hs_heterogeneity}
    {\normalfont (HS Heterogeneity)}
    All firm slopes are nonzero. The productivities of the mobile workers are in general position relative to the firm-mobility hypergraph.
\end{ass}

General position means that the mobile-worker productivities generate the maximum rank permitted by the firm-mobility hypergraph. Assumption~\ref{ass:hs_heterogeneity} excludes exceptional productivity values satisfying polynomial equalities that make otherwise distinct mobility comparisons linearly dependent. For example, when two firms share two workers, the assumption requires the two workers to have different productivities. More generally, the excluded values form a proper algebraic subset of the productivity space. If $A$ contains an interval, this subset has Lebesgue measure zero in $A^{|\mathcal{I}_M|}$, so the assumption is satisfied for almost every collection of mobile-worker productivities.

Intuitively, because the HS model generalizes the Tukey model by allowing the interaction parameter to vary across firms, Assumptions~\ref{ass:hs_connectivity} and~\ref{ass:hs_heterogeneity} strengthen Assumptions~\ref{ass:connectedness} and~\ref{ass:informative_cycle}. In particular, they imply that every firm belongs to a cycle involving heterogeneous worker productivities and that these cycles are connected through shared nodes. These features allow identifying information to propagate across the matching network. The next theorem shows that the full conditions are necessary and sufficient for identification of the HS model parameters.

\begin{theorem} \label{thm:hs_identification}
    {\normalfont (Identification in the HS Model)}
    Under the~\ref{eq:hs}, the normalization $a_{j_0}=b_{j_0}=1$, and Assumption~\ref{ass:hs_heterogeneity}, Assumption~\ref{ass:hs_connectivity} is necessary and sufficient for identification of $\boldsymbol{\alpha}$, $\boldsymbol{a}$, and $\boldsymbol{b}$.
\end{theorem}

Assumption~\ref{ass:hs_connectivity} requires a matching network substantially richer than those typically observed in applications. In particular, it implies that every firm belongs to a cycle. In the labor-market data studied by \citet{kline2024firm}, nearly half of firms do not belong to any cycle, so their firm parameters cannot be identified under the HS model. This share provides only a lower bound on the extent of nonidentification: cycle membership is not sufficient, because mobility comparisons must also be sufficiently numerous and distributed across the firm market. The HS model may therefore leave the parameters of a large share of firms unidentified even under otherwise favorable conditions.

Theorem~\ref{thm:hs_identification} thus shows that the additional flexibility provided by firm-specific complementarities comes with demanding graph requirements that are unlikely to hold in many empirical settings with two-sided interactions. Dimension reduction restrictions, such as the grouped structure in \citet{bonhomme2019distributional}, can make the model applicable under weaker matching patterns.

\subsection{Identification in the Isotonic Model} \label{sec:isotonic_identification}

One might ask whether the restrictive graph condition found for identification with the HS model is driven not by its richer interaction function, but by the use of multidimensional firm productivity, which introduces additional parameters.

The isotonic model provides an alternative: it allows fully flexible complementarities while retaining scalar productivities. Here, the interaction function is left entirely nonparametric. As the next proposition shows, this flexibility comes with a sharp limitation: $f_m$, $\boldsymbol{\alpha}$, and $\boldsymbol{\psi}$ can be recovered only up to strictly monotonic reparameterizations.

\begin{prop} \label{prop:isotonic_monotone_equivalence}
{\normalfont (Lack of Cardinal Identification in the Isotonic Model)}
    Model $(f_m, \boldsymbol{\alpha}, \boldsymbol{\psi}, G_{IJ})$ can be identified only up to any strictly increasing reparameterization of $\boldsymbol{\alpha}$ and $\boldsymbol{\psi}$.
\end{prop}

Proposition~\ref{prop:isotonic_monotone_equivalence} implies that only the ordinal information (the ranking) of worker and firm productivities can be identified. Cardinal differences are not preserved under strictly monotonic transformations, so $f_m$, $\boldsymbol{\alpha}$, and $\boldsymbol{\psi}$ cannot be separately identified. 

While this prevents recovery of productivity levels, the ranks of $\{ \alpha_i \}$ and $\{ \psi_j \}$, and thus the rankings of the vectors $\boldsymbol{\alpha}$ and $\boldsymbol{\psi}$, are still identifiable. Rank-based methods can therefore be employed to study sorting patterns and the determinants of productivity ranks without imposing cardinal structure.

To state an identification condition for the isotonic model, I first introduce the notion of within-side diameters.

\begin{df}
{\normalfont (Within-side Diameter)}
    The within-side diameters of a graph $G_{IJ}$ are the largest shortest-path distances between any two nodes on the same side of the bipartition: letting $d(u,v)$ be the number of edges in the shortest path between nodes $u$ and $v$,
    \begin{gather*}
        \operatorname{diam}_{I}(G_{IJ}) = \max_{i,i'\in[I]} d(i,i'), \qquad \operatorname{diam}_{J}(G_{IJ}) = \max_{j,j'\in[J]} d(j,j').
    \end{gather*}
\end{df}

In the graph of Figure~\ref{fig:bipartite_graph_snow_panel_a}, the within-side diameter for $I$ is $4$, as the shortest path from $i_1$ to $i_5$ spans four edges. For $J$, the diameter is $4$, corresponding to the path between $j_1$ and $j_3$. In Figure~\ref{fig:bipartite_graph_snow_panel_b}, the diameter for $J$ falls to $2$ thanks to the shorter path $j_1 \to i_5 \to j_3 $, while the diameter for $I$ remains $4$.

The following condition for identification in the isotonic model directly involves the within-side diameters.

\begin{ass} \label{ass:diameter}
{\normalfont (Diameter 2)}
    The matching network $G_{IJ}$ has within-side diameters equal to two.
\end{ass}

Assumption~\ref{ass:diameter} requires that every pair of workers shares at least one common firm, and every pair of firms shares at least one common worker. This ensures that all nodes on the same side of the bipartition are linked through a single intermediary on the opposite side. The graph in Figure~\ref{fig:bipartite_graph_diameter_panel_a} fails this property, as workers $i_3$ and $i_5$ have no firm in common. Adding the edge $(i_3,j_3)$, as in Figure~\ref{fig:bipartite_graph_diameter_panel_b}, creates the necessary connections and yields within-side diameters of two.

\begin{figure}[ht]
  \centering
  \captionsetup[subfigure]{justification=centering}

  \begin{subfigure}[t]{0.48\textwidth}
    \centering
    \begin{tikzpicture}[scale=0.8,
      >=stealth,
      font=\small,
      pointI/.style={circle,fill=purple,draw=black,inner sep=3pt},
      pointJ/.style={circle,fill=green,draw=black,inner sep=3pt}
    ]
      \foreach \y/\lab in {1.5/1,0.5/2,-0.5/3,-1.5/4,-2.5/5}{
          \node[pointI] (I\lab) at (0,\y) {};
      }
      \foreach \y/\lab in {0.5/1,-0.5/2,-2.5/3}{
          \node[pointJ] (J\lab) at (4,\y) {};
      }

      \node[left=4pt of I1] {$i_1$};
      \node[left=4pt of I2] {$i_2$};
      \node[left=4pt of I3] {$i_3$};
      \node[left=4pt of I4] {$i_4$};
      \node[left=4pt of I5] {$i_5$};

      \node[right=4pt of J1] {$j_1$};
      \node[right=4pt of J2] {$j_2$};
      \node[right=4pt of J3] {$j_3$};

      \draw (I1) -- (J1);
      \draw (I1) -- (J2);
      \draw (I2) -- (J1);
      \draw (I2) -- (J2);
      \draw (I3) -- (J2);
      \draw (I4) -- (J2);
      \draw (I5) -- (J3);
      \draw (I4) -- (J3);
      \draw (I5) -- (J1);
    \end{tikzpicture}
    \caption{}
    \label{fig:bipartite_graph_diameter_panel_a}
  \end{subfigure}
  \hfill
  \begin{subfigure}[t]{0.48\textwidth}
    \centering
    \begin{tikzpicture}[scale=0.8,
      >=stealth,
      font=\small,
      pointI/.style={circle,fill=purple,draw=black,inner sep=3pt},
      pointJ/.style={circle,fill=green,draw=black,inner sep=3pt}
    ]
      \foreach \y/\lab in {1.5/1,0.5/2,-0.5/3,-1.5/4,-2.5/5}{
          \node[pointI] (I\lab) at (0,\y) {};
      }
      \foreach \y/\lab in {0.5/1,-0.5/2,-2.5/3}{
          \node[pointJ] (J\lab) at (4,\y) {};
      }

      \node[left=4pt of I1] {$i_1$};
      \node[left=4pt of I2] {$i_2$};
      \node[left=4pt of I3] {$i_3$};
      \node[left=4pt of I4] {$i_4$};
      \node[left=4pt of I5] {$i_5$};

      \node[right=4pt of J1] {$j_1$};
      \node[right=4pt of J2] {$j_2$};
      \node[right=4pt of J3] {$j_3$};

      \draw (I1) -- (J1);
      \draw (I1) -- (J2);
      \draw (I2) -- (J1);
      \draw (I2) -- (J2);
      \draw (I3) -- (J2);
      \draw (I4) -- (J2);
      \draw (I5) -- (J3);
      \draw (I4) -- (J3);
      \draw (I5) -- (J1);
      \draw[red, very thick] (I3) -- (J3);
    \end{tikzpicture}
    \caption{}
    \label{fig:bipartite_graph_diameter_panel_b}
  \end{subfigure}

  \caption{\small Two examples of matching networks: (a) a network that does not satisfy Assumption~\ref{ass:diameter}, and (b) a network that does, with within-side diameters equal to 2.}
  \label{fig:bipartite_diameter_snow}
\end{figure}

The next result shows that a within-side diameter of two is exactly the connectivity needed to recover the rankings of $\boldsymbol{\alpha}$ and $\boldsymbol{\psi}$.

\begin{theorem} \label{thm:isotonic_identification}
{\normalfont (Identification of Rankings in the Isotonic Model)}
    Under the \ref{eq:isotonic}, Assumption~\ref{ass:diameter} is necessary and sufficient for identification of the rankings of $\boldsymbol{\alpha}$ and $\boldsymbol{\psi}$.
\end{theorem}

From an empirical standpoint, Assumption~\ref{ass:diameter} is strong: it requires that every pair of workers has at least one firm in common and that every pair of firms shares at least one worker. In labor-market data, this would mean that any two workers have worked for the same firm, an unlikely occurrence outside of small or highly interconnected markets. As with the HS model, the isotonic model therefore has limited empirical applicability when the focus is on point identification.

That said, point identification is not always essential for extracting useful information from the data. Even when Assumption~\ref{ass:diameter} fails, the isotonic model can still deliver partial identification of the productivity rankings. Although I do not study partial identification in this paper, it remains a promising direction to explore. Following the approach of \cite{crippa2025pairwise}, for example, one could derive informative sets for the rankings of $\boldsymbol{\alpha}$ and $\boldsymbol{\psi}$.

\section{Four-Cycles in the Erd\H{o}s--R\'enyi Model}
\label{sec:sm_er}

In this section, I study conditions under which the bipartite Erd\H{o}s--R\'enyi random graph contains, with probability approaching one, a diverging number of four-cycles. In the BI framework, the matching network $G_{IJ}$ is treated as fixed, so its number of cycles is deterministic. The objective here is not to model network formation, but to clarify the graph density associated with the presence of many four-cycles and compare it with the density required for connectivity. Similar insights may arise under other random graph models, but the Erd\H{o}s--R\'enyi model provides a simple benchmark in which the network structure is governed by a single parameter.

In the bipartite Erd\H{o}s--R\'enyi graph $G_{IJ}(p_{IJ})$, each pair $(i,j)\in[I]\times[J]$ is independently linked with probability $p_{IJ}\in(0,1)$. The link probability is constant across potential edges but may vary with $I$ and $J$. Let $C_4(I,J)$ denote the random number of four-cycles in $G_{IJ}(p_{IJ})$. I consider an asymptotic sequence satisfying
\begin{gather*}
    0
    <
    \underline{c}
    \leq
    \frac{I}{J}
    \leq
    \overline{c}
    <
    \infty
\end{gather*}
as $I,J\to\infty$, so that the numbers of agents on the two sides grow at the same rate. The following proposition provides a sufficient condition for the total number of four-cycles to diverge.

\begin{prop} \label{prop:ER}
    {\normalfont (Four-Cycles in the Erd\H{o}s--R\'enyi Model)}
    If
    \begin{gather*}
        \sqrt{IJ}p_{IJ}
        \to
        \infty,
    \end{gather*}
    then
    \begin{gather*}
        C_4(I,J)
        \xrightarrow{p}
        \infty.
    \end{gather*}
\end{prop}

Under balanced growth, Proposition~\ref{prop:ER} requires a link probability larger than order $1/\sqrt{IJ}$. By contrast, connectivity requires a link probability of order $\log(\sqrt{IJ})/\sqrt{IJ}$; see Section~8.2 of \citet{blum2020foundations}. Because $\log(\sqrt{IJ})\to\infty$, the density required for connectivity implies the condition in Proposition~\ref{prop:ER}. The converse does not hold: the total number of four-cycles can diverge even when $\sqrt{IJ}p_{IJ}$ grows more slowly than $\log(\sqrt{IJ})$. Thus, connectivity requires a strictly denser graph than the existence of a diverging total number of four-cycles.

Proposition~\ref{prop:ER} therefore shows that the graph density associated with connectivity generates abundant four-cycle variation. By itself, however, this result establishes only that the total number of four-cycles diverges, whereas Assumption~\ref{ass:cycles} is stated in terms of a diverging collection of edge-disjoint four-cycles, which provides the cleanest formulation of the asymptotic argument.

\section{Productivity Estimation}
\label{sec:sm_prod}

The main text focuses on estimation and inference for the interaction parameter $\beta_0$. Researchers may also be interested in the individual productivity parameters in $\boldsymbol{\alpha}$ and $\boldsymbol{\psi}$, as well as functions of these collections, such as sorting measures and variance components. Two cases arise. When $\beta_0=0$, the model reduces to the standard TWFE specification, and existing estimators can be applied. I briefly review the available results and the additional graph conditions required for inference. When $\beta_0\neq0$, I describe a computational procedure for obtaining productivity estimates after estimating $\beta_0$. I do not establish consistency or derive a limiting distribution for these estimates. Doing so would require additional conditions on the sequence of matching networks and an analysis of how first-stage estimation error in $\hat{\beta}_{L,z}$ propagates to the productivity estimates.

\subsection{Productivity Estimation under TWFE}

When $\beta_0=0$, the Tukey model reduces to
\begin{gather*}
    y_{ij}
    =
    \alpha_i+\psi_j+\eta_{ij}.
\end{gather*}
Under Assumption~\ref{ass:connectedness} and the normalization $\alpha_{i_0}=0$, the productivity collections can be estimated using the TWFE estimator
\begin{gather*}
    \left(
        \hat{\boldsymbol{\alpha}}^{twfe},
        \hat{\boldsymbol{\psi}}^{twfe}
    \right)
    =
    (C'C)^{-1}C'\boldsymbol{y}_{\mathcal{O}},
\end{gather*}
where the design matrix $C$ is defined as in Definition~\ref{def:twfe_projection}.

\citet{jochmans2019fixed} study the asymptotic properties of the TWFE estimator. For inference on an individual productivity parameter, they derive an asymptotic distribution under conditions that require the degree of the corresponding node to diverge. More generally, inference on functionals of $\boldsymbol{\alpha}$ and $\boldsymbol{\psi}$ requires information to accumulate across many nodes. These conditions can be difficult to satisfy in labor-market applications, where workers are often observed at only a small number of firms and many firms employ only a limited number of observed workers.

\citet{kline2020leave} study inference on quadratic functions of the productivity collections, including variances and covariances. Their results require additional restrictions on the sequence of matching networks that prevent the graph from becoming sparsely connected or fragmenting into subgraphs joined by only a small number of edges. As Table~IV of their paper illustrates, these restrictions can be demanding in labor-market data.

These results highlight the difficulty of estimating individual productivities and functions of them in sparse matching networks. Information about an individual productivity is concentrated in the relatively few outcomes involving that agent. This differs from the interaction parameter $\beta_0$, which is common to all observed matches and can therefore be estimated by aggregating identifying variation across the network.

\subsection{Productivity Estimation under the Tukey Model}

Suppose that Assumption~\ref{ass:connectedness} holds, an estimator $\hat{\beta}_{L,z}$ of $\beta_0$ is available, and $\hat{\beta}_{L,z}\neq0$. Define the transformed outcomes
\begin{gather*}
    \tilde{y}_{ij}
    \coloneqq
    1+\hat{\beta}_{L,z}y_{ij}.
\end{gather*}
For any candidate productivity collections, define the corresponding transformed factors as
\begin{gather*}
    u_i
    \coloneqq
    1+\hat{\beta}_{L,z}\alpha_i,
    \qquad
    v_j
    \coloneqq
    1+\hat{\beta}_{L,z}\psi_j.
\end{gather*}

At the true parameter value,
\begin{align*}
    1+\beta_0y_{ij}
    &=
    1+\beta_0
    \left(
        \alpha_i+\psi_j+\beta_0\alpha_i\psi_j+\eta_{ij}
    \right) \\
    &=
    (1+\beta_0\alpha_i)(1+\beta_0\psi_j)
    +
    \beta_0\eta_{ij}.
\end{align*}
Thus, the transformed outcome matrix has a rank-one deterministic component. This representation motivates estimating the transformed productivity factors by
\begin{gather} \label{eq:transformed_prod_estimator}
    \left(
        \hat{\boldsymbol{u}},
        \hat{\boldsymbol{v}}
    \right)
    \in
    \arg\min_{\boldsymbol{u},\boldsymbol{v}}
    \sum_{(i,j)\in\mathcal{O}_{IJ}}
    \left(
        \tilde{y}_{ij}-u_iv_j
    \right)^2
    \qquad
    \text{s.t.}
    \qquad
    u_{i_0}=1.
\end{gather}
The normalization $u_{i_0}=1$ corresponds to the normalization $\alpha_{i_0}=0$ used in the identification analysis.

Given a solution to Equation~\ref{eq:transformed_prod_estimator}, the productivity estimates are recovered as
\begin{gather} \label{eq:recover_productivities}
    \hat{\alpha}_i
    =
    \frac{\hat{u}_i-1}{\hat{\beta}_{L,z}},
    \qquad
    \hat{\psi}_j
    =
    \frac{\hat{v}_j-1}{\hat{\beta}_{L,z}}.
\end{gather}

The criterion in Equation~\ref{eq:transformed_prod_estimator} is equivalent to an unrestricted plug-in nonlinear least-squares criterion in the original parameterization. Indeed, for
\begin{gather*}
    u_i
    =
    1+\hat{\beta}_{L,z}\alpha_i,
    \qquad
    v_j
    =
    1+\hat{\beta}_{L,z}\psi_j,
\end{gather*}
it follows that
\begin{align*}
    \tilde{y}_{ij}-u_iv_j
    &=
    1+\hat{\beta}_{L,z}y_{ij}
    -
    (1+\hat{\beta}_{L,z}\alpha_i)
    (1+\hat{\beta}_{L,z}\psi_j) \\
    &=
    \hat{\beta}_{L,z}
    \left(
        y_{ij}
        -
        \alpha_i
        -
        \psi_j
        -
        \hat{\beta}_{L,z}\alpha_i\psi_j
    \right).
\end{align*}
Consequently,
\begin{align*}
    \sum_{(i,j)\in\mathcal{O}_{IJ}}
    \left(
        \tilde{y}_{ij}-u_iv_j
    \right)^2
    &=
    \hat{\beta}_{L,z}^{\,2}
    \sum_{(i,j)\in\mathcal{O}_{IJ}}
    \left(
        y_{ij}
        -
        \alpha_i
        -
        \psi_j
        -
        \hat{\beta}_{L,z}\alpha_i\psi_j
    \right)^2.
\end{align*}
Because $\hat{\beta}_{L,z}^{\,2}$ does not depend on the productivity parameters and the transformation is one-to-one when $\hat{\beta}_{L,z}\neq0$, the two criteria have the same minimizers.

Although the criterion in Equation~\ref{eq:transformed_prod_estimator} is not jointly convex in $(\boldsymbol{u},\boldsymbol{v})$, it is quadratic in either vector when the other is held fixed. This suggests an alternating least-squares procedure. Starting from an initial value $\boldsymbol{v}^{(0)}$, update each worker factor according to
\begin{gather} \label{eq:als_a}
    u_i^{(t+1)}
    \coloneqq
    \frac{
        \displaystyle
        \sum_{j:D_{ij}=1}
        v_j^{(t)}\tilde{y}_{ij}
    }{
        \displaystyle
        \sum_{j:D_{ij}=1}
        \left(v_j^{(t)}\right)^2
    },
    \qquad
    i\neq i_0,
\end{gather}
while imposing
\begin{gather*}
    u_{i_0}^{(t+1)}
    =
    1.
\end{gather*}
Given $\boldsymbol{u}^{(t+1)}$, update each firm factor according to
\begin{gather} \label{eq:als_p}
    v_j^{(t+1)}
    \coloneqq
    \frac{
        \displaystyle
        \sum_{i:D_{ij}=1}
        u_i^{(t+1)}\tilde{y}_{ij}
    }{
        \displaystyle
        \sum_{i:D_{ij}=1}
        \left(u_i^{(t+1)}\right)^2
    },
    \qquad
    j=1,\dots,J.
\end{gather}

The updates in Equations~\ref{eq:als_a} and~\ref{eq:als_p} are well defined whenever the corresponding denominators are nonzero. Each update minimizes the criterion with respect to one vector of transformed productivity factors while holding the other fixed and therefore weakly decreases the objective function.

The iterations can be stopped when the change in the objective function or in the transformed productivity factors falls below a chosen tolerance. Because the criterion is nonconvex, the solution reached by the algorithm may depend on the initialization. In practice, the procedure can be run from multiple initial values, retaining the solution that attains the smallest objective value.

A formal analysis of the asymptotic properties of $\hat{\boldsymbol{u}}$ and $\hat{\boldsymbol{v}}$ is beyond the scope of this paper. Such an analysis would characterize the network conditions under which the resulting estimates consistently recover the productivity collections, or relevant functions of them, while accounting for first-stage estimation error in $\hat{\beta}_{L,z}$.

\section{Interpreting the Magnitude of \texorpdfstring{$\beta_0$}{beta0}}
\label{sec:interpretation}

The sign of $\beta_0$ in the Tukey model has a direct interpretation. Because $\beta_0$ is the constant cross-partial derivative of the interaction function, the function is supermodular when $\beta_0>0$, submodular when $\beta_0<0$, and modular when $\beta_0=0$. Its magnitude is less immediate to interpret, but is important for assessing economic significance. Even when the test rejects the null of no complementarities within the Tukey model, one may still ask whether the multiplicative component is quantitatively important relative to the additive component.

This section proposes a simple way to place the magnitude of $\beta_0$ on an interpretable scale. The exercise is purely descriptive: it abstracts from estimation uncertainty and treats $\beta_0$ as given. In applications, the same calculations can be implemented by replacing $\beta_0$ with an estimate.

A useful analogy is the interpretation of a slope coefficient in a linear regression. Consider
\begin{gather*}
    y_i
    =
    a+bx_i+\epsilon_i.
\end{gather*}
The magnitude of $b$ depends on the scales of both the outcome and the regressor. A common scale-free summary is the standardized effect
\begin{gather*}
    \frac{
        b\operatorname{sd}(x_i)
    }{
        \operatorname{sd}(y_i)
    },
\end{gather*}
which measures the change in the outcome, in standard-deviation units, associated with a one-standard-deviation change in the regressor.

An analogous measure in the Tukey model standardizes the multiplicative term. Let
\begin{gather*}
    s_\alpha
    \coloneqq
    \operatorname{sd}(\alpha_i),
    \qquad
    s_\psi
    \coloneqq
    \operatorname{sd}(\psi_j),
    \qquad
    s_y
    \coloneqq
    \operatorname{sd}(y_{ij}).
\end{gather*}
A natural standardized measure is
\begin{gather*}
    \frac{
        \beta_0s_\alpha s_\psi
    }{
        s_y
    }.
\end{gather*}
This measure has a simple interpretation, but requires estimating $s_\alpha s_\psi$, a function of the latent productivity collections. This is substantially more demanding than estimating $\beta_0$. Even in the simpler TWFE model, \citet{kline2020leave} show that consistent estimation of variance components and related functions of fixed effects requires strong connectivity conditions on the matching network that are often not satisfied in applications. Developing an analogous estimation theory for $s_\alpha s_\psi$ in the Tukey model remains an interesting question. For this reason, I do not pursue this standardized measure here.

A second way to interpret the magnitude of a coefficient in the linear model is to decompose a fitted outcome at a benchmark value of the regressor. For example, evaluating the model at the mean of the regressor gives
\begin{gather*}
    \bar{y}
    =
    a+b\bar{x}.
\end{gather*}
The absolute contribution share of the regressor is then
\begin{gather*}
    \frac{
        |b\bar{x}|
    }{
        |a|+|b\bar{x}|
    }.
\end{gather*}
This quantity is unit-free and measures the relative gross magnitude of the term involving $x_i$. The use of absolute values is useful when the two components have opposite signs, because it measures their relative importance before cancellations occur.

I use the same logic to construct a benchmark decomposition in the Tukey model. The idea is to fix an outcome level and measure the relative magnitude of the multiplicative component. I take the target outcome to be the sample mean across observed matches,
\begin{gather*}
    \bar{y}
    \coloneqq
    \frac{1}{|\mathcal{O}_{IJ}|}
    \sum_{(i,j)\in\mathcal{O}_{IJ}}
    y_{ij}.
\end{gather*}
The exercise should not be interpreted as estimating the average structural decomposition of outcomes. Rather, it is a calibration that maps $(\bar{y},\beta_0)$ into a unit-free measure of the relative magnitude of the multiplicative term.

Consider a representative pair whose productivities satisfy
\begin{gather} \label{eq:gamma_benchmark}
    \alpha_i
    =
    \gamma x,
    \qquad
    \psi_j
    =
    (1-\gamma)x,
\end{gather}
where $\gamma\in[0,1]$. The parameter $\gamma$ determines how the representative additive component is divided between the two sides of the interaction. When $\gamma=1/2$, the two sides contribute symmetrically. When $\gamma$ is close to zero or one, most of the additive component is attributed to one side.

Setting the representative outcome equal to $\bar{y}$ gives
\begin{gather} \label{eq:equation}
    \bar{y}
    =
    x^\ast(\gamma)
    +
    \beta_0\gamma(1-\gamma)
    \left[x^\ast(\gamma)\right]^2.
\end{gather}
Define the set of admissible benchmark allocations as
\begin{gather*}
    \Gamma(\bar{y},\beta_0)
    \coloneqq
    \left\{
        \gamma\in[0,1]:
        1+4\beta_0\gamma(1-\gamma)\bar{y}
        \geq
        0
    \right\}.
\end{gather*}
For $\gamma\in\Gamma(\bar{y},\beta_0)$ such that $\beta_0\gamma(1-\gamma)\neq0$, the solution to Equation~\ref{eq:equation} that is continuous as $\beta_0\gamma(1-\gamma)\to0$ is
\begin{gather} \label{eq:solution}
    x^\ast(\gamma)
    =
    \frac{
        -1+
        \sqrt{
            1+4\beta_0\gamma(1-\gamma)\bar{y}
        }
    }{
        2\beta_0\gamma(1-\gamma)
    }.
\end{gather}
When $\beta_0\gamma(1-\gamma)=0$, I define $x^\ast(\gamma)=\bar{y}$ by continuity.

The corresponding additive and multiplicative components are
\begin{gather*}
    A(\gamma)
    \coloneqq
    x^\ast(\gamma),
    \qquad
    M(\gamma)
    \coloneqq
    \beta_0\gamma(1-\gamma)
    \left[x^\ast(\gamma)\right]^2.
\end{gather*}
The absolute contribution share of the multiplicative component is
\begin{align*}
    S_M(\gamma)
    &\coloneqq
    \frac{
        |M(\gamma)|
    }{
        |A(\gamma)|+|M(\gamma)|
    }
    =
    \frac{
        \left|
            \sqrt{
                1+4\beta_0\gamma(1-\gamma)\bar{y}
            }
            -1
        \right|
    }{
        2+
        \left|
            \sqrt{
                1+4\beta_0\gamma(1-\gamma)\bar{y}
            }
            -1
        \right|
    }.
\end{align*}
This quantity measures the gross contribution of the multiplicative term in the representative decomposition. The use of absolute values is deliberate. When $\beta_0<0$, the additive and multiplicative components may have opposite signs and partially offset each other. A signed share may then understate their relative magnitudes. The absolute contribution share instead measures the importance of the two components before such cancellations occur.

The share $S_M(\gamma)$ varies with $\gamma$ because fixing the representative outcome $\bar{y}$ does not uniquely determine how the additive component is divided between the two sides. One can therefore report $S_M(\gamma)$ over the admissible set $\Gamma(\bar{y},\beta_0)$ or focus on a benchmark value. A natural benchmark is $\gamma=1/2$, when admissible, which corresponds to the symmetric case in which the two sides contribute equally to the additive component.

In the empirical application, $\bar{y}=4.106$ and $\hat{\beta}_{L,z}=-0.196$. For these values, $\Gamma(\bar{y},\hat{\beta}_{L,z})=[0,1]$, and Figure~\ref{fig:beta_int} reports $S_M(\gamma)$ over the full range of $\gamma$. The main text reports the symmetric benchmark $\gamma=1/2$, for which
\begin{gather*}
    S_M(1/2)
    =
    0.218.
\end{gather*}
Thus, under the symmetric representative decomposition, the multiplicative component accounts for approximately $22$ percent of the gross magnitude of the decomposition.

\begin{figure}[htbp]
    \centering
    \includegraphics[width=0.7\textwidth]{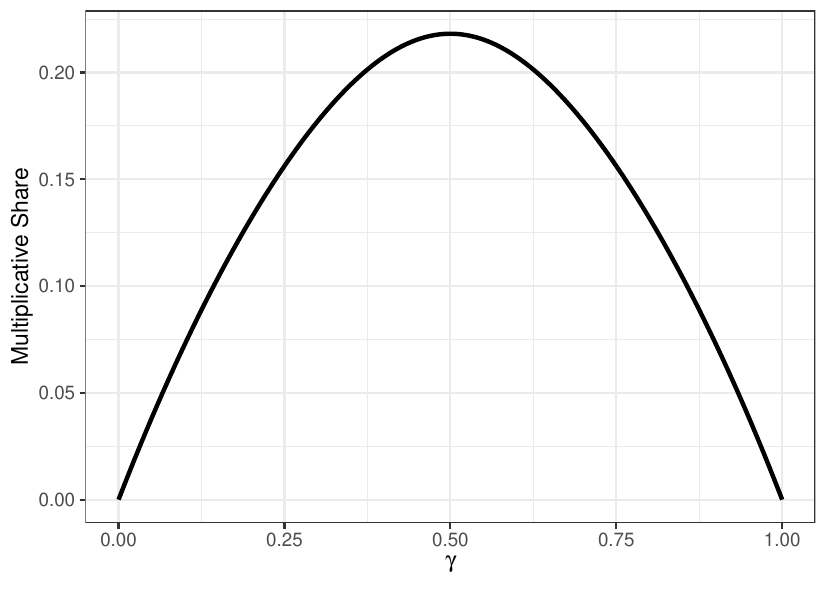}
    \caption{\small Absolute contribution share of the multiplicative component as a function of $\gamma$ in the empirical application.}
    \label{fig:beta_int}
\end{figure}

\section{Monte Carlo Simulations}
\label{sec:sm_mc}

In this section, I investigate the finite-sample behavior of the rank-based estimator $\hat{\beta}_{L,z}$ through Monte Carlo simulations. Because the estimator uses only variation within four-cycles, I abstract from the remainder of the matching network: I fix the number of cycles $L$ and simulate outcomes only for the matches belonging to these cycles. This design isolates how performance depends on the number of cycles, the quality of the orientation rule, and the magnitude of productivity variation relative to outcome noise.

Because the latent productivities are known in the simulations, I control directly whether the orientation rule agrees with their rankings rather than explicitly generating ranking variables. For each cycle, let $r_{\ell,z}^{\alpha}$ and $r_{\ell,z}^{\psi}$ denote the worker-side and firm-side ranking-agreement signs defined in Section~\ref{sec:estimation_inference_beta0}. Their product,
\begin{gather*}
    r_{\ell,z}
    =
    r_{\ell,z}^{\alpha}r_{\ell,z}^{\psi},
\end{gather*}
determines whether the overall orientation agrees with the sign of the productivity-gap product. Given $r_{\ell,z}$, I set
\begin{gather*}
    s_{\ell,z}
    =
    r_{\ell,z}\operatorname{sign}(A_\ell).
\end{gather*}
Because productivities are drawn from a continuous distribution, $A_\ell\neq0$ almost surely. By construction,
\begin{gather*}
    s_{\ell,z}A_\ell
    =
    r_{\ell,z}|A_\ell|.
\end{gather*}

The simulation design proceeds in two steps.

\emph{Step 1. Generation of productivities and orientations.}
For each cycle $\ell=1,\dots,L$, I independently draw
\begin{gather*}
    \left(
        \alpha_{i_{\ell 1}},
        \alpha_{i_{\ell 2}},
        \psi_{j_{\ell 1}},
        \psi_{j_{\ell 2}}
    \right)'
    \sim
    \mathcal{N}(\mu,\Sigma),
\end{gather*}
where
\begin{gather*}
    \mu
    =
    (1,3,1,3)',
    \qquad
    \Sigma
    =
    \begin{bmatrix}
        1   & 0   & 0.5 & 0.5 \\
        0   & 1   & 0.5 & 0.5 \\
        0.5 & 0.5 & 1   & 0 \\
        0.5 & 0.5 & 0   & 1
    \end{bmatrix}.
\end{gather*}

I impose perfect agreement on the firm side by setting
\begin{gather*}
    r_{\ell,z}^{\psi}
    =
    1,
\end{gather*}
and independently draw the worker-side agreement sign as
\begin{gather*}
    r_{\ell,z}^{\alpha}
    =
    \begin{cases}
        1, & \text{with probability } p, \\
        -1, & \text{with probability } 1-p.
    \end{cases}
\end{gather*}
Thus, $p$ controls the quality of the worker-side ranking. An analogous exercise could vary the quality of the firm-side ranking or both rankings simultaneously.

\emph{Step 2. Generation of outcomes.}
For each Monte Carlo replication, I independently draw the four error terms in every cycle from
\begin{gather*}
    \eta_{ij}
    \sim
    \mathcal{N}(0,\sigma_\eta^2).
\end{gather*}
Together with the fixed productivities, orientations, and value of $\beta_0$, these errors generate the $4L$ observed outcomes.

I vary the number of cycles, the common standard deviation of the outcome errors, and the orientation quality:
\begin{gather*}
    L
    \in
    \{100,500,1000,5000\},
    \qquad
    \sigma_\eta
    \in
    \{0.5,1,2\},
    \qquad
    p
    \in
    \{1,0.85,0.65,0.5\}.
\end{gather*}

When $p>0.5$, correctly oriented cycles are more likely than incorrectly oriented cycles, generating a positive orientation signal on average. The case $p=0.5$ represents an uninformative orientation rule. In the corresponding asymptotic design, positive and negative contributions cancel, so $\kappa_{L,z}$ is not bounded away from zero and Assumption~\ref{ass:labeling} fails.

For each combination of $L$, $\sigma_\eta$, and $p$, Step~1 is implemented once, while Step~2 is repeated 10{,}000 times. Thus, the productivities and orientation sequence are held fixed across Monte Carlo replications, while only the outcome errors are redrawn. This design mirrors the conditional asymptotic analysis, in which $\boldsymbol{\alpha}$, $\boldsymbol{\psi}$, $G_{IJ}$, $\boldsymbol{z}^{\alpha}$, and $\boldsymbol{z}^{\psi}$ are treated as fixed.

Table~\ref{tab:mse} reports the mean squared error of $\hat{\beta}_{L,z}$ under $\beta_0=0$. Table~\ref{tab:width} reports the average width of the corresponding nominal 90 percent confidence interval. Table~\ref{tab:rejection_rates} reports rejection rates for the two-sided test of $H_0:\beta_0=0$ at the nominal significance level $\gamma=0.10$, both under the null and under the alternative $\beta_0=1$.

\begin{table}[htbp]
    \centering
    \begin{tabular}{ccrrrrr}
\hline \hline
\multicolumn{2}{c}{} & & \multicolumn{4}{c}{Mean squared errors} \\
\multicolumn{2}{c}{} & & \multicolumn{4}{c}{$p$} \\
\cline{4-7}
\textbf{$\sigma_{ij}$} & $L$ & & 1 & 0.85 & 0.65 & 0.5 \\
\hline
$0.5$ & $100$ & & $0.0004$ & $0.001$ & $0.068$ & $104.760$ \\
$0.5$ & $500$ & & $0.0001$ & $0.0003$ & $0.001$ & $209.599$ \\
$0.5$ & $1000$ & & $0.0001$ & $0.0001$ & $0.001$ & $202.407$ \\
$0.5$ & $5000$ & & $0$ & $0$ & $0.0001$ & $16.723$ \\
\hline
$1$ & $100$ & & $0.002$ & $0.470$ & $9.173$ & $225.307$ \\
$1$ & $500$ & & $0.0005$ & $0.001$ & $30.133$ & $15.068$ \\
$1$ & $1000$ & & $0.0002$ & $0.0005$ & $0.005$ & $371.678$ \\
$1$ & $5000$ & & $0$ & $0.0001$ & $0.0005$ & $390.308$ \\
\hline
$2$ & $100$ & & $0.348$ & $53.365$ & $586.414$ & $371.702$ \\
$2$ & $500$ & & $0.003$ & $0.018$ & $229.784$ & $101.623$ \\
$2$ & $1000$ & & $0.001$ & $0.002$ & $5.009$ & $165.683$ \\
$2$ & $5000$ & & $0.0002$ & $0.0003$ & $0.003$ & $633.327$ \\
\hline \hline
\end{tabular}

    \caption{\small Mean squared error across values of $\sigma_\eta$, $L$, and $p$ under $\beta_0=0$.}
    \label{tab:mse}
\end{table}

\begin{table}[htbp]
    \centering
    \begin{tabular}{ccrrrrr}
\hline \hline
\multicolumn{2}{c}{} & & \multicolumn{4}{c}{Average confidence interval width} \\
\multicolumn{2}{c}{} & & \multicolumn{4}{c}{$p$} \\
\cline{4-7}
\textbf{$\sigma_{ij}$} & $L$ & & 1 & 0.85 & 0.65 & 0.5 \\
\hline
$0.5$ & $100$ & & $0.070$ & $0.110$ & $0.890$ & $2330.210$ \\
$0.5$ & $500$ & & $0.030$ & $0.060$ & $0.100$ & $8171.620$ \\
$0.5$ & $1000$ & & $0.020$ & $0.030$ & $0.080$ & $5022.450$ \\
$0.5$ & $5000$ & & $0.010$ & $0.020$ & $0.040$ & $342.600$ \\
\hline
$1$ & $100$ & & $0.150$ & $2.740$ & $84.360$ & $9941.750$ \\
$1$ & $500$ & & $0.070$ & $0.110$ & $553.890$ & $203.280$ \\
$1$ & $1000$ & & $0.050$ & $0.080$ & $0.200$ & $5879.220$ \\
$1$ & $5000$ & & $0.020$ & $0.030$ & $0.080$ & $16986.770$ \\
\hline
$2$ & $100$ & & $2.980$ & $730.570$ & $15752.510$ & $7345.480$ \\
$2$ & $500$ & & $0.160$ & $0.310$ & $2214.970$ & $2782.270$ \\
$2$ & $1000$ & & $0.100$ & $0.150$ & $34.490$ & $3163.600$ \\
$2$ & $5000$ & & $0.040$ & $0.060$ & $0.160$ & $11696.520$ \\
\hline \hline
\end{tabular}

    \caption{\small Average width of nominal 90 percent confidence intervals across values of $\sigma_\eta$, $L$, and $p$ under $\beta_0=0$.}
    \label{tab:width}
\end{table}

\begin{table}[htbp]
    \centering
    \begin{tabular}{ccrrrrrrrrrr}
\hline \hline
\multicolumn{3}{c}{} & \multicolumn{4}{c}{Rejection rate under $H_0$} &  & \multicolumn{4}{c}{Rejection rate under $H_1$} \\
\multicolumn{3}{c}{} & \multicolumn{4}{c}{$p$}  &   & \multicolumn{4}{c}{$p$} \\
\cline{4-7} \cline{9-12}
\textbf{$\sigma_{ij}$} & $L$ & & 1 & 0.85 & 0.65 & 0.5 & & 1 & 0.85 & 0.65 & 0.5 \\
\hline
$0.5$ & $100$ & & $0.098$ & $0.077$ & $0.084$ & $0.046$ & & $0.983$ & $0.895$ & $0.020$ & $0$ \\
$0.5$ & $500$ & & $0.102$ & $0.097$ & $0.080$ & $0.028$ & & $1$ & $1$ & $0.909$ & $0$ \\
$0.5$ & $1000$ & & $0.102$ & $0.098$ & $0.076$ & $0.040$ & & $1$ & $1$ & $0.995$ & $0$ \\
$0.5$ & $5000$ & & $0.105$ & $0.100$ & $0.102$ & $0.069$ & & $1$ & $1$ & $1$ & $0$ \\
\hline
$1$ & $100$ & & $0.093$ & $0.109$ & $0.113$ & $0.112$ & & $0.572$ & $0.409$ & $0.071$ & $0$ \\
$1$ & $500$ & & $0.091$ & $0.085$ & $0.118$ & $0.139$ & & $0.997$ & $0.916$ & $0.267$ & $0$ \\
$1$ & $1000$ & & $0.094$ & $0.090$ & $0.093$ & $0.132$ & & $1$ & $0.997$ & $0.650$ & $0$ \\
$1$ & $5000$ & & $0.099$ & $0.100$ & $0.089$ & $0.122$ & & $1$ & $1$ & $0.994$ & $0$ \\
\hline
$2$ & $100$ & & $0.116$ & $0.133$ & $0.189$ & $0.192$ & & $0.173$ & $0.086$ & $0.014$ & $0.005$ \\
$2$ & $500$ & & $0.094$ & $0.102$ & $0.143$ & $0.188$ & & $0.695$ & $0.482$ & $0.078$ & $0.002$ \\
$2$ & $1000$ & & $0.088$ & $0.086$ & $0.116$ & $0.168$ & & $0.930$ & $0.681$ & $0.179$ & $0.002$ \\
$2$ & $5000$ & & $0.093$ & $0.097$ & $0.092$ & $0.187$ & & $1$ & $1$ & $0.608$ & $0.024$ \\
\hline \hline
\end{tabular}

    \caption{\small Rejection rates across values of $\sigma_\eta$, $L$, and $p$ for the two-sided test of $H_0:\beta_0=0$ at the nominal significance level $\gamma=0.10$. The left panel reports rejection probabilities under the null $\beta_0=0$, and the right panel reports rejection probabilities under the alternative $\beta_0=1$.}
    \label{tab:rejection_rates}
\end{table}

When $p>0.5$, the mean squared error and confidence interval width generally increase with the outcome error variance and decrease with the number of cycles. The power of the test increases with $L$ and decreases as outcome noise increases. These patterns are consistent with the asymptotic results.

The most important pattern concerns orientation quality. As $p$ approaches $0.5$, the effective signal $|\kappa_{L,z}|$ typically becomes smaller. The ratio estimator consequently becomes more variable, and its confidence intervals become wider. This deterioration is especially pronounced when the number of cycles is small or outcome noise is large.

When $p=0.5$, Assumption~\ref{ass:labeling} fails, and the asymptotic distribution derived in the main text does not apply. The denominator of the estimator is frequently close to zero, producing unstable estimates, large mean squared errors, and highly variable confidence interval widths. The rejection probabilities in this case should therefore not be interpreted as the size or power of an asymptotically valid test.

Overall, the simulations reinforce the importance of the effective orientation signal. A growing number of cycles is not sufficient by itself: the orientation rule must also align the productivity-gap products sufficiently often to prevent their aggregate contribution from canceling. When it does, the estimator becomes more precise as $L$ increases. When it does not, the ratio estimator remains unstable.

\section{Supplementary Proofs}\label{sec:sm_proofs}

\subsection{Proof of Theorem~\ref{thm:hs_identification}}

The proof proceeds in four steps. I first explain the two normalizations. I then show that, after normalization, identification is governed by the rank of a matrix collecting the restrictions generated by worker mobility. Next, I characterize the generic rank of this matrix using Assumption~\ref{ass:hs_connectivity}. I finally combine this characterization with Assumption~\ref{ass:hs_heterogeneity} to establish sufficiency and construct an observationally equivalent parameter collection when the graph condition fails.

I begin with the normalizations. For any $q_1,q_2\in\R$ with $q_1\neq0$, define
\begin{gather*}
    \alpha_i'
    =
    q_1\alpha_i+q_2,
    \qquad
    b_j'
    =
    \frac{b_j}{q_1},
    \qquad
    a_j'
    =
    a_j-\frac{b_jq_2}{q_1}.
\end{gather*}
Then
\begin{align*}
    a_j'+b_j'\alpha_i'
    &=
    a_j
    -
    \frac{b_jq_2}{q_1}
    +
    \frac{b_j}{q_1}
    \left(
        q_1\alpha_i+q_2
    \right) \\
    &=
    a_j+b_j\alpha_i.
\end{align*}
The HS model is therefore invariant to a common location and scale transformation of worker productivity, together with corresponding transformations of the firm parameters. Two normalizations are needed to eliminate these degrees of freedom. Because $b_{j_0}\neq0$, choosing
\begin{gather*}
    q_1
    =
    b_{j_0},
    \qquad
    q_2
    =
    a_{j_0}-1
\end{gather*}
gives $a_{j_0}'=b_{j_0}'=1$.

I next show how identification reduces to a rank condition. Consider two parameter collections
\begin{gather*}
    \left(
        \boldsymbol{\alpha},
        \boldsymbol{a},
        \boldsymbol{b}
    \right)
    \qquad
    \text{and}
    \qquad
    \left(
        \widetilde{\boldsymbol{\alpha}},
        \widetilde{\boldsymbol{a}},
        \widetilde{\boldsymbol{b}}
    \right)
\end{gather*}
that satisfy the HS model, obey the same normalization, and generate the same systematic outcomes. Because all firm slopes are nonzero, define
\begin{gather*}
    x_j
    \coloneqq
    \frac{b_j}{\widetilde{b}_j},
    \qquad
    y_j
    \coloneqq
    \frac{
        a_j-\widetilde{a}_j
    }{
        \widetilde{b}_j
    }.
\end{gather*}
For every observed match,
\begin{gather*}
    a_j+b_j\alpha_i
    =
    \widetilde{a}_j
    +
    \widetilde{b}_j\widetilde{\alpha}_i,
\end{gather*}
which implies
\begin{gather} \label{eq:hs_affine_maps}
    \widetilde{\alpha}_i
    =
    x_j\alpha_i+y_j.
\end{gather}
Thus, any observationally equivalent parameterization can be represented by a firm-specific affine transformation of worker productivity.

If worker $i$ is observed at firms $j$ and $k$, the value of $\widetilde{\alpha}_i$ recovered from Equation~\ref{eq:hs_affine_maps} must be the same at both firms. Therefore,
\begin{gather} \label{eq:hs_worker_comparison}
    \alpha_i
    \left(
        x_j-x_k
    \right)
    +
    y_j-y_k
    =
    0.
\end{gather}
The normalization implies
\begin{gather*}
    x_{j_0}
    =
    1,
    \qquad
    y_{j_0}
    =
    0.
\end{gather*}
Define
\begin{gather*}
    u_j
    \coloneqq
    x_j-1,
    \qquad
    v_j
    \coloneqq
    y_j.
\end{gather*}

For each mobile worker $i\in\mathcal{I}_M$, choose an arbitrary reference firm $r(i)\in N(i)$. Construct the matrix $M_{\mathcal{H}}(\boldsymbol{\alpha})$ with two columns for every firm other than $j_0$. For each $i\in\mathcal{I}_M$ and each $j\in N(i)\setminus\{r(i)\}$, add a row containing
\begin{gather*}
    (\alpha_i,1)
\end{gather*}
in the columns associated with firm $j$,
\begin{gather*}
    -(\alpha_i,1)
\end{gather*}
in the columns associated with firm $r(i)$, and zeros elsewhere. If either firm is $j_0$, omit its two columns. Different choices of $r(i)$ produce row-equivalent matrices and therefore do not affect rank.

Let $\boldsymbol{\delta}$ collect $(u_j,v_j)$ across all $j\neq j_0$. Stacking Equation~\ref{eq:hs_worker_comparison} across the mobile workers gives
\begin{gather} \label{eq:hs_rank_system}
    M_{\mathcal{H}}(\boldsymbol{\alpha})
    \boldsymbol{\delta}
    =
    \boldsymbol{0}.
\end{gather}

Equation~\ref{eq:hs_rank_system} shows why the rank of $M_{\mathcal{H}}(\boldsymbol{\alpha})$ governs identification. If
\begin{gather} \label{eq:hs_actual_full_rank}
    \operatorname{rank}
    \left(
        M_{\mathcal{H}}(\boldsymbol{\alpha})
    \right)
    =
    2J-2,
\end{gather}
then Equation~\ref{eq:hs_rank_system} implies $\boldsymbol{\delta}=\boldsymbol{0}$. Hence,
\begin{gather*}
    x_j
    =
    1,
    \qquad
    y_j
    =
    0
\end{gather*}
for every firm, and therefore
\begin{gather*}
    \widetilde{a}_j
    =
    a_j,
    \qquad
    \widetilde{b}_j
    =
    b_j.
\end{gather*}
Equation~\ref{eq:hs_affine_maps} then gives
\begin{gather*}
    \widetilde{\alpha}_i
    =
    \alpha_i
\end{gather*}
for every mobile worker. For each nonmobile worker, equality of the corresponding firm parameters and systematic outcome identifies $\alpha_i$ because the firm's slope is nonzero. Thus, full column rank implies identification.

Conversely, rank deficiency generates nonidentification. Suppose that $M_{\mathcal{H}}(\boldsymbol{\alpha})$ has a nonzero null vector $\boldsymbol{\delta}$, with associated components $(u_j,v_j)$ and $u_{j_0}=v_{j_0}=0$. For $t\in\R$, define
\begin{gather*}
    x_j(t)
    \coloneqq
    1+tu_j,
    \qquad
    y_j(t)
    \coloneqq
    tv_j.
\end{gather*}
For sufficiently small $t$, $x_j(t)\neq0$ for every firm. Because $\boldsymbol{\delta}$ belongs to the kernel, for every mobile worker $i$,
\begin{gather*}
    x_j(t)\alpha_i+y_j(t)
\end{gather*}
takes the same value at every firm $j\in N(i)$. Define this common value as $\alpha_i(t)$. For a nonmobile worker, define $\alpha_i(t)$ using the worker's unique firm.

Now set
\begin{gather*}
    b_j(t)
    \coloneqq
    \frac{b_j}{x_j(t)},
    \qquad
    a_j(t)
    \coloneqq
    a_j-b_j(t)y_j(t).
\end{gather*}
For every observed match,
\begin{align*}
    a_j(t)+b_j(t)\alpha_i(t)
    &=
    a_j
    -
    \frac{b_j}{x_j(t)}y_j(t)
    +
    \frac{b_j}{x_j(t)}
    \left[
        x_j(t)\alpha_i+y_j(t)
    \right] \\
    &=
    a_j+b_j\alpha_i.
\end{align*}
The normalization is preserved because
\begin{gather*}
    x_{j_0}(t)
    =
    1,
    \qquad
    y_{j_0}(t)
    =
    0.
\end{gather*}
For sufficiently small $t$, the transformed slopes remain nonzero and the transformed parameter collection remains admissible. If the initial mobile-worker productivities satisfy Assumption~\ref{ass:hs_heterogeneity}, this condition also continues to hold for all sufficiently small $t$, because a maximal-rank minor that is nonzero at $t=0$ remains nonzero in a neighborhood of zero. Since $\boldsymbol{\delta}\neq\boldsymbol{0}$, every sufficiently small $t\neq0$ produces a distinct parameter collection with the same systematic outcomes. Rank deficiency therefore generates nonidentification.

The preceding argument establishes the algebraic link between rank and identification. I now separate the roles of the matching structure and the productivity values. Replace each mobile-worker productivity $\alpha_i$ in $M_{\mathcal{H}}(\boldsymbol{\alpha})$ with an indeterminate $X_i$, and denote the resulting symbolic matrix by $M_{\mathcal{H}}(\boldsymbol{X})$. Its rank over $\mathbb{Q}(\boldsymbol{X})$ is the generic rank permitted by the firm-mobility hypergraph.

I next show that
\begin{gather} \label{eq:hs_generic_rank}
    \operatorname{rank}
    \left(
        M_{\mathcal{H}}(\boldsymbol{X})
    \right)
    =
    2J-2
\end{gather}
if and only if Assumption~\ref{ass:hs_connectivity} holds.

Suppose first that Assumption~\ref{ass:hs_connectivity} holds. This assumption is the $2$-weak partition-connectivity condition of \citet{alrabiah2025random}. With firms as vertices, mobile-worker histories as hyperedges, and $k=2$, the matrix $M_{\mathcal{H}}(\boldsymbol{X})$ coincides, up to row signs, row and column order, the order of the two columns within each firm block, and the choice of reference firm within each hyperedge, with their reduced intersection matrix with Vandermonde rows
\begin{gather*}
    (1,X_i).
\end{gather*}
Theorem~2.11 of \citet{alrabiah2025random} implies that this reduced intersection matrix has full column rank over the corresponding rational-function field. The determinant of each maximal minor is a polynomial with integer coefficients, so a minor that is nonzero over that field is also not identically zero over $\mathbb{Q}$. Equation~\ref{eq:hs_generic_rank} therefore holds over $\mathbb{Q}(\boldsymbol{X})$.

Conversely, suppose Assumption~\ref{ass:hs_connectivity} fails. Then there exists a partition $\mathcal{P}$ of the firms such that
\begin{gather*}
    \sum_{i\in\mathcal{I}_M}
    \left[
        q_i(\mathcal{P})-1
    \right]
    <
    2
    \left(
        |\mathcal{P}|-1
    \right).
\end{gather*}
Consider vectors whose two entries are constant across all firms belonging to the same element of $\mathcal{P}$. After setting the two entries associated with the element containing $j_0$ equal to zero, this subspace has dimension
\begin{gather*}
    2
    \left(
        |\mathcal{P}|-1
    \right).
\end{gather*}
For worker $i$, requiring the transformations to agree across the $q_i(\mathcal{P})$ groups intersecting $N(i)$ imposes at most
\begin{gather*}
    q_i(\mathcal{P})-1
\end{gather*}
independent restrictions on this subspace. The total number of restrictions is therefore smaller than its dimension. Hence, the kernel contains a nonzero vector, and
\begin{gather*}
    \operatorname{rank}
    \left(
        M_{\mathcal{H}}(\boldsymbol{X})
    \right)
    <
    2J-2.
\end{gather*}
This establishes the equivalence between Assumption~\ref{ass:hs_connectivity} and generic full column rank.

I can now complete the proof. Suppose first that Assumption~\ref{ass:hs_connectivity} holds. The generic-rank characterization gives
\begin{gather*}
    \operatorname{rank}
    \left(
        M_{\mathcal{H}}(\boldsymbol{X})
    \right)
    =
    2J-2.
\end{gather*}
Assumption~\ref{ass:hs_heterogeneity} states that the actual mobile-worker productivities attain this generic rank. Therefore, Equation~\ref{eq:hs_actual_full_rank} holds, and the rank-identification argument above implies that $\boldsymbol{\alpha}$, $\boldsymbol{a}$, and $\boldsymbol{b}$ are identified. This proves sufficiency.

Finally, suppose Assumption~\ref{ass:hs_connectivity} fails. The partition argument above provides a nonzero null vector not only for the symbolic matrix, but for $M_{\mathcal{H}}(\boldsymbol{\alpha})$ at every collection of mobile-worker productivities. Choose an admissible normalized parameter collection satisfying Assumption~\ref{ass:hs_heterogeneity}. The rank-deficiency construction above then produces a distinct normalized parameter collection, also satisfying Assumption~\ref{ass:hs_heterogeneity}, that generates the same systematic outcomes. The parameters are therefore not identified. This proves necessity.

\subsection{Proof of Proposition~\ref{prop:isotonic_monotone_equivalence}}

Let $(f_m,\boldsymbol{\alpha},\boldsymbol{\psi},G_{IJ})$ be any representation satisfying
\begin{gather*}
    \theta_{ij}
    =
    f_m(\alpha_i,\psi_j).
\end{gather*}
Consider any admissible strictly increasing bijections $g_\alpha\colon A\to A'$ and $g_\psi\colon\Psi\to\Psi'$, and define
\begin{gather*}
    \alpha_i'
    =
    g_\alpha(\alpha_i),
    \qquad
    \psi_j'
    =
    g_\psi(\psi_j),
    \qquad
    f_m'(\alpha,\psi)
    =
    f_m\left(g_\alpha^{-1}(\alpha),g_\psi^{-1}(\psi)\right).
\end{gather*}
Because $g_\alpha^{-1}$ and $g_\psi^{-1}$ are strictly increasing, $f_m'$ is strictly increasing in each argument and therefore belongs to the isotonic model. Moreover,
\begin{align*}
    f_m'(\alpha_i',\psi_j')
    &=
    f_m\left(
        g_\alpha^{-1}\left(g_\alpha(\alpha_i)\right),
        g_\psi^{-1}\left(g_\psi(\psi_j)\right)
    \right) \\
    &=
    f_m(\alpha_i,\psi_j)
    =
    \theta_{ij}.
\end{align*}
Hence, $(f_m,\boldsymbol{\alpha},\boldsymbol{\psi},G_{IJ})$ and $(f_m',\boldsymbol{\alpha}',\boldsymbol{\psi}',G_{IJ})$ generate the same collection of observed deterministic outcomes. Therefore, the isotonic model is invariant to separate strictly increasing reparameterizations of the worker and firm productivities.

\subsection{Proof of Theorem~\ref{thm:isotonic_identification}}

I prove the result for the ranking of $\boldsymbol{\alpha}$; the argument for $\boldsymbol{\psi}$ is analogous.

I first establish sufficiency. Fix any two workers $i$ and $i'$. By Assumption~\ref{ass:diameter}, there exists a firm $j$ such that $(i,j),(i',j)\in\mathcal{O}_{IJ}$. Because $f_m$ is strictly increasing in its first argument, the sign of $\theta_{ij}-\theta_{i'j}$ reveals whether $\alpha_i$ is greater than, equal to, or smaller than $\alpha_{i'}$. Repeating this comparison for every pair $(i,i')$ identifies the ranking of the workers.

I next establish necessity. Suppose that Assumption~\ref{ass:diameter} fails on the worker side. Then there exist two distinct workers $i$ and $i'$ that do not share any firm. Consider an admissible representation in which $i$ and $i'$ are adjacent in the worker ranking, with $\alpha_i>\alpha_{i'}$, and all other workers have lower productivity. Order the firms so that every firm matched with $i$ has higher productivity than every firm matched with $i'$, and choose the observed outcomes so that
\begin{gather*}
    \theta_{ij}
    >
    \theta_{i'j'}
    \qquad
    \text{for every }
    (i,j),(i',j')\in\mathcal{O}_{IJ}.
\end{gather*}
These inequalities can be chosen with strict slack so that the observed entries admit a completion to a matrix that is strictly increasing along both rows and columns.

Because $i$ and $i'$ have no common firm, no column contains observed outcomes for both workers. Their positions in the worker ranking can therefore be reversed while leaving every observed entry unchanged. In each column matched with one of the two workers, the entry corresponding to the other worker is unobserved and can be chosen, together with the remaining missing entries, to preserve strict monotonicity along rows and columns. The resulting completion defines a second admissible isotonic representation that generates the same observed outcomes but satisfies $\alpha_i'<\alpha_{i'}'$.

Thus, the relative order of $i$ and $i'$ is not identified whenever two workers have no common firm. The analogous argument applies when two firms have no common worker. Assumption~\ref{ass:diameter} is therefore necessary and sufficient for identification of the rankings of $\boldsymbol{\alpha}$ and $\boldsymbol{\psi}$.

\subsection{Proof of Proposition~\ref{prop:ER}}

Let $\mathcal{I}_{\ell}$ be the indicator that the quartet $\ell=(i,i',j,j')$, with $i,i'\in[I]$ and $j,j'\in[J]$, forms a four-cycle, and let $\mathcal{L}$ denote the collection of all potential quartets. The number of four-cycles is
\begin{gather*}
    C_4(I,J)
    =
    \sum_{\ell\in\mathcal{L}}\mathcal{I}_{\ell}.
\end{gather*}
There are $\binom{I}{2}\binom{J}{2}$ potential four-cycles, each of which occurs with probability $p_{IJ}^4$. Therefore,
\begin{gather*}
    \E[C_4(I,J)]
    =
    \binom{I}{2}\binom{J}{2}p_{IJ}^4
    =
    \frac{I(I-1)J(J-1)}{4}p_{IJ}^4.
\end{gather*}

Consider the variance:
\begin{align*}
    \operatorname{Var}(C_4(I,J))
    &=
    \sum_{\ell\in\mathcal{L}}
    \operatorname{Var}(\mathcal{I}_{\ell})
    +
    \sum_{\ell\neq\ell'}
    \operatorname{Cov}(\mathcal{I}_{\ell},\mathcal{I}_{\ell'}).
\end{align*}
Because $\mathcal{I}_{\ell}^2=\mathcal{I}_{\ell}$,
\begin{align*}
    \sum_{\ell\in\mathcal{L}}
    \operatorname{Var}(\mathcal{I}_{\ell})
    &=
    \binom{I}{2}\binom{J}{2}
    p_{IJ}^4(1-p_{IJ}^4) \\
    &=
    \mathcal{O}\left(I^2J^2p_{IJ}^4\right).
\end{align*}

The covariance between two distinct cycle indicators is zero unless the cycles share at least one edge. Two distinct four-cycles can share either one or two edges.

If they share one edge, their union contains seven distinct edges. The number of such pairs is of order $\mathcal{O}(I^3J^3)$, and
\begin{gather*}
    \operatorname{Cov}(\mathcal{I}_{\ell},\mathcal{I}_{\ell'})
    =
    p_{IJ}^7-p_{IJ}^8
    =
    \mathcal{O}(p_{IJ}^7).
\end{gather*}
Their total contribution to the variance is therefore
\begin{gather*}
    \mathcal{O}\left(I^3J^3p_{IJ}^7\right).
\end{gather*}

If they share two edges, their union contains six distinct edges. Such cycles share either one worker and two firms or one firm and two workers, so the number of pairs is of order
\begin{gather*}
    \mathcal{O}\left(I^3J^2+I^2J^3\right).
\end{gather*}
Moreover,
\begin{gather*}
    \operatorname{Cov}(\mathcal{I}_{\ell},\mathcal{I}_{\ell'})
    =
    p_{IJ}^6-p_{IJ}^8
    =
    \mathcal{O}(p_{IJ}^6).
\end{gather*}
The corresponding contribution is therefore
\begin{gather*}
    \mathcal{O}\left(
        \left(I^3J^2+I^2J^3\right)p_{IJ}^6
    \right).
\end{gather*}

Combining these terms gives
\begin{align*}
    \operatorname{Var}(C_4(I,J))
    &=
    \mathcal{O}\left(I^2J^2p_{IJ}^4\right)
    +
    \mathcal{O}\left(I^3J^3p_{IJ}^7\right) \\
    &\qquad
    +
    \mathcal{O}\left(
        \left(I^3J^2+I^2J^3\right)p_{IJ}^6
    \right).
\end{align*}
Because $I/J$ is bounded away from zero and infinity,
\begin{align*}
    \frac{
        \operatorname{Var}(C_4(I,J))
    }{
        \{\E[C_4(I,J)]\}^2
    }
    &=
    \mathcal{O}\left(
        \frac{1}{I^2J^2p_{IJ}^4}
    \right)
    +
    \mathcal{O}\left(
        \frac{1}{IJp_{IJ}}
    \right) \\
    &\qquad
    +
    \mathcal{O}\left(
        \frac{I+J}{I^2J^2p_{IJ}^2}
    \right)
    \to
    0
\end{align*}
whenever $\sqrt{IJ}p_{IJ}\to\infty$.

For any $\epsilon>0$, Chebyshev's inequality therefore yields
\begin{align*}
    \Pr\left\{
        \left|
            \frac{C_4(I,J)}{\E[C_4(I,J)]}-1
        \right|
        \geq\epsilon
    \right\}
    &\leq
    \frac{
        \operatorname{Var}(C_4(I,J))
    }{
        \epsilon^2\{\E[C_4(I,J)]\}^2
    }
    \to
    0.
\end{align*}
Hence,
\begin{gather*}
    \frac{C_4(I,J)}{\E[C_4(I,J)]}
    \xrightarrow{p}
    1.
\end{gather*}
Finally, $\sqrt{IJ}p_{IJ}\to\infty$ implies
\begin{gather*}
    \E[C_4(I,J)]
    \to
    \infty,
\end{gather*}
and thus
\begin{gather*}
    C_4(I,J)
    \xrightarrow{p}
    \infty.
\end{gather*}

\putbib
\end{bibunit}
\endgroup

\end{document}